\newtheorem{lemma}{Lemma}
\newtheorem{remark}{Remark}
\newtheorem{corollary}{Corollary}
\newtheorem{theorem}{Theorem}
\newtheorem{proposition}{Proposition}
\newtheorem{definition}{Definition}
\newtheorem{assumption}{Assumption}
\newcommand{\x}{\mathbf{x}}
\newcommand{\y}{\mathbf{y}}
\renewcommand{\u}{\mathbf{u}}
\renewcommand{\v}{\mathbf{v}}
\newcommand{\indicator}[2]{\mathds{1}_{#1}\big({#2}\big)}
\newcommand{\apprx}{\mathrm{apprx}}
\newcommand{\subN}{\underline{N}}
\newcommand{\barN}{\bar{N}}
\newcommand{\Residue}{\mathcal{O}\Big(\frac{1}{\sqrt{\subN}}\Big)}
\newcommand{\ResidueNew}{\mathcal{O}\Big(\frac{1}{\sqrt{\bar\subN}}\Big)}
\newcommand{\empMu}[1]{\mathrm{Emp}_{\mu}(#1)}
\newcommand{\empNu}[1]{\mathrm{Emp}_{\nu}(#1)}
\newcommand{\bfX}{\mathbf{X}}
\newcommand{\bfY}{\mathbf{Y}}
\newcommand{\bfx}{\mathbf{x}}
\newcommand{\bfy}{\mathbf{y}}
\newcommand{\bfu}{\mathbf{u}}
\newcommand{\bfv}{\mathbf{v}}
\renewcommand{\t}{^{\mbox{\tiny\sf T}}}
\newcommand{\lowervalue}{\underline{J}}
\newcommand{\X}{\mathcal{X}}
\newcommand{\Y}{\mathcal{Y}}
\newcommand{\M}{\mathcal{M}}
\newcommand{\N}{\mathcal{N}}
\newcommand{\U}{\mathcal{U}}
\newcommand{\V}{\mathcal{V}}
\newcommand{\F}{\mathcal{F}}
\newcommand{\G}{\mathcal{G}}
\newcommand{\R}{\mathcal{R}}
\renewcommand{\P}{\mathcal{P}}
\renewcommand{\S}{\mathcal{S}}
\newcommand{\ED}{ED}
\newcommand{\st}{~\mathrm{s.t.}~}
\newcommand{\cmark}{\ding{51}}%
\newcommand{\xmark}{\ding{55}}%
\newcommand{\abs}[1]{\left\lvert#1\right \rvert}
\newcommand{\expct}[2]{\mathbb{E}_{#2}\left[#1\right]}
\newcommand{\dtv}[1]{\mathrm{d}_{\mathrm{TV}} \big( #1 \big)}
\title{\LARGE \bf Learning Large-Scale Competitive Team Behaviors with Mean-Field Interactions and Online Opponent Modeling
}
\author{
	Bhavini Jeloka%
	\thanks{Bhavini Jeloka is a PhD student with the School of Aerospace Engineering, Georgia Institute of Technology, Atlanta, GA, USA. Email:
		{\tt\small bjeloka3@gatech.edu}}
	\qquad Yue Guan%
	\thanks{Yue Guan is a PhD student with the School of Aerospace Engineering, Georgia Institute of Technology, Atlanta, GA, USA. Email:
		{\tt\small yguan44@gatech.edu}}
 	\qquad Panagiotis Tsiotras%
 	\thanks{Panagiotis Tsiotras is the David \& Andrew Lewis Chair Professor with the School of Aerospace Engineering, Georgia Institute of Technology, Atlanta, GA, USA. Email: {\tt\small tsiotras@gatech.edu}}
}
\begin{document}
\maketitle

\setlength{\abovedisplayskip}{5pt}
\setlength{\belowdisplayskip}{5pt}

\begin{abstract}
    {While multi-agent reinforcement learning (MARL) has been proven effective across both collaborative and competitive tasks, existing algorithms often struggle to scale to large populations of agents. 
Recent advancements in mean-field (MF) theory provide scalable solutions by approximating population interactions as a continuum, yet most existing frameworks focus exclusively on either fully cooperative or purely competitive settings. 
To bridge this gap, we introduce MF-MAPPO, a mean-field extension of PPO designed for zero-sum team games that integrate intra-team cooperation with inter-team competition. 
MF-MAPPO employs a shared actor and a minimally informed critic per team and is trained directly on finite-population simulators, thereby enabling deployment to realistic scenarios with thousands of agents. 
We further show that MF-MAPPO naturally extends to partially observable settings through a simple gradient-regularized training scheme. 
Our evaluation utilizes large-scale benchmark scenarios using our own testing simulation platform for MF team games {(\texttt{MFEnv})}, including offense–defense battlefield tasks as well as variants of population-based rock-paper-scissors games that admit analytical solutions, for benchmarking. 
Across these benchmarks, MF-MAPPO outperforms existing methods and exhibits complex, heterogeneous behaviors, demonstrating the effectiveness of combining mean-field theory and MARL techniques at scale.}

\end{abstract}

\section{Introduction}\label{sec:intro}
Existing state-of-the-art MARL algorithms built upon MADDPG, MAAC and MA-PPO~\citep{lowe2017multi, yu2022surprising}, face severe scalability challenges as the number of agents grows, primarily due to the well-known \textit{curse of dimensionality}.
A promising remedy is offered by mean-field theory, which approximates large-scale agent–environment interactions in the infinite-population limit~\citep{huang2006large}.
Two major areas of mean-field research are {mean-field games (MFGs)~\citep{huang2006large, lasry2007mean,  sen2019mean, laurière2022scalabledeepreinforcementlearning}, which focus on non-cooperative agents, and mean-field control (MFC) problems~\citep{bensoussan2013mean, gu2021meanfieldcontrolsqlearningcooperative},} which study fully cooperative scenarios. 
{In contrast, mixed collaborative–competitive scenarios that arise in many real-world domains, such as team sports~\citep{gaviria2025socio} and social dilemmas~\citep{leibo2017multi}, remain relatively underexplored.}
%
%
{To address this gap, we propose Mean-Field Multi-Agent Proximal Policy Optimization (MF-MAPPO), the first PPO-based \textit{learning} algorithm tailored for mixed cooperative–competitive mean-field settings.}
Guided by existing theoretical results~\citep{guan2024zero}, MF-MAPPO scales to hundreds or thousands of agents while preserving convergence guarantees and remaining agnostic to individual identities or private observations.

\textbf{Mean-Field Teams.} The single-team problem was explored in~\cite{arabneydi2014team}, where agents share a common team reward (MFC).
In contrast,~\cite{Mahajan2015linear} established optimality for finite-population games \emph{only} in the LQG setting, {while~\cite{sanjari2022nash} analyzed a two-team setting with continuous states and actions---unlike our finite state-space formulation that directly admits the familiar MDP-type structure.
Similarly, \textit{multi-population} MFGs (MP-MFGs) have been studied in the past} but often restrict agent dynamics and policies to be independent of both other agents and other population distributions, see \cite{perolat2021scalingmeanfieldgames} and references therein.
More recently,~\cite{guan2024zero} introduced \textit{zero-sum mean-field team games} (ZS-MFTGs), modeling large-population teams that compete while cooperating internally.
A common-information decomposition~\citep{nayyar2013decentralized} enables team-size-independent learning with identical optimal team policies {using MF feedback, unlike aforementioned MP-MFGs (open-loop MF policies).} 
{However, computing such policies numerically, especially in large state-action spaces using dynamic programming, is costly.}
%
By contrast, MF-MAPPO leverages shared actor–critic networks per team and uses only commonly accessible information (exact/estimated), ensuring tractability and scalability.

\begin{figure}[t!]
    \centering
    \includegraphics[width=0.7\textwidth]{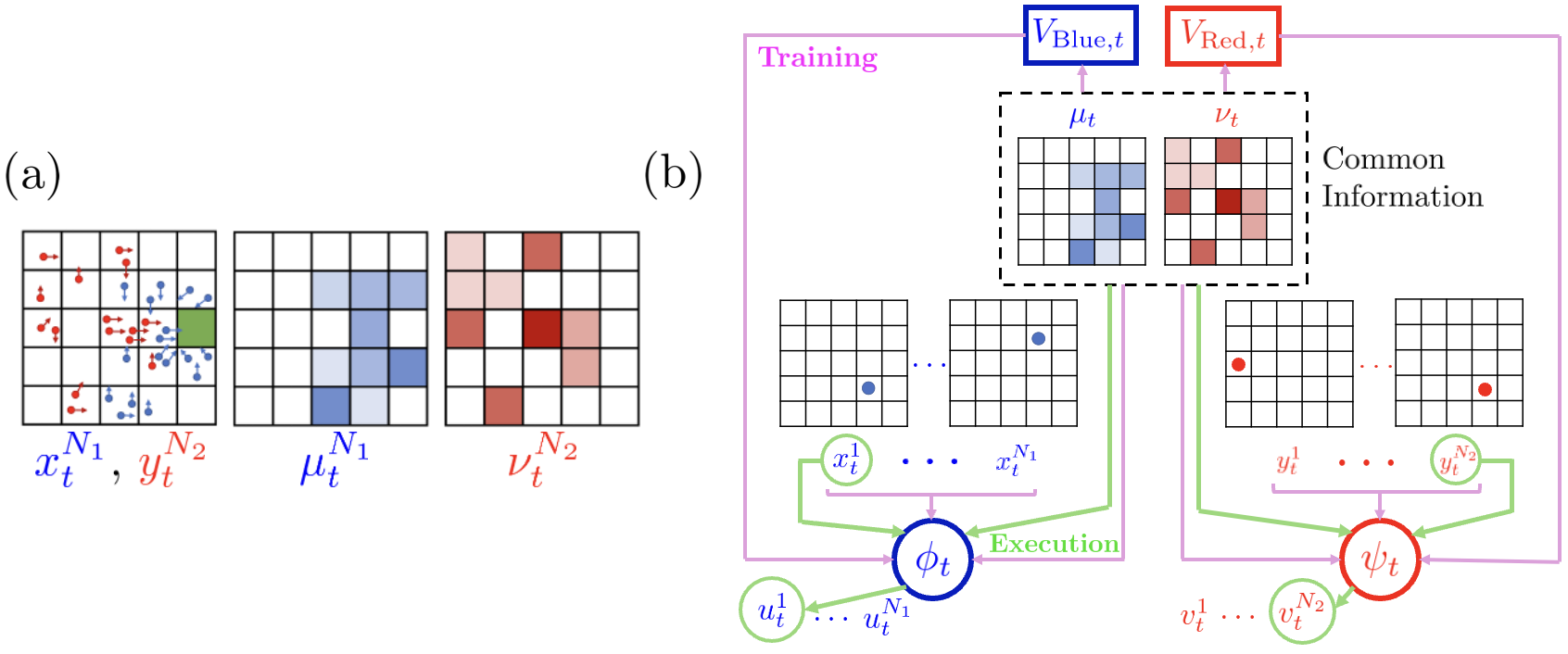}
    \caption{(a) Battlefield as a ZS-MFTG (b) Overview of the architecture of MF-MAPPO.
    }
    \label{fig:actor-critic-network}
\end{figure}

\textbf{Mean-Field Theory and Learning.}
Recent advances in mean-field learning span Q-function–based methods, such as MF-Q and MF-AC~\citep{yang2018mean} and DDPG-MFTG~\citep{shao2024reinforcementlearningfinitespace}, to value-function–based methods like Dec-POMFPPO~\citep{cui2023learning} (MFC only).
{While DDPG-MFTG incorporates team games, it is restricted to simple grid worlds, unlike our focus on tightly coupled collaborative–competitive domains (ZS-MFTGs). 
We adopt it as a baseline and show that MF-MAPPO consistently outperforms it in stability and performance.}  
Other related works include PMD-TD for MFGs~\citep{yardim2024exploiting} and GAN-based ECA-Net for continuous-space attack–defense games~\citep{wang2022multi}, both differing in scope and structure.
Moreover, \cite{yang2018mean, subramanian2020partially, subramanian2022multitypemeanfield, subramanian2022decentralizedmeanfieldgames} define mean fields are defined over (neighboring) agents’ actions rather than their state distribution yielding a rich line of research that is orthogonal to our formulation.
MF-MAPPO extends PPO~\citep{schulman2017proximal} to competitive MFTGs, using team distributions as critic inputs for scalability, a shared actor–critic per team with a single buffer for efficiency, and simultaneous team training to avoid the inefficiencies of iterative best-response methods~\citep{lanctot2017unified, smith2021iterative}.
Unlike prior MF methods that rely on infinite-population oracles~\citep{shao2024reinforcement, perolat2021scalingmeanfieldgames, carmona2021modelfreemeanfieldreinforcementlearning}, MF-MAPPO is trained directly in finite-population simulators, making it suitable for realistic deployment.
{Finally, to standardize evaluation in large-scale MFTGs, we create novel MFTG benchmark environments (Constrained Rock–Paper–Scissors, Battlefield) going beyond existing ones that either focus only on MFGs~\citep{guo2023mfglib} or omit MF coupling altogether~\citep{terry2021pettingzoo, zheng2018magent}.}

\textbf{Mean-Field Estimation and Opponent Modeling.}
{To enable reactive behavior to opponents’ actions, most existing} MF approaches assume centralized or exact knowledge of the opponent's MF, which is rarely practical.
While~\cite{cui2023learning} considers partial observability, it is limited to MFC. 
Estimation methods such as kernel density estimation~\citep{inoue2021model} or normalizing flows~\citep{perrin2021meanfieldgamesflock} struggle with discrete spaces and limited agent-visibility.
%
%
%
Communication-based methods~\citep{benjamin2025networkedcommunicationmeanfieldgames, benjamin2025networked} address {these} constraints but assume uniform estimates for unobserved states and rely on perfect multi-round communication {while being restricted to fully cooperative or competitive regimes}; noisy variants may even produce invalid distributions.
%
%
We instead propose Dynamic-Projected Consensus (D-PC), a constrained consensus algorithm that ensures validity, exponential convergence, and bounded deviations when paired with a {\textit{gradient-regularized}} MF-MAPPO {policy}.
%
Gradient regularization naturally stabilizes MF-MAPPO in partially observable MFTGs.
%
%
Experiments show that D-PC matches baseline performance and even outperforms them under limited communication, especially critical in adversarial settings requiring rapid adaptation~\citep{richards2012developing}, enhancing robustness and fault tolerance.
To our knowledge, this is the first use of MF estimation for opponent modeling in competitive team settings.

\textbf{Our contributions.} The main contributions of our work can be summarized as: 1) MF-MAPPO, a scalable shared-actor–critic algorithm for large-scale MFTGs;
2) novel MFTG benchmarking environments (\texttt{MFEnv}) for validating MARL scalability; 
3) A gradient-regularized extension of MF-MAPPO coupled with a decentralized mean-field estimation framework D-PC, with theoretical {performance} guarantees in partially observable MFTGs;
4) comprehensive numerical experiments demonstrating MF-MAPPO and D-PC’s superior performance and efficiency over existing baselines.



%

\section{Problem Formulation}\label{sec: probelm-formulation}
\subsection{Zero-Sum Mean-Field Team Game}\label{subsec:zs-mftg}

The zero-sum mean-field team game models a discrete-time stochastic game between two large teams of agents~\citep{guan2024zero}.  
The Blue and Red teams consist of $N_1$ and $N_2$ \emph{identical} agents for each team, with the total number of agents  being $N \hspace{-0.02in}=\hspace{-0.02in} N_1 \hspace{-0.02in}+\hspace{-0.02in} N_2$. 
Let \( X^{N_1}_{i,t} \in \mathcal{X} \) and \( U^{N_1}_{i,t} \in \mathcal{U} \) represent the state and action of Blue agent \( i \in [N_1] \) at time \( t \).
Here, \(\mathcal{X}\) and \(\mathcal{U}\) are the finite state and action spaces of the Blue team. 
Similarly, \( Y^{N_2}_{j,t} \in \mathcal{Y} \) and \( V^{N_2}_{j,t} \in \mathcal{V} \) denote the state and action of Red agent \( j \in [N_2] \) at time \( t \). 
The joint state-action variables for the Blue and Red teams are denoted as \((\mathbf{X}^{N_1}_t, \mathbf{U}^{N_1}_t)\) and \((\mathbf{Y}^{N_2}_t, \mathbf{V}^{N_2}_t)\), respectively.
%
%
{We denote the space of probability measures over a set $E$} as $\P({E})$.
Below, $\dtv{\mu, \mu'}$ represents the total variation between $\mu, \mu' \in \P(E)$.


\begin{definition} 
    \label{def:empirical-dist}
    The \textit{empirical distributions} (ED) for the Blue and Red teams are defined as
        \begin{align}\label{eqn:mean-field}
            \M^{N_1}_t(x) = \frac{1}{N_1} \sum_{i=1}^{N_1} \mathds{1}_x(X^{N_1}_{i,t}),  ~x \in \X,~~\text{and}~~
            \N^{N_2}_t(y) = \frac{1}{N_2} \sum_{j=1}^{N_2} \mathds{1}_y(Y^{N_2}_{j,t}), ~y \in \Y,
        \end{align}
\end{definition}
\noindent where $\indicator{a}{b} = 1$ if $a = b$ and $0$ otherwise. 
Specifically, $\M^{N_1}_t(x)$ gives the fraction of Blue agents at state~$x$ and, similarly, for $\N^{N_2}_t(y)$. 
We use $\M_t^{N_1} =\empMu{\bfX^{N_1}_t}$ and 
$\N_t^{N_2} =\empNu{\bfY^{N_2}_t}$
to denote the EDs computed from the given joint states.
Note that the $\mathrm{Emp}$ operators  remove agent index information, so one \textit{cannot} determine the state of a specific Blue agent~$i$ from $\M^{N_1}_t$.

We consider weakly-coupled dynamics where the dynamics of each individual agent is coupled with other agents through the EDs~\citep{huang2006large, sanjari2022nash}. 
For Blue agent $i$, its stochastic transition is governed by the transition kernel 
$f_t : \X \times \U \times \P(\X) \times \P(\Y) \rightarrow \P(\X)$:
{\small{\begin{equation}
\label{ft:DEF}
    \mathbb{P}(X_{i,t+1}^{N_1}=x_{i,t+1}^{N_1} | U_{i,t}^{N_1} = u_{i,t}^{N_1}, \bfX_{t}^{N_1} = \bfx_{t}^{N_1}, \bfY_{t}^{N_2}= \bfy_{t}^{N_2}) 
    = f_t(x_{i,t+1}^{N_1}\vert x_{i,t}^{N_1}, u_{i,t}^{N_1}, \mu^{N_1}_t, \nu^{N_2}_t),   
\end{equation}}}
where $\mu^{N_1}_t = \empMu{\bfx_t^{N_1}}$ and $\nu^{N_2}_t = \empNu{\bfy_t^{N_2}}$. Similarly, the dynamics of Red agent $j$ is governed by the transition kernel $g_t :   \Y \times \V \times \P(\X) \times \P(\Y) \rightarrow \P(\Y)$.
All agents in the Blue team receive an identical weakly-coupled team reward, i.e., $r_t \triangleq r_t(\mu_t, \nu_t): \P(\X) \times \P(\Y) \to \mathbb{R}$. 
The Red agents receive $-r_t(\mu_t, \nu_t)$ as their rewards {(zero-sum)}. 
We assume that the Blue team is \textit{maximizing} while the Red team is \textit{minimizing} and $r_t\in[-R_{\max}, R_{\max}]$ for all $t$.

\begin{assumption} [Lipschitz Model]
    \label{assmpt:lipschitiz-model}
    For all $x\in \X, u\in\U$, $\mu, \mu' \in \P(\X)$, $\nu, \nu' \in \P(\Y)$ and all $t$,
    there exist constants $L_{f}, L_r > 0$ such that {
    \(\textstyle \sum_{x' \in \X} \abs{f_t(x'|x,u,\mu, \nu) - f_t(x'|x,u,\mu', \nu')}
    \leq L_{f}\big(\dtv{\mu, \mu'} + \dtv{\nu, \nu'}\big)\) and \(\textstyle \abs{r_t(\mu, \nu) - r_t(\mu', \nu')} \leq L_r \big(\dtv{\mu,\mu'} + \dtv{\nu,\nu'}\big)\).
    A similar assumption also holds for $g_t$. }
\end{assumption}

{Lipschitz continuity is commonly assumed~\citep{huang2006large, gu2021meanfieldcontrolsqlearningcooperative}, and at minimum uniform continuity is required; see~\cite{cui2023learning} for counterexamples.}

{The first grid in Figure~\ref{fig:actor-critic-network}(a) depicts the individual agents' local positions, with the target marked by the green cell. 
The subsequent grids illustrate the state distributions $\mu^{N_1}_t$ and $\nu^{N_2}_t$ of both teams. 
The agent interactions depend only on $\mu^{N_1}_t$ and $\nu^{N_2}_t$ (weakly-coupled) as described in \eqref{ft:DEF}.}


We consider a mean-field sharing information structure~\citep{arabneydi2015team}, where each agent's decision depends on its own state and the two team EDs. 
{We start with assuming full observation of mean-fields and later relax this assumption.}
%
%
Specifically, the Blue and Red agents seek to construct mixed Markov policies {$\textstyle \phi_{i,t}: \U \times \X \times \P(\X) \times \P(\Y) \to [0,1],~~\text{and}~~
        \psi_{j,t}: \V \times \Y \times \P(\X) \times \P(\Y) \to [0,1]$,} where the Blue policy $\phi_{i,t}(u|x_{i,t}^{N_1}, \mu^{N_1}_t, \nu^{N_2}_t)$ dictates the probability that Blue agent $i$ selects action $u$ given its state $x_{i,t}^{N_1}$ and the observed/estimated team EDs $\mu^{N_1}_t$ and $\nu^{N_2}_t$. 
Note that each agent's individual state is its private information.

Let $\Phi_t$ ($\Psi_t$) denote the set of individual Blue (Red) policies at time~$t$.
We define the Blue team policy $\phi^{N_1}_t \!=\! \{\phi_{i, t}\}_{i=1}^{N_1}$ as the collection of the $N_1$ Blue agent individual policies, and denote the set of Blue team policies as $\Phi^{N_1}_t \!=\! \times_{N_1} \Phi_t$.
Similarly, the Red team policy is denoted as $\psi_t^{N_2} \in \Psi_t^{N_2}\!=\!\times_{N_2} \Psi_t$.

\begin{definition} [Identical team policy]
    \label{def:identical-policy}
    The Blue team policy $\phi^{N_1}_t = (\phi^{N_1}_{1,t}, \ldots, \phi^{N_1}_{N_1,t})$ is \textit{identical}, if $\phi_{i_1,t} = \phi_{i_2,t}$ for all times $t$ and all $i_1,i_2 \in [N_1]$. 
    $\Phi$ represents the set of identical Blue team policies.
\end{definition}
{The definition extends naturally to the Red team, and $\Psi$ denotes the set of identical Red team policies.}
{The expected cumulative reward defines the performance of the team policy pair $\textstyle(\phi^{N_1}, \psi^{N_2})$:}
\begin{align}\label{eq:performance-policy-team-pair}
    & J^{N, \phi^{N_1}, \psi^{N_2}} \big(\bfx_0^{N_1}, \bfy_0^{N_2} \big)=
    \mathbb{E}_{{\phi^{N_1}, \psi^{N_2}}}
    \Big[\sum_{t=0}^{T} r_t(\M^{N_1}_t, \N^{N_2}_t) \Big \vert \bfx_0^{N_1}, \bfy^{N_2}_0\! \Big].
\end{align}
When the Blue team considers its worst-case performance, we have the following max-min optimization problem:
\begin{equation}\label{eq:blue-max-min}
    \hspace{-0.01in}
    \underline{J}^{N*}(\bfx_0^{N_1} , \bfy_0^{N_2})  = \max_{\mathbf{\phi}^{N_1} \in \Phi^{N_1}} \min_{\mathbf{\psi}^{N_2} \in \Psi^{N_2}} J^{N, \phi^{N_1}, \psi^{N_2}}\hspace{-0.02in}(\bfx_0^{N_1}\hspace{-0.05in}, \bfy_0^{N_2}), \hspace{-0.1in}
\end{equation}
where $\underline{J}^{N*}$ is the lower game value for the \textit{finite-population} game. 
Similarly, the minimizing Red team considers a min-max optimization problem, which leads to the upper game value.
Note that we allow both teams to follow \textit{non-identical} team policies in~\eqref{eq:blue-max-min}.

\subsection{Infinite-Population Solution}

To reduce the complexity of team policy optimization domains in~\eqref{eq:blue-max-min}, the authors of \cite{guan2024zero} 
{examined} team behaviors under \textit{identical team policies} at the \textit{infinite-population} limit. 
It was shown that the team joint states can be represented using the team population distribution, which 
coincides with the state distribution of a \textit{typical agent} referred to as the mean-fields ($\mu_t$ and $\nu_t$ for the Blue and Red teams, respectively). 
{They also proved that} MFs induced by identical team policies in an infinite-population game closely approximate the \ED{}s induced by \emph{non-identical} team policies in the corresponding finite-population game, which justifies the simplification of the optimization domain in~\eqref{eq:blue-max-min} to identical team policies (also see Theorem~\ref{thm:finite-population-training}).
Furthermore, there is a one-to-one correspondence between infinite-population coordination policies $(\alpha, \beta)$ and local \textit{identical} team policies $(\phi, \psi) \in \Phi \times \Psi$.
The performance of $(\phi, \psi)$ in the equivalent zero-sum coordinator game is measured by 
\begin{equation}
\label{eqn:mf-performance}
  J_{\infty}^{\alpha, \beta}(\mu_0, \nu_0) \equiv J_{\infty}^{\phi, \psi}(\mu_0, \nu_0) = \sum_{t=0}^{T} r_t(\mu_t, \nu_t)  ,
\end{equation}
where $\mu_t$ and $\nu_t$ follow a \textit{deterministic} dynamics~\citep{guan2024zero} similar to the state distribution propagation of a controlled Markov chain.
The worst-case performance of the Blue team in this infinite-population game is then given by the lower game value $\textstyle \underline{J}_\infty^{*} (\mu_0, \nu_0) = \max_{\phi\in \Phi} \; \min_{\psi \in \Psi} ~ J_\infty^{\phi, \psi}(\mu_0, \nu_0)$, 
where the optimization domain is restricted to \textit{identical team policies}.
%
%
%
\cite{guan2024zero} establishes guarantees that identical team policies resulting from the solution of this equivalent zero-sum \textit{coordinator} game are still $\epsilon$-optimal for the original max-min optimization problem in~\eqref{eq:blue-max-min} where {$\epsilon = \mathcal{O}(1/\sqrt{\underline{N}})$ and $ \underline{N} = \min \{N_1, N_2\}$.}
%
%

%

{The infinite-population limit of large-population games offers several theoretical advantages, such as representing the population by a typical agent and deterministic dynamics that reduce~\eqref{eq:performance-policy-team-pair} to the non-stochastic optimization of~\eqref{eqn:mf-performance}.}
%
Previous works~\citep{shao2024reinforcement, perolat2021scalingmeanfieldgames, carmona2021modelfreemeanfieldreinforcementlearning} depend on infinite-population oracles to obtain mean-field trajectories $(\mu_t, \nu_t)$ in order to compute~\eqref{eqn:mf-performance}.
This is rather unrealistic, since in practice, only finite-population simulations and local states $(\bfx^{N_1}_t, \bfy^{N_2}_t)$ with actions $(\mathbf{u}^{N_1}_t, \mathbf{v}^{N_2}_t)$ are available/observable.
Moreover, a single coordinator policy $\alpha(\beta)$ defines a distribution over actions for each state conditioned on the mean-field, causing its dimensionality to scale with the joint state–action space (e.g., DDPG-MFTG), leading to high computational cost and degraded empirical performance (see Section~\ref{sec:numerical-exp}).
In summary, the infinite-population model is both impractical (due to oracle dependency) and computationally intractable (due to policy size).
Thus, we turn to finite-population simulators and derive guarantees of optimality, scalability, {and convergence of the policy gradient}
to the infinite-population ZS-MFTG.
%

The next result quantifies the level of suboptimality for the Blue team when it deploys the optimal identical policy learned directly from the solution of \textit{finite}-population ZS-MFTG.

\begin{restatable}{theorem}{maintheoremtwo}\label{thm:finite-population-training}
 The value of the optimal identical Blue team policy $\phi^*$ obtained from the \emph{finite population game} is within  $\epsilon$ of the {finite-population lower game value defined in~\eqref{eq:blue-max-min}. 
 Formally, for all joint states $\bfx^{N_1}$ and $\bfy^{N_2}$, 
    \begin{align}
        \label{eqn:finite-training-bound}
        \min_{\psi^{N_2}} 
       J^{N,\phi^*,\psi^{N_2}} (\bfx^{N_1}, \bfy^{N_2}) \geq \underline{J}^{N*}(\bfx^{N_1}, \bfy^{N_2})-\Residue, ~\textrm{where $\underline{N} = \min \{N_1, N_2\}$}.
    \end{align}}
\end{restatable}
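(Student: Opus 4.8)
My plan is to bridge the finite-population identical-policy problem through the infinite-population coordinator solution and then invoke the $\epsilon$-optimality guarantee already cited from \cite{guan2024zero}, rather than re-deriving any mean-field approximation. Let $\phi_\infty^* \in \Phi$ denote the identical Blue team policy induced (via the one-to-one coordinator correspondence) by the optimal coordination policy attaining the infinite-population lower value $\underline{J}_\infty^{*}(\mu_0,\nu_0)$. I would first note that the nontrivial direction is precisely the claimed lower bound: since identical team policies form a subset of all team policies, $\Phi \subseteq \Phi^{N_1}$, the reverse inequality $\min_{\psi^{N_2}} J^{N,\phi^*,\psi^{N_2}} \le \underline{J}^{N*}$ is automatic, so \eqref{eqn:finite-training-bound} asserts that restricting Blue to \emph{identical} policies costs at most $\Residue$.

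First I would record the consequence of the cited guarantee. Because deploying $\phi_\infty^*$ in the finite-population game is $\epsilon$-optimal for \eqref{eq:blue-max-min} with $\epsilon = \Residue$, it secures, against the \emph{worst-case} (possibly non-identical) Red team, the bound $\min_{\psi^{N_2} \in \Psi^{N_2}} J^{N,\phi_\infty^*,\psi^{N_2}}(\bfx^{N_1},\bfy^{N_2}) \ge \underline{J}^{N*}(\bfx^{N_1},\bfy^{N_2}) - \Residue$. Internally this rests on the mean-field approximation of \cite{guan2024zero}: Assumption~\ref{assmpt:lipschitiz-model}, together with the $\mathcal{O}(1/\sqrt{\subN})$ concentration of the empirical distributions around the deterministic mean-fields $(\mu_t,\nu_t)$, controls $\abs{J^{N,\phi_\infty^*,\psi} - J_\infty^{\phi_\infty^*,\psi}}$ uniformly over the opponent, after a symmetrization step showing the worst-case non-identical Red response gains at most $\Residue$ over its identical counterpart.

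Next I would exploit the optimality of $\phi^*$. By definition $\phi^*$ maximizes the worst-case return over all identical Blue policies, i.e. $\min_{\psi^{N_2}} J^{N,\phi^*,\psi^{N_2}} = \max_{\phi \in \Phi} \min_{\psi^{N_2}} J^{N,\phi,\psi^{N_2}}$; since $\phi_\infty^* \in \Phi$ is merely one feasible identical policy, this value is at least $\min_{\psi^{N_2}} J^{N,\phi_\infty^*,\psi^{N_2}}$. Chaining this with the previous display yields $\min_{\psi^{N_2}} J^{N,\phi^*,\psi^{N_2}}(\bfx^{N_1},\bfy^{N_2}) \ge \underline{J}^{N*}(\bfx^{N_1},\bfy^{N_2}) - \Residue$, which is exactly \eqref{eqn:finite-training-bound}. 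If the maximizer $\phi^*$ is not attained over the (infinite-dimensional) policy class, I would instead take an $\eta$-approximate maximizer and absorb $\eta$ into the residual.

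The main obstacle is not the domination argument above, which is elementary, but the uniform approximation hidden inside the cited bound: the gap between the finite-population empirical distribution and the infinite-population mean-field must be controlled \emph{uniformly over the adversarial Red team}, including non-identical Red policies. Establishing this requires both the propagation-of-chaos estimate under Assumption~\ref{assmpt:lipschitiz-model} and the symmetrization argument that a non-identical best response cannot outperform an identical one by more than $\Residue$; were a self-contained proof demanded, this is where the $1/\sqrt{\subN}$ rate is genuinely earned. Here I would instead cite \cite{guan2024zero} for these two ingredients and use them as a black box, so that the theorem reduces to the observation that the \emph{best} identical policy can only improve on the particular coordinator-derived identical policy $\phi_\infty^*$.
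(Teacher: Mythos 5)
Your proposal is correct and follows essentially the same route as the paper's proof: both invoke the $\epsilon$-optimality of the coordinator-derived identical policy $\phi_\infty^*$ from \cite{guan2024zero} (Theorem~\ref{thm:performance-guarantees}) as a black box, then observe that the optimal identical finite-population policy $\phi^*$ dominates $\phi_\infty^*$ in worst-case value, and chain the two inequalities. Your additional remarks on the automatic reverse inequality and on handling a non-attained maximizer are harmless elaborations of the same argument.
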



{Theorem~\ref{thm:finite-population-training} provides a principled justification for learning identical finite-population team policies in competitive–collaborative team games even when being exploited by non-identical opponent team strategies.
Its motivation and proof build on the performance guarantees of the ZS-MFTG in the infinite-population limit, i.e., the coordinator game.
Moreover, the error vanishes as $N_1, N_2 \to \infty$, thereby recovering the well-studied infinite-population MF formulation~\citep{huang2006large}.}
We {detail} this finite-population training paradigm in the next section.
%

\section{Mean-Field Multi-Agent Proximal Policy Optimization}\label{sec:mf-mappo}
Motivated by Theorem~\ref{thm:finite-population-training}, we present an algorithm to learn the \textit{finite-population} optimal identical team policy. 
We build our algorithm based on the proximal policy optimization (PPO) framework due to its simplicity and effectiveness. 
While PPO has shown promising performance in cooperative tasks including MFC problems~\citep{yu2022surprising, cui2023learning}, its application in mixed competitive-collaborative scenarios is less studied.
%
{In the sequel, we introduce our key contribution: MF-MAPPO.} 
We initialize two pairs of actor-critic networks, one for each team, deployed to learn the identical policy used by each team, see Figure~\ref{fig:actor-critic-network}(b).
Specifically, we introduce a \emph{minimally-informed critic} network by exploiting the {MF information structure}.
The key point here is that we only require commonly accessible information for the critic network in order to learn the value function (Proposition \ref{thm:minimal-critic}). 
Further, the private information available to each agent only \emph{individually} enters the actor during training.
This results in neural networks that scale well with the number of agents.
We present the team actor-critic networks from the Blue team's perspective, and due to symmetry results extend naturally to the Red team.

\textbf{Minimally-Informed Critic.} The MF-MAPPO critic network of the Blue team evaluates the value function $V_\text{Blue}(\mu, \nu)$, which depends only on the common information (MFs)---assumed to be available at the time of training---and is \textit{independent} of the joint agent states and actions. 
We use the parameter vector $\zeta_\text{Blue}$ to parameterize the critic network while minimizing the MSE loss
\begin{equation}
\label{eq:critic network}
    L_{\text{critic}}(\zeta_\text{Blue}) = \frac{1}{|B|}\sum_{k=1}^{|B|}\Bigl(V_\text{Blue}(\mu_k, \nu_k| \zeta_\text{Blue}) - \hat{R}_{\text{Blue}, k}\Bigr)^2,
\end{equation}
where {$B$ refers to the mini-batch size} and $\hat{R}_{\text{Blue}, k}$ is the discounted reward-to-go for sample $k$. 
%
%
%
The following proposition {results} from weakly-coupled team rewards and the use of identical team policies-justifying the deployment of a minimally-informed critic network with only MF inputs.

\begin{proposition}\label{thm:minimal-critic}
Let $\mu^{N_1}_t$, and $\nu^{N_2}_t$ denote the EDs of a finite-population game obtained from identical Blue and Red team policies $\phi_t \in \Phi_t$ and $\psi_t \in \Psi_t$, respectively.
The team reward structure admits a critic that depends only on $\mu^{N_1}_t$ and $\nu^{N_2}_t$.
 Specifically, for each Blue team agent $i \in [N_1]$, the individual critic value function $V_{i, t}^{N_1, \phi_{t}}(x_{i, t}, \mu^{N_1}_t, \nu^{N_2}_t)$ satisfies $\textstyle  V_{i, t}^{N_1, \phi_{t}}(x_{i, t}, \mu^{N_1}_t, \nu^{N_2}_t) = V^{N_1, \phi_t}_{{\rm Blue}, t}(\mu^{N_1}_t, \nu^{N_2}_t)$, where $V^{N_1, \phi_t}_{{\rm Blue}, t}(\mu_t, \nu_t)$ is the team-level critic.
\end{proposition}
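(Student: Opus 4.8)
The plan is to show that, under identical team policies, the individual value function is constant in its private-state argument $x_{i,t}$ and hence collapses to a function of the two mean-fields alone. Writing the individual reward-to-go as
\[
V_{i,t}^{N_1,\phi_t}(x_{i,t},\mu_t,\nu_t) = \expectation{\textstyle\sum_{s=t}^{T} r_s(\M^{N_1}_s,\N^{N_2}_s) \;\Big|\; X^{N_1}_{i,t}=x_{i,t},\, \M^{N_1}_t=\mu_t,\, \N^{N_2}_t=\nu_t},
\]
I would first exploit that each team reward $r_s$ depends on the joint configuration \emph{only through} the empirical distributions $(\M^{N_1}_s,\N^{N_2}_s)$. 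Consequently, the only channel through which $x_{i,t}$ could enter is the conditional law of the \emph{future} mean-field trajectory given the extra information $X^{N_1}_{i,t}=x_{i,t}$. The entire proof therefore reduces to showing that this extra conditioning is irrelevant.

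The key lemma I would establish is that the joint process $(\M^{N_1}_s,\N^{N_2}_s)_{s\ge t}$ is itself a Markov chain under identical policies, and moreover
\[
\pr{\M^{N_1}_{s+1}=\cdot,\, \N^{N_2}_{s+1}=\cdot \mid X^{N_1}_{i,s}=x_{i,s},\, \M^{N_1}_s=\mu_s,\, \N^{N_2}_s=\nu_s} = \pr{\M^{N_1}_{s+1}=\cdot,\, \N^{N_2}_{s+1}=\cdot \mid \M^{N_1}_s=\mu_s,\, \N^{N_2}_s=\nu_s}.
\]
The argument is exchangeability. Fix any configuration $(\bfx_s,\bfy_s)$ with $\empMu{\bfx_s}=\mu_s$ and $\empNu{\bfy_s}=\nu_s$. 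Under an identical Blue policy $\phi_s$, each Blue agent at state $x$ draws its action from $\phi_s(\cdot\mid x,\mu_s,\nu_s)$ and transitions via $f_s(\cdot\mid x,\cdot,\mu_s,\nu_s)$ independently of the others, so the next Blue states are independent with laws depending only on that agent's current state and on $(\mu_s,\nu_s)$. The next Blue ED is the normalized count of these states, hence its law is a function of the \emph{multiset} of current Blue states together with $\nu_s$, i.e., of $(\mu_s,\nu_s)$ alone; symmetrically for Red via $\psi_s,g_s$. Thus the law of $(\M^{N_1}_{s+1},\N^{N_2}_{s+1})$ given the full configuration depends only on $(\mu_s,\nu_s)$. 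Averaging over the configurations compatible with each conditioning event — all those with ED $(\mu_s,\nu_s)$, and in the left-hand term additionally with agent $i$ pinned at $x_{i,s}$ — yields the same next-ED law, which proves both the identity and the Markov property.

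With the lemma in hand I would finish by backward induction on $t$. At the terminal time the reward-to-go equals $r_T(\mu_T,\nu_T)$, manifestly independent of $x_{i,T}$, so set $V^{N_1,\phi_T}_{\mathrm{Blue},T}(\mu_T,\nu_T)=r_T(\mu_T,\nu_T)$. For the inductive step, the Bellman recursion gives
\[
V_{i,t}^{N_1,\phi_t}(x_{i,t},\mu_t,\nu_t) = r_t(\mu_t,\nu_t) + \expectation{V_{i,t+1}^{N_1,\phi_{t+1}}\big(X^{N_1}_{i,t+1},\M^{N_1}_{t+1},\N^{N_2}_{t+1}\big)\;\Big|\; X^{N_1}_{i,t}=x_{i,t},\,\M^{N_1}_t=\mu_t,\,\N^{N_2}_t=\nu_t}.
\]
By the induction hypothesis the integrand equals $V^{N_1,\phi_{t+1}}_{\mathrm{Blue},t+1}(\M^{N_1}_{t+1},\N^{N_2}_{t+1})$, a function of the future mean-fields only; the key lemma then forces the conditional expectation to depend on $(\mu_t,\nu_t)$ but not on $x_{i,t}$. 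Hence $V_{i,t}^{N_1,\phi_t}$ is independent of $x_{i,t}$, and defining $V^{N_1,\phi_t}_{\mathrm{Blue},t}(\mu_t,\nu_t)$ as this common value yields the claim, with the Red team handled identically by symmetry. The step I expect to be the main obstacle is the rigorous formalization of the key lemma — precisely, that averaging over configurations compatible with $\{X^{N_1}_{i,t}=x_{i,t},\,\M^{N_1}_t=\mu_t\}$ produces the same next-ED law as averaging over all configurations with ED $\mu_t$; this rests on the exchangeability of identical agents and requires the conditioning to be consistent (i.e., $x_{i,t}$ in the support of $\mu_t$). The remaining pieces — the reward depending on EDs only, and the Bellman bookkeeping — are routine once the lemma is in place.
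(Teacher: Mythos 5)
The paper states this proposition without a written proof---the only justification offered is the one-line remark that it ``results from weakly-coupled team rewards and the use of identical team policies''---so there is no argument of the authors' to compare yours against line by line. Your proposal is a correct and complete formalization of exactly that remark, and it is the natural one: (i) the reward at every step is a function of the empirical distributions alone, so the private state $x_{i,t}$ can only enter through the conditional law of the future ED trajectory; (ii) under an identical policy, each agent's action and transition law depend only on its own state and on $(\mu_s,\nu_s)$, and agents transition conditionally independently, so the law of the next ED given the full configuration is a permutation-invariant functional of the current states, i.e.\ a function of $(\mu_s,\nu_s)$ only; averaging over configurations compatible with either conditioning event then gives the same next-ED law, which is your key lemma; (iii) backward induction closes the argument. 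The two points worth making explicit if this were written out in full are ones you already touch on: the conditional independence of the agents' transitions given $(\bfx_t,\bfy_t)$ is implicit in the paper's weakly-coupled model in~\eqref{ft:DEF} and must be invoked for the product structure in the lemma, and the conditioning event $\{X^{N_1}_{i,t}=x_{i,t},\,\M^{N_1}_t=\mu_t\}$ is only well-posed when $\mu_t(x_{i,t})>0$. Neither is a gap; your proof is sound.
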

Importantly, it reduces the learning problem to one critic network per team.
Specifically, the shared team reward structure along with the assumption of homogeneous agents in each team enables us to evaluate the performance of a team's agent using the minimally-informed critic---even if the individual agent has additional local observations such as their actions and private states. 

\textbf{Shared-Team Actor.}
As discussed in earlier sections, the coordinator game is a useful theoretical construct but has limited practical value for real-world deployment since {it relies on an infinite-population oracle for training and produces policies whose size scales poorly with the state-action space.}
We therefore directly train finite-population identical local policies, which preserve the mean-field structure while reducing complexity and improving tractability, with guarantees in Theorem~\ref{thm:finite-population-training}.
Not only is the approach computationally tractable in terms of the size of the policy, but is also more realistic in terms of sampling training data.
{We use a single actor network per team to learn identical team policies. 
The actor optimizes a PPO-based objective with a decaying entropy bonus~\citep{schulman2017proximal, shengyi2022the37implementation}, which promotes exploration and stabilizes learning in mean-field settings~\citep{cui2021approximately, guan2022shaping}.
Permutation invariance and identical team policies further allow a single replay buffer per team, reducing memory costs and simplifying experience collection.}
%
%
%
%
The PPO-based objective function of the Blue actor is given by:
{\small{\begin{equation}     \label{eq:actor network}
    L(\theta_{\text{Blue}}) =  \frac{1}{|B|} \sum_{k=1}^{|B|} \Big[\text{min}\Bigl(g_k(\theta_{\text{Blue}})A_k, \operatorname{clip}_{[1-\epsilon, 1+\epsilon]}(g_k(\theta_{\text{Blue}}) )A_k\Bigr) + \omega S(\phi_{\theta_{\text{Blue}}}(x_k, \mu_k, \nu_k))\Big], 
 \end{equation}}}
where, $\textstyle g(\theta) = \frac{\phi_{\theta}(u|x, \mu, \nu)}{\phi_{\theta^{\text{old}}}(u|x, \mu, \nu)}$, $A_k$ is the generalized advantage function estimate~\citep{schulman2015high} {and the tunable parameter $\omega$  weighs the contribution of the entropy term $S \big( \phi_{\theta}(x, \mu, \nu)\big)$ and decays as training progresses.}
%

\subsection{Theoretical Guarantees}

As described in Section~\ref{sec: probelm-formulation}, the theoretical benefits of MFTGs at the infinite-population limit remain of significant interest.
Indeed, the following theorem shows that policy gradients obtained through finite-population training (using a finite-population simulator) converge to their infinite-population counterparts as the population size grows.

\begin{restatable}{theorem}{maintheoremone}\label{thm:convergence-of-gradients}
The approximate policy gradient of the infinite-population Blue (Red) team coordinator policy $\alpha$ ($\beta$) computed using local policies from the finite-population ZS-MFTG via MF-MAPPO ($\hat{J}^{N_1}(\alpha_\theta)$) uniformly tends to the true policy gradient as the population size increases, i.e., $ \|\nabla_{\theta}  J_\infty(\alpha_\theta) - \nabla_{\theta}  \hat J^{N_1}(\alpha_\theta)\|_2 \to 0$ as $(N_1, N_2)\to\infty$, where $\|\cdot\|_2$ is the 2-norm. 
\end{restatable}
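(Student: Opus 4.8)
The plan is to reduce the gradient-convergence claim to a mean-field approximation argument of the same flavor as Theorem~\ref{thm:finite-population-training}, but applied to the policy-gradient functional rather than to the value itself (convergence of values does not by itself yield convergence of derivatives). First I would invoke the policy gradient theorem to write both gradients explicitly. For the finite population, exchangeability of the identical Blue team policy together with the minimally-informed critic (Proposition~\ref{thm:minimal-critic}) gives
\[
\nabla_\theta \hat J^{N_1}(\alpha_\theta) = \expectation{\sum_{t=0}^T \Big(\frac{1}{N_1}\sum_{i=1}^{N_1}\nabla_\theta \log \phi_\theta(U_{i,t}|X_{i,t},\M^{N_1}_t,\N^{N_2}_t)\Big)\,\hat A_t},
\]
while the deterministic infinite-population dynamics collapse the expectation, yielding
\[
\nabla_\theta J_\infty(\alpha_\theta) = \sum_{t=0}^T \expct{\nabla_\theta \log \phi_\theta(U|X,\mu_t,\nu_t)}{X\sim\mu_t,\,U\sim\phi_\theta}\,A_t^\infty,
\]
with $\mu_t,\nu_t$ the deterministic mean-fields and $A_t^\infty$ the mean-field advantage. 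The strategy is then to show that each finite-population summand converges to its infinite-population counterpart uniformly in $\theta$.

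Second, I would decompose the per-step difference into three sources of error: (i) the gap between the empirical distributions $(\M^{N_1}_t,\N^{N_2}_t)$ and the deterministic mean-fields $(\mu_t,\nu_t)$; (ii) the Monte-Carlo fluctuation of the empirical score average around its conditional expectation under a typical agent; and (iii) the gap between the finite-population return/advantage $\hat A_t$ and the mean-field advantage $A_t^\infty$. For (i), I would establish by induction over $t$ that $\expectation{\dtv{\M^{N_1}_t,\mu_t}}=\Residue$ and likewise for $\N$, propagating the bound through the Lipschitz transition kernel (Assumption~\ref{assmpt:lipschitiz-model}): the base case is the classical $1/\sqrt{N}$ empirical-measure concentration, and the inductive step couples the one-step pushforward of $\mu_t$ with the kernel's Lipschitz modulus $L_f$. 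For (ii), exchangeability makes the empirical score average a sum over conditionally weakly-dependent terms, so a concentration argument gives an $\order{1/\sqrt{N_1}}$ fluctuation, provided the score $\nabla_\theta\log\phi_\theta$ is bounded. For (iii), the boundedness of $r_t$, the Lipschitz reward, and the critic's dependence only on the mean-fields (Proposition~\ref{thm:minimal-critic}) let me transfer the value-level bound of Theorem~\ref{thm:finite-population-training} to the advantage estimate.

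Third, I would combine these bounds. Assuming the policy parameterization is smooth---specifically that $\nabla_\theta\log\phi_\theta(u|x,\mu,\nu)$ is uniformly bounded and Lipschitz in $(\mu,\nu)$ with respect to $\dtv{\cdot,\cdot}$ over a compact parameter set---each of (i)--(iii) is controlled by a constant times $1/\sqrt{\subN}$ that is independent of $\theta$. Summing over the finite horizon $t=0,\dots,T$ and taking the $2$-norm then gives $\norm{\nabla_\theta J_\infty(\alpha_\theta)-\nabla_\theta\hat J^{N_1}(\alpha_\theta)}_2 = \Residue$, which vanishes as $(N_1,N_2)\to\infty$, and the $\theta$-independence of the constants delivers the claimed uniformity.

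The main obstacle I anticipate is term (ii) together with its coupling to (i): the empirical score average is evaluated at the \emph{random} empirical distributions $(\M^{N_1}_t,\N^{N_2}_t)$, so its fluctuation and its bias (from the mean-field gap) are entangled and cannot be bounded in isolation. Handling this cleanly requires the extra smoothness assumption that the score is Lipschitz in the mean-field arguments---an assumption not contained in Assumption~\ref{assmpt:lipschitiz-model}, which only constrains $f_t$, $g_t$, and $r_t$---so that the bias can be linearized against $\dtv{\M^{N_1}_t,\mu_t}$ and absorbed into the inductive mean-field bound. A secondary delicacy is that the advantage estimate is bootstrapped through the learned critic; I would sidestep this by using the Monte-Carlo return in the analysis and appealing to Proposition~\ref{thm:minimal-critic} to ensure the relevant quantities remain functions of the mean-fields alone.
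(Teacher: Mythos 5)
Your proposal is essentially correct in spirit but takes a genuinely different route from the paper. The paper works entirely at the \emph{coordinator} level: it first reconstructs an empirical local policy $\hat\pi^{N_1}_t$ from the sampled actions (assigning $\hat\pi^{N_1}_t(u|x)$ the empirical action frequencies at each occupied state, and uniform elsewhere), expresses $\nabla_\theta \hat J^{N_1}(\alpha_\theta)$ through the coordinator score $\nabla_\theta\log\alpha_\theta(\hat\pi^{N_1}_t\mid\mu^{N_1}_t,\nu^{N_2}_t)$, and then splits the gradient gap into just two terms by adding and subtracting $Q^{\alpha,\beta}$ evaluated at the finite-population arguments: one term handled by continuity of $Q\cdot\nabla\log\alpha$ together with weak-LLN convergence of $(\mu^{N_1}_t,\nu^{N_2}_t,\hat\pi^{N_1}_t,\hat\sigma^{N_2}_t)\to(\mu_t,\nu_t,\pi_t,\sigma_t)$, the other by an inductive lemma showing $Q^{N,\alpha,\beta}\to Q^{\alpha,\beta}$ via the Lipschitz reward and the $\mathcal{O}(1/\sqrt{\underline{N}})$ empirical-distribution coupling. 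You instead differentiate through the \emph{local} actor score $\nabla_\theta\log\phi_\theta(u|x,\mu,\nu)$ averaged over agents and give a three-way decomposition with explicit $1/\sqrt{\underline{N}}$ rates; this is closer to what MF-MAPPO literally computes and yields a quantitative bound where the paper's first term is only qualitative, but it analyzes a formally different object than the one the theorem defines. Both arguments require the same extra smoothness hypothesis you identify (the paper imposes it as boundedness and Lipschitzness of $\nabla\log\alpha_\theta$, plus continuity of the $Q$-functions in the policy arguments), so your flagged obstacle is real and is resolved in the paper exactly as you anticipate, by assumption. One soft spot: your step (iii) appeals to Theorem~\ref{thm:finite-population-training} to control the advantage gap, but that theorem concerns optimal policies and game values, not returns along arbitrary trajectories of a fixed policy pair; the correct tool is the one you also mention, namely propagating the Lipschitz reward through the inductive ED-versus-mean-field coupling (the paper's Corollary on $\expct{\dtv{\M^{N_1}_{t+1},\mu_{t+1}}}{\phi_t}\le\tfrac{|\X|}{2}\sqrt{1/N_1}$), which is precisely how the paper's $Q$-convergence lemma proceeds.
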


The results extend to the Red team. We next demonstrate the scalability of MF-MAPPO as a direct consequence of Theorem~\ref{thm:finite-population-training} and the infinite-population coordinator game, by showing that, under certain conditions, the learned team policies generalize to varying population sizes $(\bar{N}_1,\bar{N}_2)$ while maintaining performance guarantees.
Theorem~\ref{thm:extend-to-many} allows MF-MAPPO to be trained on a smaller population and deployed to larger teams without additional tuning, significantly reducing computational costs while maintaining performance consistency and generalizability across population sizes.

\begin{restatable}{theorem}{maintheoremthree}\label{thm:extend-to-many}
Let $\mathcal{G}_1$ denote the finite-population game where the agents utilize the identical team policies $\phi^*_t$ and $\psi^*_t$ derived from MF-MAPPO {trained on $\G_1$} and
let the finite-population game $\mathcal{G}_2$ with the same state-action space, dynamics, and rewards, but with population sizes $\bar{N}_1$ and $\bar{N}_2$ such that ${\bar{N}_1}/{\bar{N}_2} = {N_1}/{N_2}$ and $\min(\bar{N}_1, \bar{N}_2) \geq \min(N_1, N_2)$. 
{Then, $(\phi^*_t, \psi^*_t)$ remain ${\epsilon}$-optimal for $\mathcal{G}_2$.}
\end{restatable}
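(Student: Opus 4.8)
The plan is to exploit the fact that the infinite-population coordinator game is invariant to the absolute team sizes, depending only on the shared primitives $(\X,\Y,\U,\V,f,g,r)$ and on the limiting ratio $N_1/N_2$. Since $\mathcal{G}_1$ and $\mathcal{G}_2$ share all of these primitives \emph{and} satisfy $\bar N_1/\bar N_2 = N_1/N_2$, both finite-population games collapse to the \emph{same} infinite-population coordinator game in the mean-field limit. Consequently the optimal identical team policies $(\phi^*_t,\psi^*_t)$ learned by MF-MAPPO on $\mathcal{G}_1$---which, through the one-to-one correspondence with coordinator policies $(\alpha,\beta)$, solve this common coordinator game---are simultaneously the optimal identical policies for the coordinator game underlying $\mathcal{G}_2$. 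This reduces the statement to re-deriving the guarantee of Theorem~\ref{thm:finite-population-training} with $\mathcal{G}_2$ in place of $\mathcal{G}_1$.

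Concretely, I would proceed in three steps. First, establish transfer of the optimizers: using the deterministic mean-field propagation of $(\mu_t,\nu_t)$ and the Lipschitz model (Assumption~\ref{assmpt:lipschitiz-model}), argue that the coordinator value $\underline{J}_\infty^{*}(\mu_0,\nu_0)$ and its maximin identical optimizers are functions of the primitives and the ratio alone, so the pair obtained on $\mathcal{G}_1$ is feasible and optimal for the coordinator game induced by $\mathcal{G}_2$. Second, instantiate the finite-population approximation argument of Theorem~\ref{thm:finite-population-training} for $\mathcal{G}_2$: because $(\phi^*,\psi^*)$ is (near-)optimal-identical for the shared coordinator game, its worst-case finite-population value in $\mathcal{G}_2$ obeys $\min_{\psi^{\bar N_2}} J^{\bar N,\phi^*,\psi^{\bar N_2}}(\bfx^{\bar N_1},\bfy^{\bar N_2}) \ge \underline{J}^{\bar N*}(\bfx^{\bar N_1},\bfy^{\bar N_2}) - \order{1/\sqrt{\min(\bar N_1,\bar N_2)}}$. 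Third, invoke the hypothesis $\min(\bar N_1,\bar N_2)\ge\min(N_1,N_2)$ together with the monotonicity of $N\mapsto 1/\sqrt{N}$ to bound the residual by $\epsilon=\order{1/\sqrt{\min(N_1,N_2)}}$, which is exactly the claimed $\epsilon$-optimality on $\mathcal{G}_2$.

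The main obstacle is the first step: rigorously certifying that the policy optimal on $\mathcal{G}_1$ stays optimal on $\mathcal{G}_2$, rather than merely feasible. This is where the fixed-ratio hypothesis does the real work---it guarantees the two games share a single mean-field limit and hence a single coordinator value with common maximizers; without it the limiting dynamics, and therefore the optimal coordinator policy, could differ and transfer would fail. A secondary subtlety is bookkeeping the approximation error contributed by MF-MAPPO itself: Theorem~\ref{thm:finite-population-training} is phrased for the \emph{exact} optimal identical policy, so I would either fold the learning/optimization gap into a slightly enlarged $\epsilon$ or state the result modulo the gradient convergence already guaranteed by Theorem~\ref{thm:convergence-of-gradients}, making explicit that the $\order{1/\sqrt{\min(N_1,N_2)}}$ term captures only the mean-field concentration gap.
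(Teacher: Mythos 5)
Your proposal is correct and follows essentially the same route as the paper: both exploit the fact that the fixed ratio $\bar N_1/\bar N_2 = N_1/N_2$ makes the infinite-population coordinator game common to $\mathcal{G}_1$ and $\mathcal{G}_2$, use that coordinator game as a bridge (finite $\mathcal{G}_1$ $\to$ infinite $\to$ finite $\mathcal{G}_2$, each direction costing an $\order{1/\sqrt{\cdot}}$ mean-field concentration term), and then absorb the residuals via $\min(\bar N_1,\bar N_2)\ge\min(N_1,N_2)$. The obstacle you flag—that $\phi^*$ is only near-optimal for the shared coordinator game rather than exactly optimal—is precisely what the paper handles by its two inductive bridging inequalities combined with Theorem~\ref{thm:finite-population-training}, so your plan matches the actual argument.
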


%
%
%
%
%
%
%
%
\section{MF-MAPPO for Partially Observable MFTGs}\label{sec:d-pc}

To have strategies reactive to opponent's unexpected behaviors, one needs feedback on opponent's MF-distribution, which in practice, is often unavailable through direct means.
We consider a partially observable ZS-MFTG, relevant to domains like competitive sports/battlefield, where decentralized decision-making relies on estimating the opponent’s state distribution.
The two main challenges are: 1) the sensitivity of MF policies $(\phi, \psi)$ to the MF $(\mu_t, \nu_t)$ feedback and 2) constructing valid, performance-preserving MF estimates that can serve as inputs to the trained MF policies.

{To address the first challenge, we introduce a gradient penalty to the MF-MAPPO objective~\eqref{eq:actor network}, enforcing Lipschitz continuity in the mean-field and ensuring robustness: small estimation errors induce only minor changes in actions distributions.}
The following proposition formalizes this idea.
\begin{restatable}{proposition}{mainpropositionlipschitz}
    \label{prop:lipschitz-proposition}
    If the log-probability of the Blue team policy is bounded such that for all $x\in\X, u\in\U, \mu\in\P(\X), \nu\in \P(\Y)$, $\textstyle \left\|\nabla_{\eta} \log \phi(u \mid x, \mu, \nu)\right\|_2 \leq {L_\phi}/{2|\U|}$, where the gradient is taken with respect to $\eta \in \P(\X)\times\P(\Y) \triangleq [\mu\t, \nu\t]\t$ and $L_\phi > 0$, then $\phi(u \mid x, \mu, \nu)$ is Lipschitz continuous with Lipschitz constant $L_\phi$, i.e.,
\begin{align}\label{eq:mf-lcp-alter}
    \sum_{u}|\phi_t\left(u|x, \hat\mu,\hat\nu\right) - 
    \phi_t\left(u|x, \hat\mu',\hat\nu'\right)| &\leq L_\phi\left(\dtv{\hat\mu, \hat\mu'} + \dtv{\hat\nu, \hat\nu'}\right) ~~\forall~~ x\in\X.
\end{align}
\end{restatable}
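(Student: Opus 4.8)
The plan is to push the assumed bound on $\nabla_\eta\log\phi$ onto $\phi$ itself via the identity $\nabla_\eta\phi = \phi\,\nabla_\eta\log\phi$, integrate along a segment in the product simplex, and then sum over the $\abs{\U}$ actions to cancel the normalization. First I would observe that, since $\phi(u\mid x,\mu,\nu)\in[0,1]$, the hypothesis gives $\norm{\nabla_\eta\phi(u\mid x,\mu,\nu)}_2 = \phi(u\mid x,\mu,\nu)\,\norm{\nabla_\eta\log\phi(u\mid x,\mu,\nu)}_2 \le L_\phi/(2\abs{\U})$ for every $x$, every $u$, and every $\eta=[\mu\t,\nu\t]\t \in \P(\X)\times\P(\Y)$.

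Second, fixing $x$ and $u$ and writing $\eta'=[\mu'\t,\nu'\t]\t$, I would connect $\eta'$ and $\eta$ by the straight segment $s\mapsto \eta'+s(\eta-\eta')$, $s\in[0,1]$, which stays inside $\P(\X)\times\P(\Y)$ because each probability simplex is convex. The fundamental theorem of calculus along this path together with Cauchy--Schwarz then yields the pointwise estimate $\abs{\phi(u\mid x,\mu,\nu)-\phi(u\mid x,\mu',\nu')} \le \frac{L_\phi}{2\abs{\U}}\,\norm{\eta-\eta'}_2$, where I use that the uniform gradient bound holds everywhere on the segment.

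Third, I would sum this inequality over the $\abs{\U}$ actions, cancelling the $\abs{\U}$ in the denominator to get $\sum_u\abs{\phi(u\mid x,\mu,\nu)-\phi(u\mid x,\mu',\nu')}\le \frac{L_\phi}{2}\,\norm{\eta-\eta'}_2$, and then convert the stacked Euclidean norm into total variation. Using the blockwise comparison $\norm{\mu-\mu'}_2\le\norm{\mu-\mu'}_1=2\dtv{\mu,\mu'}$ (and likewise for $\nu$) together with $\sqrt{a^2+b^2}\le a+b$ for $a,b\ge 0$ gives $\norm{\eta-\eta'}_2 \le 2\big(\dtv{\mu,\mu'}+\dtv{\nu,\nu'}\big)$, so the factor $\tfrac12$ and the factor $2$ combine to leave exactly $L_\phi$, matching \eqref{eq:mf-lcp-alter}. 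Here $\hat\mu,\hat\nu$ in the statement are simply the estimated distributions, and since the derivation holds for an arbitrary pair of inputs, no change is needed.

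I do not expect a genuine obstacle: the argument is elementary calculus plus norm comparisons. The points requiring care are (i) the convexity of the product simplex, which is what legitimizes integrating along the segment on which the gradient bound is assumed to hold, and (ii) bookkeeping the constants so the final multiplier is exactly $L_\phi$---this is precisely why the hypothesis carries the $2\abs{\U}$ normalization. A minor technical note is that $\phi$ is defined only on the simplex, so $\nabla_\eta$ should be understood as controlling the directional derivative along $\eta-\eta'$; since this direction is tangent to the simplex (each block sums to zero), the Cauchy--Schwarz step is unaffected by any ambiguity in the ambient gradient.
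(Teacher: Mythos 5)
Your proof is correct and follows essentially the same route as the paper's: a mean-value/fundamental-theorem estimate from the gradient hypothesis, a sum over the $|\U|$ actions to cancel the $1/|\U|$ normalization, and the norm comparison $\norm{\cdot}_2\le\norm{\cdot}_1=2\,\dtv{\cdot,\cdot}$ on each block. The only cosmetic difference is that you pass from $\log\phi$ to $\phi$ at the gradient level via $\nabla_\eta\phi=\phi\,\nabla_\eta\log\phi$ together with $\phi\le 1$, whereas the paper first establishes Lipschitz continuity of $\log\phi$ and then transfers it to $\phi$ by a second mean-value argument on $e^z$ using $e^{\max(a,b)}\le 1$ --- the same underlying fact applied in a different order.
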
 
Similarly, we can define Lipschitz continuous policies for the Red team with constant $L_\psi.$
This idea of penalizing the gradient of the policies was introduced in robotics to promote smooth and stable policies in order to aid sim-to-real transfer~\citep{chen2024learning, shin2025spectral}.
%
%
%

To address the second challenge, we require a filter that can estimate the opponent distribution at every time-step for each agent (e.g., $i\in[N_1]$ obtains an estimate of the Red team distribution at time $t$ given by $\hat\nu^{N_2}_{i,t}$) with accuracy guaranteed within a bounded tolerance ensuring agent actions and overall performance~\eqref{eq:performance-policy-team-pair} remain within acceptable limits. 
Note that we formulate the problem from the Blue team’s perspective. 
The results extend naturally to the Red team's perspective.
Let the full-information and estimated MF trajectories be 
$\textstyle\{\M^{N_1}, \N^{N_2}\}$ and  $\textstyle\{\hat\M^{N_1}, \hat\N^{N_2}\}$, respectively.
We measure estimator performance for gradient-regularized MF-MAPPO (GR-MF-MAPPO) {team} policies $(\phi^*_t, \psi^*_t)$  via the cumulative regret between fully and partially observable MF rewards as:
{\small{\begin{equation}\label{eqn:estimator-eval}
\Delta{J}(\phi^*_t, \psi^*_t) = 
\mathbb{E}_{{\phi^*, \psi^*}}
\Bigg[\Big|\sum_{t=0}^{T} r_t(\M^{N_1}_t, \N^{N_2}_t) - \sum_{t=0}^{T} r_t(\hat\M^{N_1}_t, \hat\N^{N_2}_t)\Big|  \Bigg].
\end{equation}}}

In fact, any $\epsilon$-accurate estimator can be utilized during the deployment of GR-MF-MAPPO.
\begin{restatable}{proposition}{mainpropthree}\label{thm:estimator-eval-dyn}

Consider a given $\epsilon$-accurate estimator, i.e., $\dtv{\hat\nu^{N_2}_{i,t}, \hat\nu^{N_2}_t} < \epsilon$, for all $i, t$, where $\hat\nu^{N_2}_t$ is the true opponent MF at time $t$ and $\epsilon>0$. For the identical team-policy pair $(\phi^*_t, \psi^*_t)$ obtained via gradient-regularized MF-MAPPO and deployed using this estimator, the cumulative regret satisfies $\textstyle \Delta{J}(\phi^*_t, \psi^*_t) \;\leq\; K \epsilon + \mathcal{O}(1/\sqrt{\underline{N}})$ for some constant $K > 0$.
\end{restatable}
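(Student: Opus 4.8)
The plan is to reduce the cumulative-regret bound to a one-step contraction-type recursion on the discrepancy between the two empirical-distribution flows, and then close that recursion using (i) reward Lipschitzness (Assumption~\ref{assmpt:lipschitiz-model}), (ii) the gradient-regularized policy Lipschitzness in the mean-field (Proposition~\ref{prop:lipschitz-proposition}), and (iii) a finite-population concentration estimate. First I would invoke reward Lipschitzness to pass from rewards to distributions: since $r_t$ is $L_r$-Lipschitz in total variation and $|\sum_t a_t - \sum_t b_t| \le \sum_t |a_t - b_t|$,
\begin{equation}
\Delta J(\phi^*_t, \psi^*_t) \leq L_r \sum_{t=0}^T \mathbb{E}\big[\dtv{\M^{N_1}_t, \hat\M^{N_1}_t} + \dtv{\N^{N_2}_t, \hat\N^{N_2}_t}\big].
\end{equation}
It then suffices to bound $D_t := \mathbb{E}[\dtv{\M_t,\hat\M_t} + \dtv{\N_t,\hat\N_t}]$, where $\{\M,\N\}$ is the full-information run and $\{\hat\M,\hat\N\}$ the deployed run under estimated feedback (so that $\hat\N_t$ is the realized opponent MF against which the estimator accuracy $\dtv{\hat\nu^{N_2}_{i,t},\hat\N_t}<\epsilon$ holds). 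I would place both runs on a common probability space via a synchronous coupling (shared initial configuration and coupled per-agent randomness), so that $D_0 = 0$.

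For the induction step I would condition on the time-$t$ information and decompose each next-step empirical distribution into its conditional mean plus a fluctuation. Let $\bar\M^A_{t+1}$ denote the mean-field pushforward of $\M_t$ under the exact-feedback policy $\phi^*_t(\cdot\mid\cdot,\M_t,\N_t)$ and kernel $f_t(\cdot\mid\cdot,\cdot,\M_t,\N_t)$, and let $\bar\M^B_{t+1}$ be the analogous pushforward of $\hat\M_t$ using the agent-specific estimated inputs and kernel $f_t(\cdot\mid\cdot,\cdot,\hat\M_t,\hat\N_t)$. A triangle inequality gives
\begin{equation}
\dtv{\M_{t+1},\hat\M_{t+1}} \leq \dtv{\M_{t+1},\bar\M^A_{t+1}} + \dtv{\bar\M^A_{t+1},\bar\M^B_{t+1}} + \dtv{\bar\M^B_{t+1},\hat\M_{t+1}}.
\end{equation}
The outer two terms are deviations of an empirical distribution from its conditional mean; since, conditioned on the time-$t$ states, the agents transition conditionally independently, a standard concentration estimate (the same propagation-of-chaos bound underpinning the $\mathcal{O}(1/\sqrt{\subN})$ guarantees already invoked for Theorems~\ref{thm:finite-population-training} and~\ref{thm:extend-to-many}) yields $\mathbb{E}[\dtv{\M_{t+1},\bar\M^A_{t+1}}] \le C/\sqrt{N_1}$, and likewise for the Red team, contributing the $\mathcal{O}(1/\sqrt{\subN})$ term.

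For the deterministic middle term I would first replace the heterogeneous per-agent estimates $\hat\nu^{N_2}_{i,t}$ by the common realized opponent MF $\hat\N_t$; because $\dtv{\hat\nu^{N_2}_{i,t},\hat\N_t}<\epsilon$ for every $i$, the $L_\phi$-Lipschitzness of the policy from Proposition~\ref{prop:lipschitz-proposition} bounds the induced change in each agent's action distribution, hence in the averaged pushforward, by $O(L_\phi\epsilon)$. I would then compare the two common-input pushforwards using $L_\phi$-Lipschitzness of the policy and $L_f$-Lipschitzness of the kernel, together with the starting gap $\dtv{\M_t,\hat\M_t}$, to obtain $\dtv{\bar\M^A_{t+1},\bar\M^B_{t+1}} \le C_1 D_t + C_2\epsilon$. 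Summing the Blue and Red recursions gives $D_{t+1} \le (1+C_1)D_t + C_2\epsilon + C_3/\sqrt{\subN}$; unrolling from $D_0=0$ over the finite horizon $T$ yields $D_t \le K'\epsilon + \mathcal{O}(1/\sqrt{\subN})$, and substituting into the reward bound gives $\Delta J(\phi^*_t,\psi^*_t) \le K\epsilon + \mathcal{O}(1/\sqrt{\subN})$ with $K$ depending on $L_r, L_\phi, L_\psi, L_f$, and $T$.

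The main obstacle I expect is the middle-term estimate, specifically handling the agent-heterogeneous estimated feedback: each agent conditions its policy on a slightly different $\hat\nu^{N_2}_{i,t}$, so $\bar\M^B_{t+1}$ is an average of distinct pushforward maps rather than a single mean-field operator. Cleanly separating the estimation error (the $\epsilon$ contribution, which crucially relies on the gradient-regularized Lipschitz bound of Proposition~\ref{prop:lipschitz-proposition}---without it $L_\phi$ is uncontrolled and the argument collapses) from the trajectory-divergence error (the $D_t$ contribution), while also accounting for the transition kernel's own dependence on the differing current MFs, and then assembling these into a recursion with explicit horizon-dependent constants, is the crux. The concentration terms are comparatively routine, as they mirror the finite-population approximation arguments already used earlier in the paper.
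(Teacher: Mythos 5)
Your proposal matches the paper's proof essentially step for step: the reduction via reward Lipschitzness to a sum of expected total-variation gaps, the triangle-inequality decomposition into two finite-population concentration terms (each $\mathcal{O}(1/\sqrt{\subN})$ via the mean-field approximation corollary) plus a deterministic pushforward-comparison term, the split of that middle term into an estimation-error piece controlled by the gradient-regularized policy Lipschitzness (contributing $O(L_\phi\epsilon)$) and a trajectory-divergence piece controlled by $L_f$ and $L_\phi$ times the current gap, and the final unrolling of the resulting affine recursion from a zero initial gap over the finite horizon. The paper's Step 1 and Step 2 are exactly your recursion derivation and geometric-sum unrolling, so the approach is the same and correct.
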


We emphasize that MF-MAPPO is modular and {policy inputs} can be swapped with different estimates (using estimation/prediction algorithms) and still have good performance.
Gradient regularization is key to ensure that minor errors in estimation do not result in extreme changes in MF trajectories and performance.
%

In lieu of Proposition~\ref{thm:estimator-eval-dyn}, we propose a communication network-based decentralized estimation filter, namely, Dynamic-Projected Consensus (D-PC).
It is an extension of the control-theoretic constrained consensus problem~\citep{nedich2014lyapunov} and addresses the shortcomings of the estimation algorithm proposed in~\cite{benjamin2025networkedcommunicationmeanfieldgames}, namely, estimation under limited communication rounds, and ensuring valid estimates in the presence of errors.
Following the connected graph topology used in MFGs~\citep{benjamin2025networkedcommunicationmeanfieldgames}, we define a state-based visibility graph $\G_t^{\mathrm{viz}}$ and team-based communication graphs $\mathcal{G}_{{\rm Blue},t}^{\mathrm{com}}$ and $\mathcal{G}_{{\rm Red},t}^{\mathrm{com}}$.  
We also define the projection operator $\Omega_{\mathcal{R}(x)}[\eta] \triangleq\arg\min_{\omega\in\R(x)}\|\eta-\omega\|_2$ for $\eta, \omega\in\mathbb{R}^{|\Y|}$ where $\R(x)$ is a closed and convex set.
We assume that all agents in the same state receive the same information, so naturally they have the same estimate, i.e., all Blue agents at state $x\in\X$ at time $t$ have the same estimate $\hat\nu^{N_2}_{x,t}$ of the Red team MF. 
We consider two time scales: $t$ for system dynamics~\eqref{ft:DEF} and $\tau$ for communication rounds.
At time $t$, and $\tau=0$, each Blue agent at state $x \in \X$ holds a belief $\hat\nu^{\tau=0}_{x,t}$ consistent with $\mathcal{G}_t^{\mathrm{viz}}$.  
At communication round $\tau$, agents share estimates $\hat\nu^{\tau-1}_{x,t}$ with neighbors defined by $\mathcal{G}_{{\rm Blue},t}^{\mathrm{com}}$ and perform for $R_{\textrm{com}}$ communication rounds: a) weighted-average consensus; and b) a projection onto a closed and convex constraint set $\R(x)$.
The set $\R(x)$ combines the Red team's MF components known with certainty by the Blue agents at state $x$—i.e., the observable states given by $\mathcal{G}_t^{\mathrm{viz}}$—with those that must be estimated. 
$\R(x)$ guarantees that operations such as information aggregation, averaging, or distributed communication do not alter the parts of the distribution that are known with certainty.
{See Appendix~\ref{AppendixDPCDeets} for detailed definitions.}

\begin{restatable}{theorem}{maintheoremfour}\label{thm:estimator-eval-dyn-dpc}

D-PC satisfies Proposition~\ref{thm:estimator-eval-dyn} with  $\epsilon = \mathcal{O}\!\left(e^{-cR_{\mathrm{com}}}\right)$ with $c > 0$. 
\end{restatable}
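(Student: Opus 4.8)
The statement couples two pieces: that D-PC is an $\epsilon$-accurate estimator in the sense of Proposition~\ref{thm:estimator-eval-dyn}, and that its accuracy obeys $\epsilon = \mathcal{O}(e^{-cR_{\mathrm{com}}})$. Since Proposition~\ref{thm:estimator-eval-dyn} already converts any $\epsilon$-accurate estimator into the regret bound $\Delta J \leq K\epsilon + \mathcal{O}(1/\sqrt{\underline{N}})$, the entire task reduces to bounding the estimation error $\dtv{\hat\nu^{R_{\mathrm{com}}}_{x,t}, \nu_t}$ produced by the $R_{\mathrm{com}}$ consensus rounds at each fixed system time $t$, uniformly in $t$ and in the state $x$. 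The plan is therefore to write the D-PC update compactly as a projected-consensus recursion $\hat\nu^{\tau}_{x,t} = \Omega_{\R(x)}\big[\sum_{x'} W^t_{xx'}\,\hat\nu^{\tau-1}_{x',t}\big]$, where $W^t$ is the doubly-stochastic weight matrix associated with the connected communication graph $\mathcal{G}_{\mathrm{Blue},t}^{\mathrm{com}}$, and to establish geometric convergence of this recursion to the true opponent mean-field $\nu_t$ (denoted $\hat\nu^{N_2}_t$ in Proposition~\ref{thm:estimator-eval-dyn}).

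\textit{Identifying the limit.} First I would show that $\nu_t$ is a common fixed point of the recursion. By construction each constraint set $\R(x)$ pins the components of the Red mean-field that the agents at state $x$ know with certainty through $\mathcal{G}^{\mathrm{viz}}_t$ to their true values, and leaves the remaining components free within the probability simplex; hence $\nu_t \in \R(x)$ for every $x$, so $\nu_t \in \bigcap_x \R(x)$ and $\nu_t$ is fixed by every projection as well as by the doubly-stochastic averaging. Under the connectivity assumption inherited from~\citep{benjamin2025networkedcommunicationmeanfieldgames}, every coordinate of $\nu_t$ is certain to at least one group of agents, so the sets $\{\R(x)\}$ jointly fix all coordinates and $\bigcap_x \R(x) = \{\nu_t\}$ is a singleton; this is exactly what forces the projected consensus to converge to $\nu_t$ rather than to an arbitrary feasible point.

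\textit{Geometric contraction.} The core estimate combines two facts. The projection $\Omega_{\R(x)}[\cdot]$ onto the closed convex set $\R(x)$ is firmly non-expansive, so it never increases the distance to the common fixed point $\nu_t$; and the primitive doubly-stochastic matrix $W^t$ induced by a connected graph contracts the disagreement (the deviation of the stacked estimates from their average) by its second-largest singular value $\lambda_t < 1$. I would bound $\lambda_t$ uniformly by some $\lambda \in (0,1)$ using a uniform spectral gap over the finitely many graph configurations, then combine the two to show that one projected-consensus round contracts $\max_x \norm{\hat\nu^{\tau}_{x,t} - \nu_t}_2$ by the factor $\lambda$, in the style of the constrained-consensus Lyapunov analysis of~\citep{nedich2014lyapunov}. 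Iterating over $R_{\mathrm{com}}$ rounds yields $\max_x \norm{\hat\nu^{R_{\mathrm{com}}}_{x,t} - \nu_t}_2 \leq C\,\lambda^{R_{\mathrm{com}}}$ with $C$ controlled by the diameter of the initial beliefs; converting from the Euclidean norm to total variation (equivalent on the finite simplex) and setting $c = -\log\lambda > 0$ gives $\dtv{\hat\nu^{R_{\mathrm{com}}}_{x,t}, \nu_t} \leq C'\,e^{-cR_{\mathrm{com}}}$, i.e. $\epsilon = \mathcal{O}(e^{-cR_{\mathrm{com}}})$. Feeding this $\epsilon$ into Proposition~\ref{thm:estimator-eval-dyn} completes the proof.

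\textit{Main obstacle.} The delicate step is the contraction argument with interleaved projections: unlike plain linear consensus, each projection can move the iterate off the consensus subspace, so one cannot simply exponentiate $W^t$. The argument must show that projecting onto sets that all share the common point $\nu_t$ does not undo the averaging progress — concretely, that firm non-expansiveness together with the spectral gap still yields a strict per-round contraction toward the singleton intersection. Handling the time-varying (per-round and per-$t$) graph topology while guaranteeing a uniform $\lambda$, and verifying that the visibility and communication graphs indeed render $\bigcap_x \R(x)$ a singleton, are the remaining points that require care.
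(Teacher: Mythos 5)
Your high-level plan coincides with the paper's: fix the system time $t$, view D-PC as an alternating average-and-project recursion, prove geometric convergence of the per-round estimates to the true opponent mean-field, set $\epsilon=\mathcal{O}(e^{-cR_{\mathrm{com}}})$, and feed this into Proposition~\ref{thm:estimator-eval-dyn}. The genuine gap is in the mechanism you propose for the geometric rate. You attribute the per-round contraction of $\max_x\norm{\hat\nu^{\tau}_{x,t}-\nu_t}_2$ to the second-largest singular value $\lambda$ of the doubly-stochastic matrix $W$ combined with firm non-expansiveness of the projections. This cannot work as stated: a doubly-stochastic $W$ preserves the average of the estimates, so if all states already agree on a common \emph{wrong} value the averaging step does nothing, and firm non-expansiveness of $\Omega_{\R(x)}$ only guarantees that the projections do not \emph{increase} the distance to $\nu_t$ --- it yields no strict decrease. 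In that configuration all progress toward the intersection must come from how far each projection moves the iterate relative to its distance from $\bigcap_x\R(x)$, which is precisely what the \emph{set-regularity constant} quantifies. The paper supplies exactly this missing ingredient: it establishes uniform regularity of the collection $\{\R(x)\}_{x\in\X^o}$ with constant $\kappa\geq 1$ (Proposition~\ref{prop:set-regularity}, via Proposition~1 of \cite{nedich2014lyapunov}) and then invokes the Lyapunov analysis of Theorems~6--8 there, obtaining the rate $\rho_x = 1-\theta^2/\bigl(4|\X^o|\,\mathrm{diam}(\G^{\mathrm{com}}_{\mathrm{Blue}})(\kappa+1)^2\bigr)$, which depends on the smallest positive weight $\theta$, the graph diameter, and $\kappa$ --- not on the spectral gap of $W$. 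Without the regularity step your argument stalls exactly where you flag the ``main obstacle.''

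A secondary remark: you are right that the scheme only identifies $\nu_t$ if the limit of the projected consensus is the true mean-field, and you make explicit a singleton-intersection condition that the paper leaves implicit. However, you derive it from connectivity of the communication graph, which is the wrong hypothesis --- connectivity (Assumption~\ref{assumpt:graph-connectedness}) guarantees consensus on \emph{some} point of $\bigcap_x\R(x)$, whereas pinning every coordinate of $\nu_t$ requires the visibility neighborhoods $\N_t^{\mathrm{viz}}(x)$ to jointly cover the relevant Red states, a condition on $\G_t^{\mathrm{viz}}$ rather than on $\G^{\mathrm{com}}_{{\rm Blue},t}$.
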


\section{Numerical Experiments}\label{sec:numerical-exp}
%

{In this section, we evaluate MF-MAPPO across large-population scenarios using our custom-made benchmark simulation platform, \texttt{MFEnv}.
Built as an extension of Gymnasium~\citep{towers2024gymnasium}, \texttt{MFEnv} is developed specifically to facilitate research in MFTGs, supporting both finite-agent simulations and oracle-based infinite-population models.
%
%
Unlike existing toolkits, \texttt{MFEnv} includes aggregate reward metrics, policy-versus-policy evaluation, and flexible APIs for custom \textit{mean-field environments} that adhere to MF dynamics, rewards and information structures.}
%
%
We showcase MF-MAPPO's efficacy on {three} representative environments (1) a constrained-action variant of the classical rock–paper–scissors game~\citep{raghavan1994zero}, enabling validation against analytically computed equilibria, (2) a complex battlefield setting where Blue and Red teams engage in attack–defense tasks with higher-dimensional state and action spaces, requiring sophisticated team-level coordination (3) a ZS-MFTG between a virus and a population modeling a SIS-framework.
Additional results and environments (not limited to ZS-MFTGs) in Appendix~\ref{appendix-additional-results}.

{To the best of our knowledge, DDPG-MFTG~\citep{shao2024reinforcement} is the only existing algorithm explicitly designed for mixed collaborative-competitive mean-field team problems. 
Accordingly, we mainly benchmark against DDPG-MFTG as well as several ``house-made" mean-field adaptations of established MARL methods:
namely, independent PPO (MF-IPPO) and two variants of MAPPO—parameter sharing (MAPPO-PS) and centralized critic (MAPPO-CC)~\citep{yu2022surprising}.
Centralized Q-learning methods such as QMix~\citep{rashid2020monotonic}, MADDPG~\citep{lowe2017multi}, and FACMAC~\citep{peng2021facmacfactoredmultiagentcentralised} are not suitable baselines in our setting because their centralized action-value functions depend on the joint action and full system state, making them intractable in large-population regimes.
Moreover, these approaches yield deterministic local policies, which tend to perform poorly since mixed strategies produce the heterogeneous team behaviors essential for mission success, as shown in Figure~\ref{fig:map3}, making such comparisons inherently unfair.}

%
%
%
%


\textbf{Constrained Rock-Paper-Scissors (cRPS).} 
The state space of each individual agent is $\mathcal{S} = \{\texttt{R,P,S}\}$, representing rock, paper, and scissors, respectively. 
%
%
We consider a non-trivial restriction of the action space to $\mathcal{A} = \{\texttt{CW}, \texttt{Stay}\}$ allowing agents to either move clockwise ( $\texttt{R} \rightarrow \texttt{P}$, $\texttt{P} \rightarrow \texttt{S}$, $\texttt{S} \rightarrow \texttt{R}$) or remain idle, respectively.
%
 We assume deterministic transitions, where each action leads to a unique next state \textit{deterministically}. 
At each time step $t$, the Blue (Red) team receives a team reward $r_{t}(\mu_t, \nu_t) = {\mu}_t\t A {\nu}_t$ ($-{\mu}_t\t A {\nu}_t$)  where $A$ is the standard RPS payoff matrix. 
{Table~\ref{table:crps-eval} reports performance comparisons. MF-MAPPO achieves rewards closest to the analytical Nash value $(-1/3)$. 
Its minimally-informed critic preserves local-state privacy while substantially reducing critic dimensionality, resulting in a faster and more efficient algorithm under identical operating conditions.}
Figure~\ref{fig:crps-simplex-combined}(a) compares trajectories from MF-MAPPO and DDPG-MFTG. 
%
%
%
%
%
%
We see that MF-MAPPO successfully reaches the equilibrium distribution.
%
%
By contrast, DDPG-MFTG diverges, relying on a mean-field oracle, which is valid only in the infinite-population limit, and “central players” that map mean-field distributions to deterministic policies without clipping or regularization, making it unstable. 
Unlike multi-agent DDPG extensions (e.g., MADDPG), which consider other teams’ local policies, DDPG-MFTG conditions only on its own, limiting inter-team awareness.
Figure~\ref{fig:crps-simplex-combined}(b) shows MF-MAPPO’s scalability, where larger populations reduce noise and variance, aligning with Theorem~\ref{thm:extend-to-many}. 
%
%

\begin{table}[t!]
\centering
\caption{Performance comparison for cRPS}
\resizebox{0.75\linewidth}{!}{\begin{tabular}{||c | c c c c ||} 
 \hline
 Algorithm  & Critic  Input & Average Reward & NE Attained? & Training Time \\ 
 \hline\hline
 
  MF-IPPO  & $\{(x^{N_1}_{i,t}, \mu_t)\}_{i=1}^{N_1}$ & -0.340 & \cmark & 2:45:11\\ 
 \hline
 
 MAPPO-PS  & $\{(x^{N_1}_{i,t}, \mu_t, \nu_t)\}_{i=1}^{N_1}$ & -0.313 & \cmark & 2:47:54\\ 
 \hline
 
 MAPPO-CC  & $(\x^{N_1}_t, \y^{N_2}_t, \mu_t, \nu_t)$ & -0.342 & \cmark & 3:20:46\\ 
 \hline
 
 DDPG-MFTG  & $(\phi_t, \mu_t, \nu_t)$ & 3.774 & \xmark & 60:49:35\\ 
 \hline
 MF-MAPPO  & $(\mu_t, \nu_t)$ & \textbf{-0.331} & \cmark & \textbf{2:17:15}\\ 
 \hline

\end{tabular}}
\label{table:crps-eval}
\end{table}

\begin{figure}[t]
    \centering
            \includegraphics[width=0.7\linewidth]{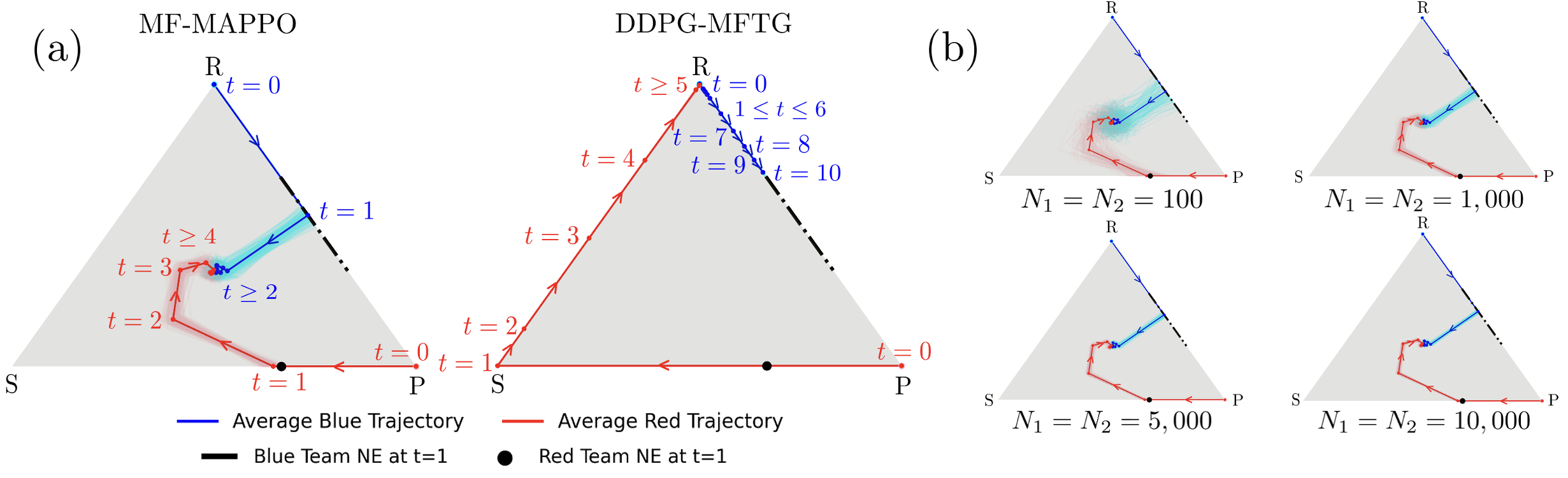}
    \caption{(a) 150 initializations of $\mu_{t=0} = [1, 0, 0]\t$ and $\nu_{t=0} = [0, 1, 0]\t$ for cRPS; $N_1=N_2=1,000$ (b) Deploying MF-MAPPO trained on $N_1=N_2=1,000$ to varying team sizes.}
    \label{fig:crps-simplex-combined}
    
\end{figure}

\textbf{Battlefield Game.} 
To fully test the capability of MF-MAPPO on a more complex scenario, we propose a {grid-based} battlefield game where an individual agent's dynamics is highly coupled with both teams' distributions. 
%
%
The Blue agents aim is to reach their targets without being deactivated, while the Red agents learn to guard them.
Deactivation occurs when one of the opponents holds a numerical advantage within a cell, incentivizing both teams to aggregate to reduce the risk of being deactivated by a numerically superior opponent. 
The Blue team’s reward depends on the fraction of agents active at the target, the Red team’s reward follows from the zero-sum structure, and (to avoid degeneracy) the Red agents are restricted from entering the target.
{All experiments use \(N_1 = N_2 = 100\) agents on grids with varied targets (lilac) and obstacles (black).}
{We compare MF-MAPPO with the baselines by evaluating them head-to-head on a $4 \times 4$ grid with full MF information (Table~\ref{table-battlefield-eval}).
Teams trained with MF-MAPPO demonstrate consistently superior performance.
Fixing the Red team policy (row) shows that the MF-MAPPO Blue team achieves the highest rewards, and fixing the Blue team policy (column) shows that the MF-MAPPO Red team performs the strongest as the minimizing team.
These results highlight the benefits of using a minimally informed critic together with a shared-team actor: the resulting networks are smaller, more scalable, preserve private information, and alleviate credit-assignment challenges.}
%

%
%
%
\begin{figure}[t]
    \centering
    \includegraphics[width=\linewidth]{battlefield-figures/battle-comparison-more.png}
    \caption{{I. Training curve for Battlefield on a 4x4 grid (Blue team)}; II. Example configuration; III. Comparing $\dtv{\cdot}$ for D-PC and Benchmark estimator for different $R_\textrm{com}$.}
    \label{fig:battle-comparison}
   
\end{figure}
Figure~\ref{fig:battle-comparison}II shows MF-MAPPO Red agents successfully cover {both corridors} and deactivate several Blue attackers II(a).
%
%
%
Panels II(a) and II(b) highlight that DDPG-MFTG Blue agents do not aggressively pursue the target, illustrating their tendency to passively seek zero-reward outcomes {(also reflected in Panel I)} rather than take goal-directed actions, unlike II(c) where MF-MAPPO agents exhibit coordinated maneuvering, forming coalitions to reach the target. 
While cases I and II utilize complete observability, case III evaluates the proposed D-PC estimator against the estimator in \cite{benjamin2025networked} (Benchmark) under a gradient-penalized MF-MAPPO policy when Red has full information and Blue estimates Red’s distribution.
Both estimators yield comparable total variation errors relative to the full-information case, with D-PC showing advantages under low communication budgets ($R_{\textrm{com}}<20$).
One can trivially show that the Benchmark satisfies Proposition~\ref{thm:estimator-eval-dyn} with $\epsilon=1-\mathcal{O}(R_{\textrm{com}}/|\X|)$ and assumes uniform estimates for unobserved states, which degrades estimation accuracy under limited communication. 
It also relies on accurate information from neighbors to ensure validity of estimates. 
In contrast, D-PC exchanges inexact information but applies state-dependent corrections (projection), preserving privacy and robustness.

\begin{table}[!b]
\centering
\caption{Performance evaluation for Battlefield}
\resizebox{0.8\linewidth}{!}{\begin{tabular}{||c ||c c c c c||} 
\hline
 & 
\textcolor{blue}{MF-IPPO} &
\textcolor{blue}{MAPPO-PS} &
\textcolor{blue}{MAPPO-CC} &
\textcolor{blue}{DDPG-MFTG} &
\textcolor{blue}{MF-MAPPO} \\
\hline\hline

\textcolor{red}{MF-IPPO} &
61.54 & 73.85 & {70.60} & 0.00 & \textbf{86.26} \\
\hline
\textcolor{red}{MAPPO-PS} &
55.82 & 72.78 & {71.64} & 0.00 & \textbf{86.22} \\
\hline
\textcolor{red}{MAPPO-CC} &
{51.60} & {68.81} & {65.58} & 0.00 & \textbf{82.58} \\
\hline
\textcolor{red}{DDPG-MFTG} &
65.26 & 75.14 & {77.21} & 0.00 & \textbf{85.95} \\
\hline
\textcolor{red}{MF-MAPPO} &
\textbf{36.10} & \textbf{54.43} & \textbf{62.00} & 0.00 & \textbf{76.26 }\\
\hline
\end{tabular}}
\label{table-battlefield-eval}
\end{table}

In Figure~\ref{fig:map3}, the Red team faces a dilemma in deciding which target to defend, while the Blue team exploits this ambiguity. 
Due to DDPG-MFTG’s high computational cost and frequent network updates, it is excluded from our analysis.  
With no other baselines for such large-scale complex MFTGs, we qualitatively assess MF-MAPPO's performance.  
{Figures~\ref{fig:map3}I–II illustrate how identical policies can generate heterogeneous team behaviors, with Blue adapting target selection and Red reallocating defenses, highlighting the flexibility of the mean-field approximation.}
%
%
%
Furthermore, D-PC again outperforms the Benchmark under limited communication (III) ($R_{\textrm{com}}<10$) and performs competitively otherwise.
{Cumulative rewards (IV) show only small deviations, consistent with Proposition~\ref{thm:estimator-eval-dyn} and Theorem~\ref{thm:estimator-eval-dyn-dpc}, confirming that agents can rely on local observations with minimal communication, and that performance improves as communication bandwidth increases.
We further empirically evaluate the effect of communication quality on opponent-estimate performance (V).
Fixing $R_\textrm{com}=7$, injecting zero-mean Gaussian noise with increasing variance produces larger deviations in cumulative rewards; however, these remain within $5\%$ tolerance, due to the projection step after the average-consensus update, ensuring robust performance.
}

\begin{figure}[t]
    \centering
    \includegraphics[width=\linewidth]{battlefield-figures/8x8-comparison-new.png} 
    \caption{I. Red is concentrated; 30\%  Blue are at {the bottom, rest are at the top} II. Blue is evenly split, Red is concentrated III. Comparing $\dtv{\cdot}$ for D-PC and the Benchmark estimator for different $R_{\textrm{com}}$ IV. \% error in cumulative rewards under varying communication bandwidths V. \% error in cumulative rewards under noisy communication $R_{\textrm{com}}=7.$}
    \label{fig:map3}
\end{figure}

{\textbf{Epidemiology.} In this scenario, the Red agents emulate a virus and can infect the Blue population; infected Blue individuals can further transmit the disease.
Blue agents aim to maximize the size of their healthy population by avoiding infected peers.
Because the virus cannot be eliminated, the infected Blue agents must visit a hospital state, where healing occurs with some probability. 
Applying MF-MAPPO to this ZS-MFTG yields intuitive policies: decongestion-like behavior in which agents spread out to reduce contact (Figure~\ref{fig:virus}). 
Meanwhile, the Red team actively tracks Blue agents to infect them and promptly positions itself around the hospital, while infected Blue agents strategically navigate toward the hospital for treatment.}
\begin{figure}[t]
    \centering
    \includegraphics[width=0.9\linewidth]{battlefield-figures/virus-prelim.png} 
    \caption{Example configurations of SIS-Epidemiology ZS-MFTG; green cell is the hospital.}
    \label{fig:virus}
\end{figure}

\section{Conclusion}
We introduced MF-MAPPO, a novel MARL algorithm for large-population competitive team games that leverages finite mean-field approximation.
With a minimally informed critic and shared team actor, MF-MAPPO scales efficiently while retaining performance, as shown against baselines such as DDPG-MFTG on cRPS and battlefield scenarios using the developed platform \texttt{MFEnv}.
%
%
Despite shared policies, heterogeneous sub-population behaviors emerge, confirming that mean-field approximations do not hinder performance.
We showed that MF-MAPPO naturally extends to partial observability via a simple gradient-regularized training scheme, and proposed D-PC, a decentralized mean-field estimator that ensures accuracy and strong performance when integrated with it. 
Empirically, D-PC outperforms baselines under limited communication.
Limitations include MF-MAPPO’s scaling with state dimensionality, which motivates future work on dimensionality reduction (e.g., kernel embeddings).
Additional directions include extending D-PC to noisy settings and to more general, time-varying network topologies.



\bibliographystyle{apalike}
\bibliography{references}

\newpage
\appendix
\setcounter{equation}{0}
\renewcommand{\theequation}{\thesection.\arabic{equation}}
\renewcommand{\thefigure}{\thesection.\arabic{figure}} 
\setcounter{figure}{0} 


\begin{algorithm}[ht]\label{alg:mf-mappo}
\caption{Mean-Field Multi-Agent Proximal Policy Optimization (\textbf{MF-MAPPO})}
\begin{algorithmic}
    \State \textbf{Initialize:} NN parameters $\{\theta_{\text{Blue}}, \zeta_{\text{Blue}}\}$ and $\{\theta_{\text{Red}}, \zeta_{\text{Red}}\}$; step size sequences $\{\alpha_m\}$ and $\{\beta_m\}$; entropy decay sequence $\{\omega_m\}$

    \For{$i = 1,2,\ldots  $} 
    \State $(\phi_{\theta_{\text{Blue}}^{\text{old}}}, \psi_{\theta_{\text{Red}}^{\text{old}}}) \leftarrow (\phi_{\theta_{\text{Blue}}}, \psi_{\theta_{\text{Red}}})$
        \For{$t = 0, 1, \ldots, T_{\text{rollout}}$} 
            \State Sample joint actions
            \State $u_{i,t} \sim \phi_{\theta_{\text{Blue}}^{\text{old}}}(x_{i,t}, \mu_t^{N_1}, \nu_t^{N_2})$,
            $v_{j,t} \sim \psi_{\theta_{\text{Red}}^{\text{old}}}(y_{j,t}, \mu_t^{N_1}, \nu_t^{N_2})$
            \State Step environment according to kernels $(f_t, g_t)$
            \State Collect samples $(\bfx_{t+1}^{N_1}, \bfy_{t+1}^{N_2}, \mu^{N_1}_{t+1}, \nu^{N_2}_{t+1}, \bfu_t^{N_1}, \bfv_t^{N_2}, r_t)$
        \EndFor
        
        \For{$K$ epochs} 
            \State Update $\{\theta_{\text{Blue}}, \zeta_{\text{Blue}}\}$ and $\{\theta_{\text{Red}}, \zeta_{\text{Red}}\}$ using (\ref{eq:critic network}-\ref{eq:actor network})
        \EndFor
    \EndFor  
    \State \textbf{Return: $(\phi_{\theta_{\text{Blue}}}, \psi_{\theta_{\text{Red}}})$}
\end{algorithmic}
\end{algorithm}

\section{Details of D-PC}\label{AppendixDPCDeets}

\begin{definition}[Visibility Graph]
\label{def:viz-graph}
The bipartite state-based visibility graph at time \( t \), denoted by \( \mathcal{G}_t^{\mathrm{viz}} = (\X, \Y, \mathcal{E}_t^{\mathrm{viz}}) \), is an undirected graph where (i) The vertex sets consist of the states $x\in\X$ and $y\in \Y$; (ii) An undirected edge \( (x, y)  \in\mathcal{E}_t^{\mathrm{viz}} \) exists if and only if all Blue agents at state \( x \) are mutually visible with all Red agents at state \( y \) at time \( t \); that is, every Blue agent at state \( x \) is visible to every Red agent at state \( y \) at time $t$.

\end{definition}
Before describing the communication architecture, we define $\X^{o}_t \subseteq\X$ such that $\hat\mu_t(x) > 0 $ if and only if $ x\in\X^{o}_t$.
\begin{definition}[Communication Graph]
\label{def:comms-graph}
The state-based {communication graph} at time \( t \), denoted by \( \mathcal{G}_{{\rm Blue},t}^{\mathrm{com}} = (\X^{o}_t, \mathcal{E}_t^{\mathrm{com}}) \), is an undirected graph where (i) The vertex set $\X^{o}_t$ consists of the states \( x \in \X^{o}_t \); (ii) An undirected edge \( (x_p, x_q) \in \mathcal{E}_t^{\mathrm{com}} \) exists if and only if all agents at state \( x_p \) can communicate with all agents at state \( x_q \) at time \( t \).
\end{definition}

We denote the set of neighbors of vertex $v$ of $\mathcal{G}_t^{\mathrm{viz}}$ and $\mathcal{G}_{{\rm Blue},t}^{\mathrm{com}}$ respectively, as $\N_t^{\mathrm{viz}}(v)$ and $\N_{{\rm Blue},t}^{\mathrm{com}}(v)$.
$\Y^{o}_t \subseteq\Y$, \( \mathcal{G}_{{\rm Red},t}^{\mathrm{com}}\) and $\N_{{\rm Red},t}^{\mathrm{com}}(v)$ are defined similarly.
The following assumption on the topology of the communication graph will simplify the mean-field estimator design problem.

\begin{assumption}
    \label{assumpt:graph-connectedness}
    We assume that $\mathcal{G}_{{\rm Blue},t}^{\mathrm{com}}$ and $\mathcal{G}_{{\rm Red},t}^{\mathrm{com}}$ are connected for all $t$.
\end{assumption}

At time $t$, we assume that each agent $i\in[N]$ observes the population distribution of its opponent's visible states $y\in \mathcal{N}_t^{\mathrm{viz}}(x_{i,t})$. 
Thus, from the perspective of agent $i$, the ED $\hat\nu^{N_2}_{i,t}$ is defined on the constrained simplex $\textstyle \hat\nu^{N_2}_{i,t} \in \R(x_{i,t}) \triangleq \{\eta \in \P(\Y)~~|~~ \eta(z) =  \hat\nu^{N_2}_t(z), ~~\forall~~ z \in \mathcal{N}_t^{\mathrm{viz}}(x_{i,t})\}.$ given by 
$\R(x_{i,t})$ combines the Red team's mean-field components known with certainty by Blue agent $i$—i.e., the observable states \( \mathcal{N}_t^{\mathrm{viz}}(x_{i,t}) \)—with those that must be estimated. 

\begin{proposition}\label{prop:convex-constraints}
For all $x\in\X$,  $\mathcal{R}(x)$ is a closed and convex set. 
\end{proposition}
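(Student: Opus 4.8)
The plan is to show that $\mathcal{R}(x)$ is the intersection of the probability simplex $\P(\Y)$ with a collection of affine (hyperplane) constraints, and then invoke the standard facts that the simplex is closed and convex, that each affine equality constraint defines a closed convex set, and that arbitrary intersections of closed convex sets are closed and convex. Recall from Definition~\ref{def:viz-graph} and the surrounding text that, from the perspective of Blue agent $i$ at state $x$,
\begin{equation}
\label{eq:Rset-def-recall}
\mathcal{R}(x) \triangleq \big\{\eta \in \P(\Y) ~:~ \eta(z) = \hat\nu^{N_2}_t(z)~~\forall~ z \in \mathcal{N}_t^{\mathrm{viz}}(x)\big\},
\end{equation}
so the set pins down the coordinates of $\eta$ indexed by the visible opponent states to their known values, while leaving the remaining coordinates free (subject only to membership in the simplex).

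First I would establish convexity and closedness of the ambient simplex $\P(\Y) = \{\eta \in \mathbb{R}^{|\Y|} : \eta(z) \ge 0~\forall z,~\sum_{z\in\Y}\eta(z) = 1\}$. This is routine: the nonnegativity constraints $\eta(z)\ge 0$ are closed half-spaces and the normalization $\sum_z \eta(z)=1$ is a closed affine hyperplane, each of which is closed and convex, so their finite intersection $\P(\Y)$ is closed and convex. Second, I would handle each visibility constraint: for a fixed $z\in\mathcal{N}_t^{\mathrm{viz}}(x)$, the set $H_z \triangleq \{\eta\in\mathbb{R}^{|\Y|} : \eta(z) = \hat\nu^{N_2}_t(z)\}$ is the preimage of a single point under the continuous linear coordinate-projection map $\eta\mapsto\eta(z)$, hence a closed affine hyperplane, which is both closed and convex. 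Finally, writing $\mathcal{R}(x) = \P(\Y)\cap\bigcap_{z\in\mathcal{N}_t^{\mathrm{viz}}(x)} H_z$, I would conclude that $\mathcal{R}(x)$ is closed and convex as an intersection of finitely many closed convex sets.

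There is essentially no hard obstacle here, since the result is a direct structural observation; the only point that deserves a remark is verifying that the defining equality constraints in~\eqref{eq:Rset-def-recall} are \emph{consistent} with the simplex, i.e.\ that $\mathcal{R}(x)$ is actually nonempty whenever it is invoked as a projection target. This holds because the true mean-field $\hat\nu^{N_2}_t \in \P(\Y)$ itself satisfies $\hat\nu^{N_2}_t(z)=\hat\nu^{N_2}_t(z)$ for every $z$, so $\hat\nu^{N_2}_t \in \mathcal{R}(x)$ and the set is nonempty; in particular $\sum_{z\in\mathcal{N}_t^{\mathrm{viz}}(x)}\hat\nu^{N_2}_t(z) \le 1$, ensuring the remaining free coordinates can be chosen to complete a valid distribution. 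Nonemptiness is what makes the projection operator $\Omega_{\mathcal{R}(x)}[\cdot]$ well-defined (the minimizer over a nonempty closed convex set exists and is unique), which is precisely what D-PC relies on, so I would note this explicitly even though it is not strictly required for the closed-convex claim itself.
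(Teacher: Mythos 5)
Your proof is correct: writing $\mathcal{R}(x)$ as the intersection of the probability simplex $\P(\Y)$ with the coordinate-fixing hyperplanes $\{\eta : \eta(z)=\hat\nu^{N_2}_t(z)\}$ for $z\in\mathcal{N}_t^{\mathrm{viz}}(x)$, and invoking closedness and convexity of finite intersections of closed convex sets, is exactly the standard argument this result rests on. The paper actually states the proposition without any proof (treating it as immediate), and your nonemptiness remark matches the paper's own observation immediately following the proposition that $\hat\nu^{N_2}_t\in\mathcal{R}(x_{i,t})$, so your write-up simply supplies the details the authors omitted.
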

It can be easily seen that the true mean-field distribution is an element of this constrained simplex for each agent, i.e., 
$\hat\nu^{N_2}_{t} \in \R(x_{i,t})$ for all $  i\in [N_1]$ and time $t$.
The following assumption imposes uniformity of the estimate of the ED across agents in a given state.
\begin{assumption}
    \label{assumpt:state-dependent-estimate}
    We assume that $ \hat\nu^{N_2}_{i,t} = \hat\nu^{N_2}_{j,t} \triangleq \hat\nu^{N_2}_{x,t} $ for all agents satisfying $x_{i,t}^{N_1}=x_{j,t}^{N_1} = x$.
\end{assumption}

\begin{definition}\label{def:proj-op}
    Let $x\in\X$. The projection operator $\Omega_{\mathcal{R}(x)}:\mathbb{R}^{|\Y|}\to \R(x)$ where $\R(x)$ is a closed and convex set is defined as $\Omega_{\mathcal{R}(x)}[\eta] = \omega^* \triangleq\arg\min_{\omega\in\R(x)}\|\eta-\omega\|_2, ~~\eta\in\mathbb{R}^{|\Y|}$. 
    where $\|\cdot\|_2$ is the standard Euclidean norm.
\end{definition}

At each communication round, D-PC performs the following two steps:
\begin{enumerate}
    \item {Weighted average consensus:}
    \begin{equation}\label{eq:avg-consensus}
    \xi_{x, t}^{\tau} = \sum_{z \in \N_{{\rm Blue},t}^{\rm com}(x)\cup x} w^{\rm Blue}_{(x,z)} \hat\nu^{\tau-1}_{z,t}, 
    \quad \sum_z w^{\rm Blue}_{(x,z)} = 1, \quad x\in\X^o.
    \end{equation}
    
    \item {Projection onto the constraint set:}
    \begin{equation}\label{eq:projection}
    \hat\nu^{\tau}_{x,t} = \Omega_{ \mathcal{R}(x)}\left[\xi_{x, t}^{\tau}\right], ~x\in\X^o.
    \end{equation}
\end{enumerate}

\begin{assumption}\label{assumpt:doubly-stochastic-weights}
    The matrix $W^{\rm Blue} \in \mathbb{R}^{|\X^o|\times |\X^o|}$ formed by the non-negative weights is symmetric and doubly-stochastic.
    Furthermore, it respects the sparsity structure of the communication graph $\G^{\mathrm{com}}$, that is, $w^{\rm Blue}_{(x, z)} > 0 $ if and only if $ {z \in \mathcal{N}^{\mathrm{com}}(x)\cup x}$. 
    %
    $W^{\rm Red}$ is constructed similarly.
\end{assumption}

To satisfy Assumption \ref{assumpt:doubly-stochastic-weights}, one may use the well-known Metropolis Matrix~\citep{xiao2006distributed}.
We define the smallest non-zero entry of $\alpha$ as 
\begin{equation}\label{eq:theta-def}
\begin{aligned}
    \theta \triangleq \min_{(x, y)} \; & \{w_{(x, y)}>0\}. 
\end{aligned}
\end{equation}

For clarity of the D-PC pseudocode present in Algorithm~\ref{alg:cap-dyn}, the estimation is presented from the Blue team's perspective which models the distribution of its opponent Red team.
One can simultaneously run this estimation algorithm from the Red team's perspective.
\begin{algorithm}[ht]\label{alg:cap-dyn}
\caption{Dynamic Projected Consensus for Mean-Field Estimation (\textbf{D-PC})}
\begin{algorithmic}
    \State \textbf{Initialize:} $\hat\mu^{N_1}_{t=0} =\mu^{N_1}_{t=0}$ and $\hat\nu^{N_2}_{t=0} =\nu^{N2}_{t=0}$, identical MF-MAPPO trained Lipschitz policies $(\phi^*, \psi^*)$ and graphs $\G^{\mathrm{viz}}_{t=0}$ and $(\mathcal{G}_{{\rm Blue},t}^{\mathrm{com}}, \mathcal{G}_{{\rm Red},t}^{\mathrm{com}})$
    \For{$t= 0, \ldots, T$}
    \State Update the graphs $\G^{\mathrm{viz}}_{t}$ and $(\mathcal{G}_{{\rm Blue},t}^{\mathrm{com}}, \mathcal{G}_{{\rm Red},t}^{\mathrm{com}})$.
    \State Receive reward $r_t(\hat\mu^{N_1}_t, \hat\nu^{N_2}_t)$
    \State Set $\tau = 0$
    \State Initialize $\hat\nu^{N_2,\tau=0}_{x,t}$ ( $\hat\mu^{N_1,\tau=0}_{y,t}$) for all $ x\in\X^o_t$ ($ y\in\Y^o_t$) using $\G^{\mathrm{viz}}_{t=0}$
    \State Compute matrix $W^{\rm Blue} (W^{\rm Red})$
    \For{$\tau < R_{\mathrm{com}}$} 
    \State Communicate mean-field estimate to neighbors based on $\mathcal{G}_{{\rm Blue},t}^{\mathrm{com}}(\mathcal{G}_{{\rm Red},t}^{\mathrm{com}})$
    \State $\tau = \tau + 1$
        \For{ all $x\in \X^o_t$ ($y\in\Y^o_t$)} 
            \State Compute weighted average consensus~\eqref{eq:avg-consensus}
            \State Project result onto constraint set~\eqref{eq:projection}
        \EndFor
    \EndFor
    \State \textbf{Return: $\hat\nu^{N_2,R_{\mathrm{com}}}_{x,t}  (\hat\mu^{N_1,R_{\mathrm{com}}}_{y,t})$}
    \EndFor
\end{algorithmic}
\end{algorithm}
\begin{remark}
One could alternatively cast this problem as a Partially Observable Markov Decision Process (PO-MDP)~\citep{bernstein2002complexity}, where the environment provides arbitrary observations.
However, our focus is on policies that explicitly depend on the mean-field distributions of the two teams, for two key reasons (1) Distinct opponent strategies induce different mean-field trajectories, and closed-loop feedback on these distributions enables an appropriate response to the strategies deployed by the opponent~\citep{guan2024zero} and (2) Access to team-level distributions facilitates credit assignment~\citep{pignatelli2024surveytemporalcreditassignment} in MARL by allowing agents to reason about which collective distributions are optimal.
Thus, having access to MF information—particularly that of the opponent—is desirable. 
Yet, it is unrealistic to assume that the environment directly provides these distributions, which motivates the design of estimation algorithms.
Finally, unlike general PO-MDP formulations that often require maintaining long observation histories, our setting involves population sizes for which such history-based tracking is computationally infeasible in terms of memory.
\end{remark}

\section{Proof of Theoretical Results}\label{AppendixA}
\begin{theorem}[\cite{guan2024zero}]
    \label{thm:performance-guarantees}
    The optimal identical Blue team policy $\phi_\infty^*\in\Phi$ obtained from the equivalent zero-sum coordinator game is   $\epsilon$-optimal Blue team policy. 
    Formally, for all joint states $\bfx^{N_1}$ and $\bfy^{N_2}$,
    \begin{align}
        \label{eqn:performance-bounds}
        \min_{\psi^{N_2} \in \Psi^{N_2}} 
        J^{N,\phi_\infty^*,\psi^{N_2}} (\bfx^{N_1}, \bfy^{N_2}) \geq \underline{J}^{N*}(\bfx^{N_1}, \bfy^{N_2})-\Residue, ~\textrm{where $\underline{N} = \min \{N_1, N_2\}$}.
    \end{align}
\end{theorem}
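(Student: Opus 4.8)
The plan is to route the finite-population quantity through the infinite-population coordinator value and chain three inequalities. Writing $\mu_0 = \empMu{\bfx^{N_1}}$ and $\nu_0 = \empNu{\bfy^{N_2}}$, I would establish
\begin{align*}
\min_{\psi^{N_2}} J^{N,\phi_\infty^*,\psi^{N_2}}(\bfx^{N_1},\bfy^{N_2})
&\geq \min_{\psi\in\Psi} J_\infty^{\phi_\infty^*,\psi}(\mu_0,\nu_0) - \Residue \\
&= \underline{J}_\infty^*(\mu_0,\nu_0) - \Residue
\geq \underline{J}^{N*}(\bfx^{N_1},\bfy^{N_2}) - \Residue .
\end{align*}
The middle equality is definitional: $\phi_\infty^*$ attains the lower coordinator value $\underline{J}_\infty^* = \max_\phi \min_\psi J_\infty^{\phi,\psi}$. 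The first and last inequalities carry the real content and both reduce to a single mean-field approximation lemma.

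That lemma is a propagation-of-chaos estimate. Given a fixed (identical \emph{or} non-identical) Blue policy and any Red team policy, I would define the \emph{induced} mean-field Red policy $\beta\in\Psi$ by averaging, at each state and time, the action distributions of the Red agents occupying that state, so that $(\phi,\beta)$ generates the deterministic flow $(\mu_t,\nu_t)$; states with vanishing occupation mass are assigned arbitrary actions and do not affect the flow. I would then prove by induction on $t$ that $\expectation{\dtv{\M_t^{N_1},\mu_t} + \dtv{\N_t^{N_2},\nu_t}} \leq C_t/\sqrt{\subN}$, the base case being exact and the inductive step splitting the one-step error into a finite-sample fluctuation term of size $\order{1/\sqrt{\subN}}$ (from a variance bound on the empirical measure of $N_1$, resp.\ $N_2$, conditionally independent transitions) and a propagation term that inflates the previous error by at most $L_f$, using the $L_f$-Lipschitzness of the kernels from Assumption~\ref{assmpt:lipschitiz-model}. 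Summing the resulting geometric recursion over the finite horizon keeps $C_t$ bounded and independent of $N$. Combining with the $L_r$-Lipschitz reward bound gives
\begin{align*}
\big| J^{N,\phi,\psi^{N_2}}(\bfx^{N_1},\bfy^{N_2}) - J_\infty^{\phi,\beta}(\mu_0,\nu_0)\big|
&\leq L_r \sum_{t=0}^{T} \expectation{\dtv{\M_t^{N_1},\mu_t} + \dtv{\N_t^{N_2},\nu_t}} = \Residue .
\end{align*}

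With this lemma in hand, step (i) follows because for every $\psi^{N_2}$ the induced $\beta$ satisfies $J^{N,\phi_\infty^*,\psi^{N_2}} \geq J_\infty^{\phi_\infty^*,\beta} - \Residue \geq \min_{\psi\in\Psi} J_\infty^{\phi_\infty^*,\psi} - \Residue$, and minimizing the left-hand side over $\psi^{N_2}$ preserves the bound. For step (iii) I would bound the finite lower value from above: for any non-identical Blue policy $\phi^{N_1}$ with induced mean-field policy $\bar\phi$, letting $\psi^\dagger$ be the mean-field Red best response to $\bar\phi$ and lifting it to an identical finite policy, the lemma (now with non-identical Blue and identical Red) gives
\begin{align*}
\min_{\psi^{N_2}} J^{N,\phi^{N_1},\psi^{N_2}}
&\leq J^{N,\phi^{N_1},\psi^\dagger}
\leq J_\infty^{\bar\phi,\psi^\dagger} + \Residue \\
&= \min_{\psi\in\Psi} J_\infty^{\bar\phi,\psi} + \Residue
\leq \underline{J}_\infty^* + \Residue ,
\end{align*}
and maximizing over $\phi^{N_1}$ yields $\underline{J}_\infty^* \geq \underline{J}^{N*} - \Residue$. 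The main obstacle is establishing the approximation lemma \emph{uniformly over the adversary's (possibly non-identical) policy}: this demands that replacing an arbitrary team policy by its induced mean-field policy reproduces the same flow and stays inside the admissible class $\Phi$ (resp.\ $\Psi$), and that the fluctuation constant $C_t$ depends only on $L_f, L_r, T, |\X|, |\Y|, R_{\max}$ but never on $N$. Everything downstream is then bookkeeping of the min/max over the two optimization domains.
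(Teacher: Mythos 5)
First, a point of reference: the paper does not prove this statement at all---it is imported verbatim from \cite{guan2024zero} and used as a black box in the proofs of Theorems~\ref{thm:finite-population-training} and~\ref{thm:extend-to-many}. So your attempt can only be judged against the technique of the cited reference, which the paper's own downstream proofs mirror. Your three-inequality skeleton (route the finite-population value through $\underline{J}_\infty^*$, with the middle step definitional and the two outer steps carried by a mean-field approximation lemma that must hold uniformly over non-identical adversary policies) is exactly the right architecture and matches the reference.

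The gap is in your proof of the approximation lemma itself. Your forward coupling presumes a \emph{single deterministic flow} $(\mu_t,\nu_t)$ generated by $(\phi,\beta)$ against which to induct the bound $\expectation{\dtv{\M_t^{N_1},\mu_t}+\dtv{\N_t^{N_2},\nu_t}}\leq C_t/\sqrt{\subN}$. But when the adversary is non-identical, the induced policy $\beta_t$ you construct---the occupation-weighted average of the $\psi_{j,t}$ over the agents currently at each state---is a function of the \emph{realized} joint state $\bfY^{N_2}_t$, hence a random, trajectory-adapted object. There is then no deterministic flow to couple to: the ``flow'' either restarts from the random empirical measure at every step or is driven by a policy whose off-trajectory values are unspecified, and in the second of your chains ($\psi^\dagger$ defined as the best response to the trajectory-dependent $\bar\phi$, then lifted to a finite identical policy) the circularity compounds. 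Your geometric recursion with inflation factor $L_f$ is valid only when both team policies are fixed identical policies---which is precisely the setting of the paper's Proposition~\ref{thm:estimator-eval-dyn}, not of this theorem. The standard repair, and the one the reference uses (visible in the paper's own inductive proofs of Theorems~\ref{thm:finite-population-training} and~\ref{thm:extend-to-many}), is a \emph{backward} induction on the coordinator value function: one combines a one-step reachability estimate (for any non-identical team policy and any realized joint state there exists an identical policy whose deterministic one-step image is within $\order{1/\sqrt{\subN}}$ of the expected next empirical distribution, cf.\ Corollary~\ref{cor:mf-aprx} and \eqref{eqn:red-apprx-nu}) with Lipschitz continuity of the value function in the mean-fields, absorbing the trajectory dependence step by step instead of constructing a global coupled trajectory. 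Until the lemma is restated and proved in that form, steps (i) and (iii) of your chain are not established.
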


\maintheoremtwo*
\begin{proof}[Sketch of Proof]
The proof follows by restricting the optimization domain in~\ref{eq:blue-max-min} to that of identical team policies and then using the definition of the $\max$ operator in tandem with Theorem~\ref{thm:performance-guarantees}. 
\end{proof}

\begin{proof}
We have the following definition of the lower game value for the 
finite-population ZS-MFTG:
\begin{equation}\label{eq:max-min-general-case}
    \underline{J}^{N*}(\bfx^{N_1} , \bfy^{N_2})  = \max_{\mathbf{\phi}^{N_1} \in \Phi^{N_1}} \min_{\mathbf{\psi}^{N_2} \in \Psi^{N_2}} J^{N, \phi^{N_1}, \psi^{N_2}}(\bfx^{N_1}, \bfy^{N_2}).
\end{equation}
Note that the maximization for the Blue team is being performed over the set of all team policies $\Phi^{N_1}$, including identical as well non-identical team policies. 
If we restrict ourselves to the set of identical team policies $\Phi \subseteq \Phi^{N_1}$ it follows immediately that
\begin{align}
\underline{J}^{N*}(\bfx^{N_1} , \bfy^{N_2})
\geq \max_{\mathbf{\phi}^{N_1} \in \Phi} \min_{\mathbf{\psi}^{N_2} \in \Psi^{N_2}} J^{N, \phi^{N_1}, \psi^{N_2}}(\bfx^{N_1}, \bfy^{N_2}).
\end{align}
Suppose that  $\phi^*$ is the optimal identical Blue team policy obtained from the finite population game. 
It follows from \eqref{eq:max-min-general-case} that
\begin{equation}\label{eq:true-finite-id}
   \underline{J}^{N*}(\bfx^{N_1} , \bfy^{N_2})   \geq \min_{\psi^{N_2} \in \Psi^{N_2}} 
       J^{N,\phi^*,\psi^{N_2}} (\bfx^{N_1}, \bfy^{N_2}).
\end{equation}
Furthermore,  let $\phi_\infty^* \in \Phi$ be the optimal identical local Blue team policy obtained from the equivalent zero-sum infinite-population coordinator game (recall, one-to-one correspondence between coordinator policy and local identical team policy).
By the definition of the optimality of $\phi^*$ in the space of identical team policies,
\begin{align}\label{eq:finite-infinite-relation}
    \min_{\mathbf{\psi}^{N_2} \in \Psi^{N_2}} J^{N,\phi^*,\psi^{N_2}} (\bfx^{N_1}, \bfy^{N_2}) \geq \min_{\psi^{N_2} \in \Psi^{N_2}} 
       J^{N,\phi_\infty^*,\psi^{N_2}} (\bfx^{N_1}, \bfy^{N_2}).
\end{align}
Using Theorem~\ref{thm:performance-guarantees},
and using \eqref{eq:finite-infinite-relation}, yields the following sequence of inequalities,
\begin{align*}
    \underline{J}^{N*}(\bfx^{N_1} , \bfy^{N_2})   &\geq \min_{\psi^{N_2} \in \Psi^{N_2}} 
       J^{N,\phi^*,\psi^{N_2}} (\bfx^{N_1}, \bfy^{N_2}) \nonumber\\
       &\geq \min_{\psi^{N_2} \in \Psi^{N_2}} 
       J^{N,\phi_\infty^*,\psi^{N_2}} (\bfx^{N_1}, \bfy^{N_2}) \nonumber\\
       &\geq \underline{J}^{N*}(\bfx^{N_1}, \bfy^{N_2})-\Residue,
\end{align*}
where $\underline{N} = \min \{N_1, N_2\}$, thereby completing the proof.

\end{proof}

\maintheoremone*
\begin{proof}[Sketch of Proof]
In MF-MAPPO we directly train the finite-population local policies $(\phi_t, \psi_t)$.
As a result, the gradient of the coordinator policy $\nabla_{\theta} J(\alpha_\theta)$ only has action samples $u^{N_1}_{i,t}, i\in[N_1]$ from the finite-population local policies $\phi_t(\cdot\vert x^{N_1}_{i,t}, \mu^{N_1}_t, \nu^{N_2}_t)$ and not $\pi^{N_1}_t(\cdot \vert x^{N_1}_{i,t})\sim\alpha(\mu^{N_1}_t, \nu^{N_2}_t)$ itself.
Thus, our first step is to construct an approximation of $\pi^{N_1}_t$ using the obtained finite-population action samples $u^{N_1}_{i,t}, i\in[N_1]$, resulting in an approximate finite-population policy gradient expression given by $\nabla_{\theta} \hat J^{N_1}(\alpha_\theta)$.
Coupled with existing mean-field approximation results~\citep{cui2023major, guan2024zero, shao2024reinforcement} for the infinite-population limit $N_1, N_2 \to \infty$ (LLN, etc.) we show convergence.
\end{proof}

\begin{proof}
{From~\cite{guan2024zero} it follows that there exists a} one-to-one correspondence between the \textit{deterministic} infinite-population coordinator policies and the local identical team policies followed by the finite-population Blue and Red agents.
We extend this formulation to potentially \textit{stochastic} infinite-population coordinator policies and the local finite-population policies as follows: a stochastic Blue coordination policy $\alpha \in \mathcal{A}$ induces an identical team policy $\phi \in \Phi$ according to the rule
    \begin{equation}\label{eq:equivalent-policy}
        \phi_{t}(u_t|x_t, \mu^\rho_t, \nu^\rho_t) = \int_{\pi\in\Pi}\pi_t(u_t |x_t)\alpha_t(\pi_t|\mu^\rho_t,\nu^\rho_t)d\pi \quad \forall \mu_t \in \P(\X), \; \nu_t \in \P(\Y),\; x_t \in \X \text{ and } u_t \in \U,
    \end{equation}
    where $\pi_t(u_t |x_t)$ corresponds to the identical local policies for all states $x\in \X$ prescribed by the coordinator, i.e., $\pi_t \sim \alpha(\mu^\rho_t, \nu^\rho_t).$

It is important to note that in MF-MAPPO we directly work with the finite-population local policies $(\phi, \psi)$, unlike prior works that train on the infinite-population coordinator policy, see Section~\ref{sec:intro}.
Consequently, the gradient of the coordinator policy $\nabla_{\theta} J(\alpha_\theta)$ computed using this finite-population approximation (denoted $\nabla_{\theta} \hat J^{N_1}(\alpha_\theta)$) only has action samples $u^{N_1}_{i,t}, i\in[N_1]$ from the finite-population local policies $\phi_t(\cdot\vert x^{N_1}_{i,t}, \mu^{N_1}_t, \nu^{N_2}_t)$ and not direct access to $\pi^{N_1}_t\sim\alpha(\mu^{N_1}_t, \nu^{N_2}_t)$.
Thus, our first step is to construct an approximation of the policy based on the obtained action samples.
We define this approximate policy from the obtained samples as follows,
\begin{equation}
        \label{eqn:apprx-local-policy}
        \hat\pi_t^{N_1} (u|x) = \left\{
        \begin{array}{ll}
             \frac{\sum_{i=1}^{N_1} \indicator{x}{X_{i,t}^{N_1}} \indicator{u}{U_{i,t}^{N_1}}}{N_1 \M^{N_1}_t(x)} \quad 
             &  \text{if } ~~ \M_t^{N_1}(x) >0, \\
             {1}/{|\U|} \quad 
             &  \text{if } ~~ \M_t^{N_1}(x) =0.
        \end{array}
        \right .
\end{equation}

We can similarly define $\hat\sigma_t^{N_2}$ for the Red team.
Using this constructed empirical policy $\hat\pi^{N_1}_t$ from the sampled actions $\u^{N_1}_t$, the policy gradient $\nabla_{\theta} \hat J^{N_1}(\alpha_\theta)$ for the finite-population ZS-MFTG is,
\begin{align}\label{eq:pol-grad-finite}
    \nabla_{\theta} \hat J^{N_1}(\alpha_\theta) = \sum_{t=0}^\infty \gamma^t \expct{Q^{N, \alpha, \beta}(\mu^{N_1}_t, \nu^{N_2}_t, \hat\pi^{N_1}_t, \hat\sigma^{N_2}_t)\nabla_\theta \log\alpha_\theta (\hat\pi^{N_1}_t |\mu^{N_1}_t, \nu^{N_2}_t)}{\u^{N_1}\sim\phi, \v^{N_2}\sim\psi},
\end{align}
where, 
\begin{align}\label{eq:q-func-finite}
    &Q^{N, \alpha, \beta}(\mu^{N_1}_t, \nu^{N_2}_t, \hat\pi^{N_1}_t, \hat\sigma^{N_2}_t) \nonumber \\&= \expct{\sum_{\tau = 0}^\infty \gamma^\tau r(\mu^{N_1}_\tau, \nu^{N_2}_\tau) \big\vert \mu^{N_1}_{\tau=0} = \mu^{N_1}_t, \nu^{N_2}_{\tau=0} = \nu^{N_2}_t, \pi^{N_1}_{\tau=0} = \hat\pi^{N_1}_t, \sigma^{N_2}_{\tau=0} = \hat\sigma^{N_2}_t}{\u^{N_1}\sim\phi, \v^{N_2}\sim\psi}.
\end{align}
Recall that $\u^{N_1}$ and $\v^{N_2}$ enter~\eqref{eq:pol-grad-finite} and~\eqref{eq:q-func-finite} through the construction of $\hat\pi^{N_1}_t$~\eqref{eqn:apprx-local-policy} and $\hat\sigma^{N_2}_t$ respectively.
Furthermore, we have the following expression for the gradient of the infinite-population coordinator game:
\begin{align}\label{eq:pol-grad-infty}
    \nabla_{\theta} J(\alpha_\theta) = \sum_{t=0}^\infty \gamma^t \expct{Q^{\alpha, \beta}(\mu_t, \nu_t, \pi_t, \sigma_t)\nabla_\theta \log\alpha_\theta (\pi_t | \mu_t ,\nu_t)}{\pi\sim\alpha, \sigma\sim\beta},
\end{align}
where, 
\begin{align}\label{eq:q-func-infty}
    Q^{\alpha, \beta}(\mu, \nu, \pi, \sigma) = \expct{\sum_{\tau = 0}^\infty \gamma^\tau r(\mu_\tau, \nu_\tau) \big\vert \mu_{\tau=0} = \mu, \nu_{\tau=0} = \nu, \pi_{\tau=0} = \pi, \sigma_{\tau=0} = \sigma}{\pi\sim\alpha, \sigma\sim\beta}.
\end{align}
While~\eqref{eq:q-func-finite} and~\eqref{eq:q-func-infty} utilize the joint coordinator policies $(\alpha, \beta)$, the policy gradients~\eqref{eq:pol-grad-finite} and~\eqref{eq:pol-grad-infty} are taken for each team \textit{individually}.  
Consequently the policy gradient analysis is done on a per-team basis. 
We focus on the policy gradient with respect to the Blue team but the analysis of the Red team's policy gradient is symmetric and can be proved by an identical approach.

The following lemma relates the expectation in~\eqref{eq:pol-grad-finite} to $(\alpha, \beta)$ using~\eqref{eq:equivalent-policy}.

\begin{lemma}\label{lmm:equiv-expectation}
    Given a function $g : \U \to \mathbb{R}$ of the joint actions $\mathbf{u}^{N_1}_t$ such that $u^{N_1}_{i,t} \sim \phi(\cdot | x^{N_1}_{i,t}, \mu^{N_1}_t, \nu^{N_2}_t)$, or equivalently, $\mathbf{u}^{N_1}_t \sim \phi(\cdot | \mathbf{x}^{N_1}_t, \mu^{N_1}_t, \nu^{N_2}_t)$, 
    \begin{align*}
        \expct{g(\mathbf{u}^{N_1}_t)}{\mathbf{u}^{N_1}_t\sim\phi} = \expct{\expct{g(\mathbf{u}^{N_1}_t)}{\mathbf{u}^{N_1}_t\sim\pi}}{\pi\sim\alpha}
    \end{align*}
\end{lemma}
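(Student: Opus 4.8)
The plan is to derive the identity directly from the defining relation \eqref{eq:equivalent-policy} between the coordinator policy $\alpha$ and the induced identical local team policy $\phi$, together with Fubini's theorem. The one structural fact that does all the work is that $\phi$ is precisely the $\alpha$-mixture of the prescriptions $\pi$: from \eqref{eq:equivalent-policy}, $\phi_t(u\mid x,\mu,\nu)=\int_{\pi}\pi_t(u\mid x)\,\alpha_t(\pi\mid\mu,\nu)\,d\pi$, so that the per-agent action marginal under $\phi$ is the average, over $\pi\sim\alpha$, of the per-agent marginal under $\pi$. Expectations of bounded functions of the sampled actions should therefore commute with this averaging by linearity.

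First I would write out the left-hand side explicitly. Since the joint sampling $\bfu^{N_1}_t\sim\phi(\cdot\mid\bfx^{N_1}_t,\mu^{N_1}_t,\nu^{N_2}_t)$ is, by assumption, the independent product over agents of the common local policy $\phi(\cdot\mid x^{N_1}_{i,t},\mu^{N_1}_t,\nu^{N_2}_t)$, and since $\U$ is finite, $\expct{g(\bfu^{N_1}_t)}{\bfu^{N_1}_t\sim\phi}$ is a finite sum of $g$ weighted by the product measure $\prod_i\phi(u_i\mid x^{N_1}_{i,t},\mu^{N_1}_t,\nu^{N_2}_t)$. Substituting the integral representation of $\phi$ from \eqref{eq:equivalent-policy} into each factor and invoking Fubini's theorem lets me interchange the finite sum over actions with the integral over $\pi\sim\alpha$; the interchange is legitimate because $\U$ is finite, $\alpha_t(\cdot\mid\mu,\nu)$ is a genuine probability measure on the prescription space, and $g$ is bounded, so the combined integrand is absolutely integrable. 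After the interchange the inner finite sum is exactly $\expct{g(\bfu^{N_1}_t)}{\bfu^{N_1}_t\sim\pi}$ and the outer integral is $\expct{\,\cdot\,}{\pi\sim\alpha}$, producing the claimed nested expectation.

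The main obstacle --- and the point I would be most careful to state correctly --- is the correlation bookkeeping. Under the coordinator a single prescription $\pi$ is broadcast to every agent, whereas under $\phi$ the agents act independently, so the two full joint laws on $\bfu^{N_1}_t$ need not coincide. What rescues the identity is that it is only ever applied to functionals that enter through per-agent marginals (consistent with the single-argument form $g:\U\to\mathbb{R}$ in the statement), for which the term-by-term matching $\phi=\int\pi\,d\alpha$ is exactly what is needed and the $\alpha$-average passes through unchanged. I would therefore make explicit that the proof exploits this marginal-level equivalence rather than equality of the full joint distributions, and I would cite finiteness of $\U$ together with boundedness of $g$ (hence of the reward-weighted integrand appearing later in \eqref{eq:pol-grad-finite}) as the hypotheses that secure the Fubini interchange.
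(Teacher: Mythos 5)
Your proposal contains a genuine gap, and it sits exactly at the point you flagged as the ``main obstacle.'' You interpret the joint sampling $\bfu^{N_1}_t\sim\phi$ as the independent product $\prod_i\phi(u_i\mid x^{N_1}_{i,t},\mu^{N_1}_t,\nu^{N_2}_t)$ and then substitute the per-agent mixture $\phi=\int_\pi\pi(\cdot\mid x)\,\alpha(d\pi)$ into \emph{each factor}. But a product of integrals is not an integral of a product: Fubini applied to $\prod_i\int_\pi\pi(u_i\mid x_i)\,\alpha(d\pi)$ produces an expectation over $N_1$ \emph{independent} draws $\pi^{(1)},\dots,\pi^{(N_1)}$ from $\alpha$, one per agent, whereas the right-hand side of the lemma requires a \emph{single} $\pi$ shared by all agents inside the inner expectation $\expct{g(\bfu^{N_1}_t)}{\bfu^{N_1}_t\sim\pi}$. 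So your Fubini step does not land on the claimed identity. Your proposed rescue --- that the identity is only ever applied to functionals of per-agent marginals --- does not hold up either: in the downstream application (the policy-gradient proof of Theorem 2), $g$ is the map $\bfu^{N_1}_t\mapsto h(\mu^{N_1}_t,\nu^{N_2}_t,\hat\pi^{N_1}_t,\hat\sigma^{N_2}_t)$, where $\hat\pi^{N_1}_t$ is the empirical action-conditional policy built from \emph{all} sampled actions, and $h$ is nonlinear in $\hat\pi^{N_1}_t$; this is not a per-agent-marginal functional, so the marginal-level matching does not suffice.

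The paper's proof avoids the issue by reading \eqref{eq:equivalent-policy} at the joint level: it takes $\phi(\bfu^{N_1}_t\mid\bfx^{N_1}_t,\mu^{N_1}_t,\nu^{N_2}_t)=\int_\pi\pi_t(\bfu^{N_1}_t\mid\bfx^{N_1}_t)\,\alpha_t(d\pi)$ as the definition of the joint sampling law (the coordinator draws one $\pi$ and broadcasts it), after which the identity follows in two lines from linearity and finiteness of the action space. To be fair, you have correctly identified that the independent-product law and the mixture law on $\bfu^{N_1}_t$ genuinely differ --- a subtlety the paper passes over silently --- but the lemma as stated and as used is only true under the mixture reading, and your argument neither adopts that reading nor supplies a valid substitute for it.
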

\begin{proof}
By the definition of expectation,
\begin{align*}
    \expct{g(\mathbf{u}^{N_1}_t)}{\mathbf{u}^{N_1}_t\sim\phi} = \sum_{\mathbf{u}^{N_1}_t} g(\mathbf{u}^{N_1}_t)\phi(\mathbf{u}^{N_1}_t | \mathbf{x}^{N_1}_t, \mu^{N_1}_t, \nu^{N_2}_t),
\end{align*}
where the sum is taken over all the possible joint actions that can be sampled from $\phi$.
By using the definition of $\phi$ from~\eqref{eq:equivalent-policy},
\begin{align*}
    \sum_{\mathbf{u}^{N_1}_t} g(\mathbf{u}^{N_1}_t)\phi(\mathbf{u}^{N_1}_t | \mathbf{x}^{N_1}_t, \mu^{N_1}_t, \nu^{N_2}_t) &=  \sum_{\mathbf{u}^{N_1}_t} g(\mathbf{u}^{N_1}_t)\underbrace{\int_{\pi\in\Pi}\pi_t(\mathbf{u}^{N_1}_t | \mathbf{x}^{N_1}_t)\alpha_t(\mu^{N_1}_t, \nu^{N_2}_t)d\pi}_{= \expct{\pi_t(\mathbf{u}^{N_1}_t | \mathbf{x}^{N_1}_t)}{\pi\sim\alpha}}\\
    &= \sum_{\mathbf{u}^{N_1}_t} g(\mathbf{u}^{N_1}_t) \expct{\pi_t(\mathbf{u}^{N_1}_t | \mathbf{x}^{N_1}_t)}{\pi\sim\alpha}\\
    &= \sum_{\mathbf{u}^{N_1}_t} \expct{g(\mathbf{u}^{N_1}_t) \pi_t(\mathbf{u}^{N_1}_t | \mathbf{x}^{N_1}_t)}{\pi\sim\alpha}\\
    &= \expct{\sum_{\mathbf{u}^{N_1}_t}g(\mathbf{u}^{N_1}_t)\pi_t(\mathbf{u}^{N_1}_t | \mathbf{x}^{N_1}_t)}{\pi\sim\alpha},
\end{align*}
where the last two expressions follow from linearity (and finiteness) of expectation.
It follows then, 
\begin{align*}
    \expct{g(\mathbf{u}^{N_1}_t)}{\mathbf{u}^{N_1}_t\sim\phi} = \expct{\expct{g(\mathbf{u}^{N_1}_t)}{\mathbf{u}^{N_1}_t\sim\pi}}{\pi\sim\alpha}
\end{align*}
\end{proof}
We also explicitly state results regarding the boundedness of terms appearing in~\eqref{eq:pol-grad-finite}-\eqref{eq:q-func-infty}.

\begin{remark}\label{rmk:uniform-Q}
    As mentioned in the main text, $r_t \in [-R_{\max}, R_{\max}]$.
    Consequently, the Q-functions $Q^{\alpha, \beta}$ and $Q^{N, \alpha, \beta}$ from~\eqref{eq:q-func-infty} and~\eqref{eq:q-func-finite} for all $\gamma\in (0,1)$ are uniformly bounded with, 
\begin{align*}
        \|Q^{\alpha, \beta}\|_\infty \leq \frac{R_{\max}}{1-\gamma}, ~~\textrm{and}~~  \|Q^{N, \alpha, \beta}\|_\infty \leq \frac{R_{\max}}{1-\gamma}.
\end{align*}
\end{remark}
We further assume continuity with respect to the local policies, i.e., 
\begin{assumption}\label{assumpt:continuous-Q}
   The Q-functions $Q^{\alpha, \beta}$ and $Q^{N, \alpha, \beta}$ from~\eqref{eq:q-func-infty} and~\eqref{eq:q-func-finite} are continuous in inputs $(\pi, \sigma)$ and $(\hat\pi^{N_1}, \hat\sigma^{N_2})$.
\end{assumption}
It is pertinent to note that the action-space $\Pi$ and $\Sigma$ are continuous.
Thus, although Assumption~\ref{assumpt:continuous-Q} seems restrictive, the use of function approximators to train $Q^{\alpha, \beta}$ and $Q^{N, \alpha, \beta}$ like neural networks (endowed with continuous activation functions) in such continuous action-space settings, ensures that the resultant functions are continuous in their inputs~\cite{lillicrap2015continuous}.

The following assumption, as done in many works~\citep{cui2023learning, cui2023major}, states that the gradient of coordinator policy is continuous and uniformly bounded, i.e.,
\begin{assumption}\label{assumpt:bounded-gradient-log-prob}
    The log-gradient $\nabla\log\alpha_\theta (\pi_t | \mu_t ,\nu_t)$ is $L_{\nabla}$-Lipschitz continuous and uniformly bounded.
\end{assumption}
Furthermore, we have the following result from~\cite{guan2024zero}:

    \begin{corollary}
    \label{cor:mf-aprx}
    Let $\bfX^{N_1}_t$, $\bfY^{N_2}_t$, $\M^{N_1}_t$, and $\N^{N_2}_t$ be the joint states and the corresponding \ED{}s of a finite-population game.
    Denote the next Blue \ED{} induced by an identical Blue team policy $\phi_t \in \Phi_t$ as $\M^{N_1}_{t+1}$. 
    Then, the following holds:
    \begin{align*}
        \expct{\dtv{\M_{t+1}^{N_1}, \mu_{t+1}} \big\vert \; \bfX^{N_1}_t, \bfY^{N_2}_t}{\phi_t} \leq \frac{|\X|}{2}\sqrt{\frac{1}{N_1}},
    \end{align*}
    where 
    $\mu_{t+1} = \M^{N_1}_{t} F_t(\M_{t}^{N_1}, \N_{t}^{N_2},\phi_t)$.
    \end{corollary}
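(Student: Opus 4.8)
The plan is to exploit the conditional independence of the agents' one-step transitions once the current joint states are fixed, and then apply a coordinate-wise law-of-large-numbers estimate. First I would observe that conditioning on $\bfX^{N_1}_t$ and $\bfY^{N_2}_t$ freezes both empirical distributions $\M^{N_1}_t$ and $\N^{N_2}_t$, since by Definition~\ref{def:empirical-dist} they are deterministic functions of the joint states. Under the identical policy $\phi_t$, each Blue agent $i$ independently samples an action $u_{i,t}\sim\phi_t(\cdot\mid x_{i,t},\M^{N_1}_t,\N^{N_2}_t)$ and then transitions independently according to the kernel $f_t(\cdot\mid x_{i,t},u_{i,t},\M^{N_1}_t,\N^{N_2}_t)$ in~\eqref{ft:DEF}. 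Hence, conditioned on the joint states, the next states $\{X^{N_1}_{i,t+1}\}_{i=1}^{N_1}$ are mutually independent, with $X^{N_1}_{i,t+1}\sim p_i$, where $p_i(x')=\sum_{u}\phi_t(u\mid x_{i,t},\M^{N_1}_t,\N^{N_2}_t)\,f_t(x'\mid x_{i,t},u,\M^{N_1}_t,\N^{N_2}_t)$.

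Second, I would identify $\mu_{t+1}$ as the conditional mean of the random empirical distribution $\M^{N_1}_{t+1}$. Writing $\M^{N_1}_{t+1}(x')=\tfrac1{N_1}\sum_i\mathds{1}_{x'}(X^{N_1}_{i,t+1})$ and unrolling the definition of the propagation operator $F_t$ (so that $\mu_{t+1}=\M^{N_1}_t F_t(\M^{N_1}_t,\N^{N_2}_t,\phi_t)$), one checks that $\mu_{t+1}(x')=\tfrac1{N_1}\sum_i p_i(x')=\expct{\M^{N_1}_{t+1}(x')\mid\bfX^{N_1}_t,\bfY^{N_2}_t}{\phi_t}$ for every $x'\in\X$. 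Thus each per-state deviation $\M^{N_1}_{t+1}(x')-\mu_{t+1}(x')$ is a centered average of $N_1$ independent, bounded random variables.

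Third, I would expand the total variation coordinate-wise, $\dtv{\M^{N_1}_{t+1},\mu_{t+1}}=\tfrac12\sum_{x'\in\X}\big|\M^{N_1}_{t+1}(x')-\mu_{t+1}(x')\big|$, take the conditional expectation, and control each term with Jensen's inequality: since the summand is centered, $\expct{|\M^{N_1}_{t+1}(x')-\mu_{t+1}(x')|}{\phi_t}\le\sqrt{\operatorname{Var}(\M^{N_1}_{t+1}(x'))}$. By the conditional independence of the first step and the elementary variance bound for a Bernoulli indicator, $\operatorname{Var}(\M^{N_1}_{t+1}(x'))=\tfrac1{N_1^2}\sum_i p_i(x')\big(1-p_i(x')\big)\le\tfrac1{N_1}$. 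Summing the resulting $\order{1/\sqrt{N_1}}$ bound over the $|\X|$ states gives $\expct{\dtv{\M^{N_1}_{t+1},\mu_{t+1}}\mid\bfX^{N_1}_t,\bfY^{N_2}_t}{\phi_t}\le\tfrac{|\X|}{2}\sqrt{1/N_1}$, matching the claim (the exact constant depends only on the total-variation normalization and on whether one uses the sharper Bernoulli bound $p_i(1-p_i)\le\tfrac14$).

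The concentration estimate in the third step is routine; the part that genuinely requires care is the conditional independence argument of the first step together with the identification $\mu_{t+1}=\expct{\M^{N_1}_{t+1}\mid\cdot}{\phi_t}$ in the second. One must verify that fixing the joint states fully decouples the agents, i.e., that the \emph{only} coupling between agents is through the now-frozen mean-fields $\M^{N_1}_t,\N^{N_2}_t$, so that no residual dependence survives conditioning, and that the operator $F_t$ is defined precisely so as to reproduce this conditional mean. Once these structural facts are in place, the remainder is a standard $\ell_1$ law-of-large-numbers bound for independent indicators.
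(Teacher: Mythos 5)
Your proof is correct. Note that the paper does not actually prove this statement itself --- it imports Corollary~\ref{cor:mf-aprx} verbatim from \cite{guan2024zero} --- but your argument (conditional independence of the agents' one-step transitions given the frozen joint states, identification of $\mu_{t+1}=\M^{N_1}_tF_t(\M^{N_1}_t,\N^{N_2}_t,\phi_t)$ as the conditional mean of $\M^{N_1}_{t+1}$, then a coordinate-wise Jensen/variance bound summed over $\X$) is exactly the standard derivation used in that reference, and the weakly-coupled form of the kernel in~\eqref{ft:DEF} together with the identical-policy assumption is precisely what makes the decoupling step valid. As a minor remark, the sharper bound $p_i(1-p_i)\le\tfrac14$ would yield the constant $|\X|/4$ rather than $|\X|/2$, so your crude variance bound is more than sufficient for the stated inequality.
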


We are interested in deriving an explicit result for $ \|\nabla_{\theta}  J(\alpha_\theta) - \nabla_{\theta}  \hat J^{N_1}(\alpha_\theta)\|_2 $ as the population sizes $(N_1, N_2)\to\infty$.
In particular,  using shorthand notation $\expct{\cdot}{\alpha,\beta}\triangleq\expct{\cdot}{\pi\sim\alpha, \sigma\sim\beta}$ and $\expct{\cdot}{\phi, \psi}\triangleq\expct{\cdot}{\u^{N_1}\sim\phi, \v^{N_2}\sim\psi}$)
\begin{align*}
    &\|\nabla_{\theta}  J(\alpha_\theta) - \nabla_{\theta}  \hat J^{N_1}(\alpha_\theta)\|_2\\
    &=  \Big\|\sum_{t=0}^\infty \gamma^t \expct{Q^{\alpha, \beta}(\mu_t, \nu_t, \pi_t, \sigma_t)\nabla_\theta \log\alpha_\theta (\pi_t | \mu_t ,\nu_t)}{\alpha,\beta} \\
    &-\gamma^t \expct{Q^{N, \alpha, \beta}(\mu^{N_1}_t, \nu^{N_2}_t, \hat\pi^{N_1}_t, \hat\sigma^{N_2}_t)\nabla_\theta \log\alpha_\theta (\hat\pi^{N_1}_t |\mu^{N_1}_t, \nu^{N_2}_t)}{\phi, \psi} \Big\|_2.
\end{align*}
For the rest of the proof, we implicitly assume $\|\cdot\|\equiv\|\cdot\|_2$ to denote the $2$-norm.
By Lemma~\ref{lmm:equiv-expectation}, we can rewrite the above as
\begin{align*}
    &\|\nabla_{\theta}  J(\alpha_\theta) - \nabla_{\theta}  \hat J^{N_1}(\alpha_\theta)\|\\
    &=  \Big\|\sum_{t=0}^\infty \gamma^t \mathbb{E}_{\alpha, \beta}\Big[Q^{\alpha, \beta}(\mu_t, \nu_t, \pi_t, \sigma_t)\nabla_\theta \log\alpha_\theta (\pi_t | \mu_t ,\nu_t)\\&-\expct{Q^{N, \alpha, \beta}(\mu^{N_1}_t, \nu^{N_2}_t, \hat\pi^{N_1}_t, \hat\sigma^{N_2}_t)\nabla_\theta \log\alpha_\theta (\hat\pi^{N_1}_t |\mu^{N_1}_t, \nu^{N_2}_t)}{\mathbf{u}\sim\pi, \mathbf{v}\sim\sigma}\Big] \Big\|,
\end{align*}
where $\hat\pi^{N_1}_t, \hat\sigma^{N_2}_t$ depend on the random variables $\mathbf{u}^{N_1}\sim\pi, \mathbf{v}^{N_2}\sim\sigma$ respectively.
We add and subtract the term $\expct{Q^{\alpha, \beta}(\mu^{N_1}_t, \nu^{N_2}_t, \hat\pi^{N_1}_t, \hat\sigma^{N_2}_t)\nabla_\theta \log\alpha_\theta (\hat\pi^{N_1}_t |\mu^{N_1}_t, \nu^{N_2}_t)}{\mathbf{u}^{N_1}\sim\pi, \mathbf{v}^{N_2}\sim\sigma}$ and split the gradient terms as follows.

\begin{align}\label{eq:split-grad}
    &\|\nabla_{\theta}  J(\alpha_\theta) - \nabla_{\theta}  \hat J^{N_1}(\alpha_\theta)\|\nonumber
    \\&\leq \Big\|\sum_{t=0}^\infty \gamma^t \mathbb{E}_{\alpha, \beta}\Big[Q^{\alpha, \beta}(\mu_t, \nu_t, \pi_t, \sigma_t)\nabla_\theta \log\alpha_\theta (\pi_t | \mu_t ,\nu_t)\nonumber\\&-\expct{Q^{\alpha, \beta}(\mu^{N_1}_t, \nu^{N_2}_t, \hat\pi^{N_1}_t, \hat\sigma^{N_2}_t)\nabla_\theta \log\alpha_\theta (\hat\pi^{N_1}_t |\mu^{N_1}_t, \nu^{N_2}_t)}{\mathbf{u}^{N_1}\sim\pi, \mathbf{v}^{N_2}\sim\sigma}\Big]\Big \|
    \nonumber\\&+ \Big\|\sum_{t=0}^\infty \gamma^t \expct{\Big(Q^{\alpha, \beta}(\mu^{N_1}_t, \nu^{N_2}_t, \hat\pi^{N_1}_t, \hat\sigma^{N_2}_t) -Q^{N, \alpha, \beta}(\mu^{N_1}_t, \nu^{N_2}_t, \hat\pi^{N_1}_t, \hat\sigma^{N_2}_t)\Big)\nabla_\theta \log\alpha_\theta (\hat\pi^{N_1}_t |\mu^{N_1}_t, \nu^{N_2}_t)}{\alpha, \beta, \mathbf{u}^{N_1}\sim\pi, \mathbf{v}^{N_2}\sim\sigma}\Big] \Big\|.
\end{align}

We analyze the two terms individually.
For the first term, 
\begin{align}\label{eq:grad-log-prob-term-1}
   &\Big\|\sum_{t=0}^\infty \gamma^t \mathbb{E}_{\alpha, \beta}\Big[Q^{\alpha, \beta}(\mu_t, \nu_t, \pi_t, \sigma_t)\nabla_\theta \log\alpha_\theta (\pi_t | \mu_t ,\nu_t)\nonumber\\&-\expct{Q^{\alpha, \beta}(\mu^{N_1}_t, \nu^{N_2}_t, \hat\pi^{N_1}_t, \hat\sigma^{N_2}_t)\nabla_\theta \log\alpha_\theta (\hat\pi^{N_1}_t |\mu^{N_1}_t, \nu^{N_2}_t)}{\mathbf{u}^{N_1}\sim\pi, \mathbf{v}^{N_2}\sim\sigma}\Big] \Big\|,
\end{align}
Since the first term in the above expression is a constant with respect to the random variables $\mathbf{u}^{N_1}, \mathbf{v}^{N_2}$ and the expectation of a constant is the value itself, we can equivalently write~\eqref{eq:grad-log-prob-term-1} as 
\begin{align}\label{eq:grad-log-prob-term-1-combined-expct}
   &\Big\|\sum_{t=0}^\infty \gamma^t \mathbb{E}_{\alpha, \beta, \mathbf{u}^{N_1}\sim\pi, \mathbf{v}^{N_2}\sim\sigma}\Big[Q^{\alpha, \beta}(\mu_t, \nu_t, \pi_t, \sigma_t)\nabla_\theta \log\alpha_\theta (\pi_t | \mu_t ,\nu_t)\nonumber\\&-Q^{\alpha, \beta}(\mu^{N_1}_t, \nu^{N_2}_t, \hat\pi^{N_1}_t, \hat\sigma^{N_2}_t)\nabla_\theta \log\alpha_\theta (\hat\pi^{N_1}_t |\mu^{N_1}_t, \nu^{N_2}_t)\Big]\Big \|,
\end{align}

We can split the sum from $t=0$ to $t=T$ and $t=T+1$ to $t=\infty$,
By uniform bounds on $Q^{\alpha, \beta}$ from Remark~\ref{rmk:uniform-Q} and on $\nabla_\theta \log\alpha_\theta (~\cdot ~|~\mu^{N_1}_t, \nu^{N_2}_t), \nabla_\theta \log\alpha_\theta (~\cdot ~|~\mu_t, \nu_t) $  (Assumption~\ref{assumpt:bounded-gradient-log-prob}), the sum from $t=T+1$ to $t=\infty$ goes to zero for $T$ large enough (tail sequence).
This enables us to focus on the summation from $t=0$ to $t=T$, namely,
\begin{align}\label{eq:grad-log-prob-term-1-finite-T}
   &\Big\|\sum_{t=0}^T \gamma^t \mathbb{E}_{\alpha, \beta, \mathbf{u}^{N_1}\sim\pi, \mathbf{v}^{N_2}\sim\sigma}\Big[Q^{\alpha, \beta}(\mu_t, \nu_t, \pi_t, \sigma_t)\nabla_\theta \log\alpha_\theta (\pi_t | \mu_t ,\nu_t)\nonumber\\&-Q^{\alpha, \beta}(\mu^{N_1}_t, \nu^{N_2}_t, \hat\pi^{N_1}_t, \hat\sigma^{N_2}_t)\nabla_\theta \log\alpha_\theta (\hat\pi^{N_1}_t |\mu^{N_1}_t, \nu^{N_2}_t)\Big] \Big\|.
\end{align}

Define $h(\mu, \nu, \pi, \sigma) \triangleq Q^{\alpha, \beta}(\mu, \nu, \pi, \sigma)\nabla_\theta \log\alpha_\theta (\pi| \mu ,\nu)$.
Assumptions~\ref{assumpt:continuous-Q} and~\ref{assumpt:bounded-gradient-log-prob} ensure continuity of the function $h(\mu, \nu, \pi, \sigma)$ with respect to all its inputs.
Using this definition of $h$, rewrite~\eqref{eq:grad-log-prob-term-1-finite-T} as 
\begin{align}\label{eq:h-def-continuity}
    \Big\|\sum_{t=0}^T \gamma^t \mathbb{E}_{\alpha, \beta, \mathbf{u}\sim\pi, \mathbf{v}\sim\sigma}\Big[h(\mu_t, \nu_t, \pi_t, \sigma_t)- h(\mu^{N_1}_t, \nu^{N_2}_t, \hat\pi^{N_1}_t, \hat\sigma^{N_2}_t)\Big]\Big \|.
\end{align}
 Using the definition of the expectation operators, 
 \begin{align*}
    &\Big\|\sum_{t=0}^T \gamma^t \mathbb{E}_{\alpha, \beta, \mathbf{u}^{N_1}\sim\pi, \mathbf{v}^{N_2}\sim\sigma}\Big[h(\mu_t, \nu_t, \pi_t, \sigma_t)- h(\mu^{N_1}_t, \nu^{N_2}_t, \hat\pi^{N_1}_t, \hat\sigma^{N_2}_t)\Big] \Big\|\\ &=\Big\|\sum_{t=0}^T \gamma^t  \int_{\pi}\int_{\sigma}h(\mu_t, \nu_t, \pi_t, \sigma_t)\beta_t(\mu_t, \nu_t) \alpha_t(\mu_t, \nu_t)d\sigma d\pi\\ 
    &-\int_{\pi}\int_{\sigma}\Big[\sum_{\mathbf{u}^{N_1}}\sum_{\mathbf{v}^{N_2}} h(\mu^{N_1}_t, \nu^{N_2}_t, \hat\pi^{N_1}_t, \hat\sigma^{N_2}_t)\pi(\u^{N_1}|\x^{N_1}_t)\sigma(\v^{N_2}|\y^{N_1}_t)\Big]\beta_t(\mu_t, \nu_t) \alpha_t(\mu_t, \nu_t)d\sigma d\pi  \Big\|
\end{align*}
We already have, by weak LLN that $\dtv{\mu^{N_1}_t,\mu_t} \leq\mathcal{O}(1/\sqrt{N_1})$ and  $\dtv{\nu^{N_2}_t,\nu_t} \leq\mathcal{O}(1/\sqrt{N_2})$, with the error going to zero as the population size becomes large, i.e., $N_1, N_2 \to \infty$.
A similar argument using weak LLN can be made for $(\pi^{N_1}_t, \sigma^{N_2}_t) \to (\pi_t, \sigma_t)$ as $N_1, N_2 \to \infty$.
Furthermore, as the population size becomes large, the empirical policies constructed $(\hat\pi^{N_1}_t, \hat\sigma^{N_2}_t) \to (\pi^{N_1}_t, \sigma^{N_2}_t)$ because the number of samples of $\u^{N_1}_t$ and $\v^{N_2}_t$ scale with $(N_1, N_2)$.\footnote{Note that as $N_1\to\infty$, $\hat\pi^{N_1}_t(\cdot | x) \neq  \pi^{N_1}_t(\cdot | x) $ for states $x$ such that $\mu(x) = 0$ by the very definition of the approximation~\eqref{eqn:apprx-local-policy}. However, it is easy to check that the mean-field trajectories remain the same and the unoccupied states have no role to play in the evolution.} 
Thus, it follows that $(\hat\pi^{N_1}_t, \hat\sigma^{N_2}_t) \to (\pi_t, \sigma_t)$ as the team sizes become large and, coupled with the continuity of $h$ with respect to $\hat\pi^{N_1}_t, \hat\sigma^{N_2}_t$, the following simplification holds:
\begin{align*}
    \sum_{\mathbf{u}}\sum_{\mathbf{v}} h(\mu^{N_1}_t, \nu^{N_2}_t, \hat\pi^{N_1}_t, \hat\sigma^{N_2}_t)\pi(\u|\x^{N_1}_t)\sigma(\v|\y^{N_1}_t) &\to \sum_{\mathbf{u}}\sum_{\mathbf{v}} h(\mu_t, \nu_t, \pi_t, \sigma_t)\pi(\u|\x^{N_1}_t)\sigma(\v|\y^{N_1}_t)  \\ &= h(\mu_t, \nu_t, \pi_t, \sigma_t)\sum_{\mathbf{u}} \pi(\u^{N_1}|\x^{N_1}_t)\sum_{\mathbf{v}}\sigma(\v^{N_2}|\y^{N_1}_t)  \\&=
    h(\mu_t, \nu_t, \pi_t, \sigma_t)
\end{align*}

For the second term in~\eqref{eq:split-grad}, we first focus on the term 

\begin{align*}
    \Big(Q^{\alpha, \beta}(\mu^{N_1}_t, \nu^{N_2}_t, \hat\pi^{N_1}_t, \hat\sigma^{N_2}_t)-Q^{N, \alpha, \beta}(\mu^{N_1}_t, \nu^{N_2}_t, \hat\pi^{N_1}_t, \hat\sigma^{N_2}_t)\Big).
\end{align*}
The expectation in~\eqref{eq:q-func-finite} is with respect to $\u^{N_1}\sim\phi, \v^{N_2}\sim\psi$.
It follows from Lemma~\ref{lmm:equiv-expectation}, we can rewrite~\eqref{eq:q-func-finite}as
\begin{align*}
    &Q^{N, \alpha, \beta}(\mu^{N_1}_t, \nu^{N_2}_t, \hat\pi^{N_1}_t, \hat\sigma^{N_2}_t) \nonumber \\&= \expct{\sum_{\tau = 0}^\infty \gamma^\tau r(\mu^{N_1}_\tau, \nu^{N_2}_\tau) \big\vert \mu^{N_1}_{0} = \mu^{N_1}_t, \nu^{N_2}_{0} = \nu^{N_2}_t, \pi^{N_1}_{0} = \hat\pi^{N_1}_t, \sigma^{N_2}_{0} = \hat\sigma^{N_2}_t}{\pi\sim\alpha, \sigma\sim\beta,\u^{N_1}\sim\pi, \v^{N_2}\sim\sigma }
\end{align*}
On the other hand, the expectation in~\eqref{eq:q-func-infty} is with respect to $\alpha, \beta$.
Using a similar argument as~\eqref{eq:grad-log-prob-term-1-combined-expct}, we can write~\eqref{eq:q-func-infty} as

\begin{align*}
    &Q^{\alpha, \beta}(\mu^{N_1}_t, \nu^{N_2}_t, \hat\pi^{N_1}_t, \hat\sigma^{N_2}_t) \\&= \expct{\sum_{\tau = 0}^\infty \gamma^\tau r(\mu_\tau, \nu_\tau) \big\vert \mu_{0} = \mu^{N_1}_t, \nu_{0} =  \nu^{N_2}_t, \pi_{0} = \hat\pi^{N_1}_t, \sigma_{0} = \hat\sigma^{N_2}_t}{\pi\sim\alpha, \sigma\sim\beta,\u^{N_1}\sim\pi, \v^{N_2}\sim\sigma  }.
\end{align*}

We apply the following lemma along with Assumption~\ref{assumpt:bounded-gradient-log-prob} to establish convergence of the second term in~\eqref{eq:split-grad}.
\begin{lemma}
Under Assumption~\ref{assmpt:lipschitiz-model} and Lipschitz policies~\ref{eq:grad-log-prob-term-1-combined-expct}, $Q^{N, \alpha, \beta}\to Q^{\alpha, \beta}$ as $N_1, N_2\to\infty$. 
\end{lemma}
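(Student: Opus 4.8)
The two Q-functions share identical initial data $(\mu^{N_1}_t, \nu^{N_2}_t, \hat\pi^{N_1}_t, \hat\sigma^{N_2}_t)$ and differ only in how the population descriptors propagate: $Q^{N,\alpha,\beta}$ follows the \emph{stochastic} empirical distributions $(\M^{N_1}_\tau, \N^{N_2}_\tau)$, whereas $Q^{\alpha,\beta}$ follows the \emph{deterministic} mean-field trajectory $(\mu_\tau, \nu_\tau)$. The plan is to place both systems on a common probability space by coupling the coordinator's policy-selection randomness, bound the gap by the discounted sum of expected reward differences, and reduce each reward difference to a total-variation distance via the Lipschitz reward of Assumption~\ref{assmpt:lipschitiz-model}:
\begin{equation*}
    |Q^{N,\alpha,\beta} - Q^{\alpha,\beta}| \leq L_r \sum_{\tau=0}^\infty \gamma^\tau \, \mathbb{E}\big[\dtv{\M^{N_1}_\tau, \mu_\tau} + \dtv{\N^{N_2}_\tau, \nu_\tau}\big].
\end{equation*}
It then suffices to control the error sequence $e_\tau \triangleq \mathbb{E}[\dtv{\M^{N_1}_\tau,\mu_\tau} + \dtv{\N^{N_2}_\tau, \nu_\tau}]$, which satisfies $e_0 = 0$ since $\tau=0$ uses the common initialization and the fixed policy $(\hat\pi^{N_1}_t, \hat\sigma^{N_2}_t)$.

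\textbf{Recursion.} The core step derives a one-step recursion for $e_\tau$. I would insert the intermediate quantity obtained by propagating the \emph{current} empirical distribution one step through the deterministic mean-field map, and split by the triangle inequality into (i) a concentration error---the deviation of the realized next empirical distribution from its deterministic one-step propagation---and (ii) a propagation error---the distance between propagating $(\M^{N_1}_\tau, \N^{N_2}_\tau)$ versus $(\mu_\tau, \nu_\tau)$ through that map. Term (i) is controlled by Corollary~\ref{cor:mf-aprx}, contributing $\Residue$ uniformly in $\tau$. Term (ii) is controlled by the Lipschitz continuity of the transition kernels in Assumption~\ref{assmpt:lipschitiz-model} combined with the Lipschitz continuity of the local policies drawn by the coordinator (the Lipschitz-policy hypothesis of the lemma), bounding it by $C_1 e_\tau$ for a constant $C_1$ depending on $L_f, L_\phi$ and their Red counterparts. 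Hence $e_{\tau+1} \leq C_0/\sqrt{\subN} + C_1 e_\tau$, and since $e_0 = 0$, each $e_\tau$ is $\mathcal{O}(1/\sqrt{\subN})$ for every fixed $\tau$, so $e_\tau \to 0$ as $(N_1,N_2)\to\infty$.

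\textbf{Summation.} Because $C_1$ may exceed $1$, unrolling the recursion yields a per-step constant growing like $C_1^\tau$, so I would not sum the geometric series directly. Instead, mirroring the truncation used for the first term of~\eqref{eq:split-grad}, I would split $\sum_{\tau=0}^\infty$ at a horizon $T$. The tail $\sum_{\tau > T}$ is bounded \emph{uniformly in $N$} using $r\in[-R_{\max},R_{\max}]$ (Remark~\ref{rmk:uniform-Q}), giving $\sum_{\tau>T}\gamma^\tau\,2R_{\max} = 2R_{\max}\gamma^{T+1}/(1-\gamma)$, which is made arbitrarily small by choosing $T$ large, independently of $N$. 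For the remaining finite sum $\sum_{\tau=0}^T \gamma^\tau L_r e_\tau$, each of the finitely many terms tends to zero as $\subN\to\infty$ by the recursion bound. Letting first $\subN\to\infty$ and then $T\to\infty$ gives $|Q^{N,\alpha,\beta} - Q^{\alpha,\beta}|\to 0$.

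\textbf{Main obstacle.} The delicate point is term (ii): for $\tau\geq 1$ the local policy is itself sampled by the coordinator as a function of the current (random) mean-field, so the coupled systems apply policies drawn from $\alpha_\tau(\M^{N_1}_\tau,\N^{N_2}_\tau)$ versus $\alpha_\tau(\mu_\tau,\nu_\tau)$. Showing the induced one-step transition maps are Lipschitz in their distributional arguments requires coupling the policy-sampling randomness and carefully composing the Lipschitz-policy hypothesis with Assumption~\ref{assmpt:lipschitiz-model}; this is what makes $C_1$ well-defined and is the crux of closing the recursion. A secondary care point is ensuring that the coupling used to merge the two expectations into~\eqref{eq:grad-log-prob-term-1-combined-expct} does not perturb the one-step concentration bound, which holds because Corollary~\ref{cor:mf-aprx} conditions on the realized joint state.
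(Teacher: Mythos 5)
Your proposal matches the paper's proof in all essentials: both reduce the Q-function gap to a per-time-step comparison of expected rewards, run an induction/recursion over $\tau$, and split each step (by inserting the deterministic one-step propagation of the current empirical distribution) into a concentration error controlled by Corollary~\ref{cor:mf-aprx} plus a propagation error controlled by the Lipschitz model and Lipschitz-policy hypotheses together with the inductive hypothesis. Your explicit truncation of the discounted sum with a tail bound via $R_{\max}$, and your quantitative recursion $e_{\tau+1}\leq C_0/\sqrt{\subN}+C_1 e_\tau$, are a slightly more careful rendering of the step the paper dispatches with ``compact support \ldots it suffices to prove pointwise convergence'' and ``weak LLN plus continuity,'' but it is the same argument.
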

\begin{proof}
    Assumption~\ref{assmpt:lipschitiz-model} allows the $Q^{N, \alpha, \beta}, Q^{\alpha, \beta}$ to have compact support.
    Thus, it suffices to prove pointwise convergence~\citep{cui2023major} for each $\tau$, i.e.,
    
    \begin{align}\label{eq:q-con-pointwise}
        &\expct{\gamma^\tau r(\mu^{N_1}_\tau, \nu^{N_2}_\tau) \big\vert \mu^{N_1}_{0} = \mu^{N_1}_t, \nu^{N_2}_{0} = \nu^{N_2}_t, \pi^{N_1}_{0} = \hat\pi^{N_1}_t, \sigma^{N_2}_{0} = \hat\sigma^{N_2}_t}{} \to \nonumber\\
        &\expct{\gamma^\tau r(\mu_\tau, \nu_\tau) \big\vert \mu_{0} = \mu^{N_1}_t, \nu_{0} =  \nu^{N_2}_t, \pi_{0} = \hat\pi^{N_1}_t, \sigma_{0} = \hat\sigma^{N_2}_t}{}
    \end{align}
    as $N_1, N_2\to\infty$, where the expectation is taken with respect to $\alpha, \beta,\u\sim\pi, \v\sim\sigma$ (dropped for readability).
    We follow an induction approach similar to the one presented in~\cite{cui2023major}.
    It trivially holds at $\tau = 0$.
    Let us assume that~\eqref{eq:q-con-pointwise} holds at time step $\tau$.
    For subsequent $\tau + 1$,
    \begin{align*}
        &\gamma^\tau\Big\|\expct{ r(\mu_{\tau+1}, \nu_{\tau+1}) \big\vert \mu_{0} = \mu^{N_1}_t, \nu_{0} =  \nu^{N_2}_t, \pi_{0} = \hat\pi^{N_1}_t, \sigma_{0} = \hat\sigma^{N_2}_t}{} \\
        &-\expct{ r(\mu^{N_1}_{\tau+1}, \nu^{N_2}_{\tau+1}) \big\vert \mu^{N_1}_{0} = \mu^{N_1}_t, \nu^{N_2}_{0} = \nu^{N_2}_t, \pi^{N_1}_{0} = \hat\pi^{N_1}_t, \sigma^{N_2}_{0} = \hat\sigma^{N_2}_t}{} \Big\| 
        \\
        &\leq\gamma^\tau\Big\|\expct{ r\Big(\mu_{\tau}F(\mu_{\tau}, \nu_{\tau}, \pi_\tau), \nu_{\tau}G(\mu_{\tau}, \nu_{\tau}, \sigma_\tau)\Big) \big\vert \mu_{0} = \mu^{N_1}_t, \nu_{0} =  \nu^{N_2}_t, \pi_{0} = \hat\pi^{N_1}_t, \sigma_{0} = \hat\sigma^{N_2}_t}{} \\
        &-\expct{r\Big(\mu^{N_1}_{\tau}F(\mu^{N_1}_{\tau}, \nu^{N_2}_{\tau}, \phi_\tau), \nu^{N_2}_{\tau}G(\mu^{N_1}_{\tau}, \nu^{N_2}_{\tau}, \psi_\tau)\Big) \big\vert \mu^{N_1}_{0} = \mu^{N_1}_t, \nu^{N_2}_{0} = \nu^{N_2}_t, \pi^{N_1}_{0} = \hat\pi^{N_1}_t, \sigma^{N_2}_{0} = \hat\sigma^{N_2}_t}{} \Big\| \\
        &+\gamma^\tau \Big\|\expct{r\Big(\mu^{N_1}_{\tau}F(\mu^{N_1}_{\tau}, \nu^{N_2}_{\tau}, \phi_\tau), \nu^{N_2}_{\tau}G(\mu^{N_1}_{\tau}, \nu^{N_2}_{\tau}, \psi_\tau)\Big) \big\vert \mu^{N_1}_{0} = \mu^{N_1}_t, \nu^{N_2}_{0} = \nu^{N_2}_t, \pi^{N_1}_{0} = \hat\pi^{N_1}_t, \sigma^{N_2}_{0} = \hat\sigma^{N_2}_t}{} \\
        &-\expct{r(\mu^{N_1}_{\tau+1}, \nu^{N_2}_{\tau+1}) \big\vert \mu^{N_1}_{0} = \mu^{N_1}_t, \nu^{N_2}_{0} = \nu^{N_2}_t, \pi^{N_1}_{0} = \hat\pi^{N_1}_t, \sigma^{N_2}_{0} = \hat\sigma^{N_2}_t}{}\Big \|  
    \end{align*}

The convergence of the first term follows from the weak LLN~\cite{cui2023major} and a similar argument as used for~\eqref{eq:h-def-continuity}.
By using the Lipshcitzness of the reward function (Assumption~\ref{assmpt:lipschitiz-model}) and Corollary~\ref{cor:mf-aprx}, the second term in the expression is bounded by $\Residue$, where $\underline{N} = \min(N_1, N_2)$ and goes to zero as $(N_1, N_2) \to \infty.$
\end{proof}

It follows from the convergence of the two individual terms in~\ref{eq:split-grad} that $ \|\nabla_{\theta}  J(\alpha_\theta) - \nabla_{\theta}  \hat J^{N_1}(\alpha_\theta)\| $ as $(N_1, N_2)\to\infty$
\end{proof}

\maintheoremthree*
\begin{proof}[Sketch of Proof]
The key idea of the proof is to exploit the fact that the equivalent infinite-population games for $\G_1$ and $\G_2$ are the same as $N_1/N_2  = \barN_1/\barN_2$~\citep{guan2024zero}.
   We first compute the performance of $\phi^*$, the optimal identical team policy derived for the $\G_1$ using MF-MAPPO (finite-population training), when applied to the \textit{infinite-population} coordinator setting.
   We then compare the performance of the aforementioned $\phi^*$ for $\G_2$ with the \textit{infinite-population} coordinator.
   Then, equating the common equivalent coordinators for $\G_1$ and $\G_2$, the result follows.
\end{proof}

\begin{proof}

Consider game $\G_1$.
Let us restrict ourselves to the set of identical team policies $\Phi \subseteq \Phi^{N_1}$ and suppose that  $\phi^*$ is the optimal identical Blue team policy obtained from the $(N_1, N_2)$ finite population game via MF-MAPPO.
From Theorem~\ref{thm:finite-population-training}, we already know that
\begin{align*}
    \min_{\psi^{N_2} \in \Psi^{N_2}} 
       J^{N,\phi^*,\psi^{N_2}} (\bfx^{N_1}, \bfy^{N_2}) 
       \geq \underline{J}^{N*}(\bfx^{N_1}, \bfy^{N_2})-\Residue.
\end{align*}
Denote $J^{N,\phi^*,\psi^{N_2*}} (\bfx^{N_1}, \bfy^{N_2})  \triangleq \min_{\psi^{N_2} \in \Psi^{N_2}} 
       J^{N,\phi^*,\psi^{N_2}} (\bfx^{N_1}, \bfy^{N_2})$,
where $\psi^{N_2*}$ is the potentially non-identical policy that solves the optimization problem.

Now, for $\G_2$ we want to show that
\begin{align}\label{eq:g2-game-desired-result}
     \min_{\psi^{\bar{N}_2} \in \Psi^{\bar{N}_2}} 
        J^{\bar{N},\phi^*,\psi^{\barN_2}} (\bfx^{\barN_1}, \bfy^{\barN_2}) \geq \underline{J}^{\barN*}(\bfx^{\barN_1}, \bfy^{\barN_2})-\Residue,
\end{align}
where $\phi^*$ corresponds to the identical policy derived from $\G_1$.

We know that the $\frac{N_1}{N_2} = \frac{\barN_1}{\barN_2}$.
Thus, the infinite-population coordinator games remain identical for both $\G_1$ and $\G_2$.
%
%
We prove the theorem in the following two steps.

\textbf{Step 1.} We first show that for $\phi^*$ from $\G_1$ we have the following inequality for all joint states $\bfx^{N_1} \in \X^{N_1}$ and $\bfy^{N_2} \in \Y^{N_2}$, 
\begin{align}\label{eq:step1-inequality-e2m}
     {J}_\infty^{\phi^*, \tilde\psi^*_\infty}(\mu^{N_1}, \nu^{N_2})\geq  J^{N,\phi^*,\psi^{N_2*}} (\bfx^{N_1}, \bfy^{N_2}) - \Residue,
\end{align}
where $\mu^{N_1}=\empMu{\bfx^{N_1}}$ and $\nu^{N_2} = \empNu{\bfy^{N_2}}$ and $\tilde\psi^*_\infty\in\Psi$ is given by,
\begin{align}\label{eq:phi-star-inf-opt}
    {J}_\infty^{\phi^*, \tilde\psi^*_\infty} (\mu_0, \nu_0) = \min_{\psi \in \Psi} ~ J_\infty^{\phi^*, \psi}(\mu_0, \nu_0).
\end{align}
The above equation~\eqref{eq:phi-star-inf-opt} determines the optimal Red team response in the \textit{infinite-population} domain, when the Blue team plays $\phi^*$ derived from $\G_1$.
%
We prove this through an inductive argument. 

    \textit{Base case:}
    At the terminal timestep $T$, the two value functions are the same. Thus, we have, for all joint states $\bfx^{N_1}_T \in \bfX^{N_1}$ and $\bfy^{N_2}_T \in \bfY^{N_2}$, that
    \begin{equation*}
        {J}_{\infty,T}^{\phi^*, \tilde\psi^*_\infty}(\mu^{N_1}_T , \nu^{N_1}_T)
         =  J^{N,\phi^*,\psi^{N_2*}} _T(\bfx^{N_1}_T , \bfy^{N_2}_T) = r_t (\mu^{N_1}_T , \nu^{N_1}_T).
    \end{equation*}

    \textit{Inductive hypothesis:}
    Assume that, at time step $t+1$, the following holds for all $\bfx^{N_1}_{t+1} \in \bfX^{N_1}$ and $\bfy^{N_2}_{t+1} \in \bfY^{N_2}$,
    \begin{equation}
        {J}_{\infty,t+1}^{\phi^*, \tilde\psi^*_\infty}(\mu^{N_1}_{t+1} , \nu^{N_1}_{t+1}) \geq 
         J^{N,\phi^*,\psi^{N_2*}} _{t+1}(\bfx^{N_1}_{t+1} , \bfy^{N_2}_{t+1}) - \Residue.
    \end{equation}
    \textit{Induction:}
    Consider arbitrary $\bfx^{N_1}_{t} \in \bfX^{N_1}$ and $\bfy^{N_2}_{t} \in \bfY^{N_2}$.
    
    For simplicity, we do not emphasize the correspondence between the joint states and the \ED{}s for the rest of the proof, as it is clear from the context.
    For the identical team policies $(\phi_t^*,\psi_t) \in \Phi\times\Psi$, at time step $t$, denote 
    \begin{align}\label{eq:determinstic-extend-to-many-proof}
        \mu^{\phi^*}_{t+1} &\triangleq \mu^{N_1}_t F(\mu^{N_1}_t, \nu^{N_2}_t, \phi^*_t)\nonumber\\
        \nu^{\psi}_{t+1} &\triangleq \nu^{N_2}_t G(\mu^{N_1}_t, \nu^{N_2}_t, \psi_t).
    \end{align}
    For notational simplicity, we drop the conditions $\bfX_t^{N_1} =\bfx^{N_1}_t$ and $\bfY_t^{N_2}=\bfy^{N_2}_t$ in the following derivations.
    Then, we have
    \begin{align*}
        &  J^{N,\phi^*,\psi^{N_2*}}_{t}(\bfx^{N_1}_{t} , \bfy^{N_2}_{t}) \\
        &=  \min_{\psi^{N_2}_t \in \Psi^{N_2}_t} r_{t} (\mu^{N_1}_t, \nu^{N_2}_t) + \mathbb{E}_{\phi^{*}_t,\psi_t^{N_2}} \Big[J^{N,\phi^*,\psi^{N_2*}}_{t+1} (\bfX^{N_1}_{t+1} , \bfY^{N_2}_{t+1}) \Big]\\
        & \stackrel{\text{(i)}}{\leq}    r_{t} (\mu^{N_1}_t, \nu^{N_2}_t) + \min_{\psi_t^{N_2} \in \Psi^{N_2}_t}\mathbb{E}_{\phi^{*}_t,\psi_t^{N_2}} \Big[J_{\infty, t+1}^{\phi^*, \tilde\psi^*_\infty}(\M^{N_1}_{t+1}, \N^{N_2}_{t+1}) \Big]  +\Residue \\
        & \stackrel{\text{(ii)}}{\leq}  r_{t} (\mu^{N_1}_t, \nu^{N_2}_t) + \min_{\psi_t \in \Psi_t} \;\; \mathbb{E}_{\phi^{*}_t,\psi_t} \Big[J_{\infty, t+1}^{\phi^*, \tilde\psi^*_\infty}(\M^{N_1}_{t+1}, \N^{N_2}_{t+1}) \Big]  +\Residue\\
        &\stackrel{\text{(iii)}}{=}  r_{t} (\mu^{N_1}_t, \nu^{N_2}_t) +\Residue
        + \min_{\psi_t \in \Psi_t}\;\; \mathbb{E}_{\phi^{*}_t,\psi_t} \Big[J_{\infty, t+1}^{\phi^*, \tilde\psi^*_\infty}(\M^{N_1}_{t+1}, \N^{N_2}_{t+1})- J_{\infty, t+1}^{\phi^*, \tilde\psi^*_\infty}(\mu^{\phi^*}_{t+1}, \nu^{\psi_t}_{t+1}) \\&\qquad \qquad \qquad \qquad \qquad \qquad \qquad \qquad \qquad \qquad \qquad \qquad    + J_{\infty, t+1}^{\phi^*, \tilde\psi^*_\infty}(\mu^{\phi^*}_{t+1}, \nu^{\psi_t}_{t+1})\Big]\\
        &\stackrel{\text{(iv)}}{\leq}  r_{t} (\mu^{N_1}_t, \nu^{N_2}_t)  +\Residue
        + \min_{\psi_t \in \Psi_t}\;\; \mathbb{E}_{\phi^{*}_t,\psi_t} \Big[L_{J, t+1}\dtv{\M^{N_1}_{t+1}, \mu^{\phi^*}_{t+1}}  \\&\qquad \qquad \qquad \qquad \qquad \qquad \qquad \qquad   + L_{J, t+1}\dtv{\N^{N_2}_{t+1}, \nu^{\psi_t}_{t+1}} + J_{\infty, t+1}^{\phi^*, \tilde\psi^*_\infty}(\mu^{\phi^*}_{t+1}, \nu^{\psi_t}_{t+1})\Big]\\
        &\stackrel{\text{(v)}}{=}  r_{t} (\mu^{N_1}_t, \nu^{N_2}_t) +\Residue
        + \min_{\psi_t \in \Psi_t} J_{\infty, t+1}^{\phi^*, \tilde\psi^*_\infty}(\mu^{\phi^*}_{t+1}, \nu^{\psi_t}_{t+1})+  L_{J, t+1}\mathbb{E}_{\psi_t} \Big[ \dtv{\N^{N_2}_{t+1}, \nu^{\psi_t}_{t+1}} \Big] \\&\qquad \qquad \qquad \qquad \qquad \qquad \qquad \qquad \qquad \qquad \qquad \qquad    + L_{J, t+1}\mathbb{E}_{\phi^*_t} \Big[\dtv{\M^{N_1}_{t+1}, \mu^{\phi^*}_{t+1}} \Big]\\
        &\stackrel{\text{(vi)}}{\leq}  r_{t} (\mu^{N_1}_t, \nu^{N_2}_t)+ \min_{\psi_t \in \Psi_t} J_{\infty, t+1}^{\phi^*, \tilde\psi^*_\infty}(\mu^{\phi^*}_{t+1}, \nu^{\psi_t}_{t+1}) +\Residue\\
        &\stackrel{\text{(vii)}}{=}   r_{t} (\mu^{N_1}_t, \nu^{N_2}_t)+ \min_{\nu_{t+1}\in \mathds{R}_{\nu,t}(\mu^{N_1}_t, \nu^{N_2}_t)} J_{\infty, t+1}^{\phi^*, \tilde\psi^*_\infty}(\mu^{\phi^*}_{t+1}, \nu^{\psi_t}_{t+1}) +\Residue\\
        &\stackrel{\text{(viii)}}{=} J_{\infty, t}^{\phi^*, \tilde\psi^*_\infty}(\mu^{N_1}_t, \nu^{N_2}_t ) +\Residue
    \end{align*}
    For inequality (i), we used the inductive hypothesis; 
    for inequality (ii), we reduced the optimization domain of the Red team to the set of \textit{identical} team policies;
    inequality (iv) is a result of the Lipschitz continuity of the value function~\citep{guan2024zero};
    for equality (v), we use the fact that the mean-fields $(\mu^{\phi^*}_{t+1}, \nu^{\psi_t}_{t+1})$ are induced deterministically from the distributions at time $t$;
    inequality (vi) holds as a consequence of Corollary~\ref{cor:mf-aprx}; (vii) converts the optimization domain from the policy space $\psi_t\in\Psi_t$ to the corresponding reachable
set $\mathds{R}_{\nu,t}(\mu^{N_1}_t, \nu^{N_2}_t) \triangleq \{\nu_{t+1} \vert  \exists \psi_t \in \Psi_t \st \! \nu_{t+1} = \nu_t G^{}_t(\mu_t,\nu_t, \psi_t)\}$ following~\citep{guan2024zero}; (vii) follows from the definition of $J_{\infty, t}^{\phi^*, \tilde\psi^*_\infty}(\mu^{N_1}_t, \nu^{N_2}_t )$.
    Thus, $J^{N,\phi^*,\psi^{N_2*}}(\bfx^{N_1}, \bfy^{N_2}) = J^{N,\phi^*,\psi^{N_2*}}_0(\bfx^{N_1}, \bfy^{N_2})$ and~\eqref{eq:step1-inequality-e2m} follows.
    
\textbf{Step 2.} We show that for $\phi^*$ from $\G_1$ we have the following inequality for all joint states $\bfx^{\barN_1} \in \X^{\barN_1}$ and $\bfy^{\barN_2} \in \Y^{\barN_2}$ in $\G_2$, 
\begin{align}\label{eqn:step2-inequality-e2m}
   \min_{\psi^{\bar{N}_2} \in \Psi^{\bar{N}_2}} 
        J^{\bar{N},\phi^*,\psi^{\barN_2}} (\bfx^{\barN_1}, \bfy^{\barN_2}) \geq {J}_\infty^{\phi^*, \tilde\psi^*_\infty} (\mu^{\barN_1}, \nu^{\barN_2})  -\ResidueNew,
\end{align}
where $\mu^{\barN_1}=\empMu{\bfx^{\barN_1}}$, $\nu^{\barN_2} = \empNu{\bfy^{\barN_2}}$ and $\bar{\subN} =\min(\barN_1, \barN_2)$.  The proof is constructed based on induction.
    Fix an arbitrary Red team policy $\psi^{\barN_2} \in \Psi^{\barN_2}$.
    
    \textit{Base case:}
    At the terminal timestep $T$, since there is no decision to be made, both value functions are equal to the terminal reward and are thus the same.
    Formally, for all $\bfx^{\barN_1}_T \in \bfX^{\barN_1}$ and $\bfy^{\barN_2}_T \in \bfY^{\barN_2}$,
    \begin{align*}
        {J}_{T}^{{\bar{N},\phi^*,\psi^{\barN_2}}}(\bfx^{\barN_1}_T , \bfy^{\barN_2}_T) = {J}_{\infty,T}^{\phi^*, \tilde\psi^*_\infty}(\mu^{N_1}_T , \nu^{N_1}_T)
         =  r_t (\mu^{\barN_1}_T , \nu^{\barN_1}_T),
    \end{align*} 
    where $\mu^{\barN_1}_T = \empMu{\bfx^{\barN_1}_T}$ and $\nu^{\barN_2}_T = \empNu{\bfy^{\barN_2}_T}$.
    For simplicity, we do not emphasize the correspondence between the joint states and the \ED{}s for the rest of the proof, as it is clear from the context.
    
    \textit{Inductive hypothesis: }
    Assume that at $t+1$, the following holds for all joint states  $\bfx^{\barN_1}_{t+1} \in \bfX^{\barN_1}$ and $\bfy^{\barN_2}_{t+1} \in \bfY^{\barN_2}$:
    \begin{equation}
        J_{t+1}^{{\bar{N},\phi^*,\psi^{\barN_2}}}(\bfx^{\barN_1}_{t+1}, \bfy^{N_2}_{t+1}) \geq 
        {J}_{\infty,t+1}^{\phi^*, \tilde\psi^*_\infty}(\mu^{\barN_1}_{t+1} , \nu^{\barN_1}_{t+1}) - \ResidueNew.
    \end{equation}
    
    \textit{Induction: }
    At timestep $t$, consider an arbitrary pair of joint states $\bfx^{\barN_1}_{t} \in \bfX^{\barN_1}$ and $\bfy^{\barN_2}_{t} \in \bfY^{\barN_2}$, and their corresponding \ED{}s $\mu^{\barN_1}_t$ and $\nu^{\barN_2}_t$.

    For the identical team policy $\phi^* \in \Phi$,  denote $\mu^{\phi^*}_{t+1})$ from~\eqref{eq:determinstic-extend-to-many-proof}.
    Furthermore, from Theorem 1 in~\cite{guan2024zero} there exists a $\nu^{\psi^{\barN_2}_t}_{\apprx, t+1} $ for the Red team policy $\psi_t^{N_2}$ such that
    \begin{equation}
        \label{eqn:red-apprx-nu}
        \expct{\dtv{\N^{\barN_2}_{t+1},\nu^{\psi^{\barN_2}_t}_{\apprx,t+1}}\big \vert \bfX_t^{\barN_1} =\bfx^{\barN_1}_t, \bfY_t^{\barN_2}=\bfy^{\barN_2}_t }{\psi_t^{\barN_2}} \leq \frac{|\Y|}{2}\sqrt{\frac{1}{\barN_2}}.
    \end{equation}
     
    Then, for all joint states $\bfx^{\barN_1}_t \in \bfX^{\barN_1}$ and $\bfy^{\barN_2}_t \in \bfY^{\barN_2}$, we have
    \begin{align*}
        & J_{t}^{\bar{N},\phi^*,\psi^{\barN_2}}(\bfx^{\barN_1}_t, \bfy^{\barN_2}_t)\\
        &=r_t(\mu^{\barN_1}_t, \nu^{\barN_2}_t) + \mathbb{E}_{\phi^*, \psi^{\barN_2}}\Big[J_{t+1}^{\bar{N},\phi^*,\psi^{\barN_2}}(\bfX^{\barN_1}_{t+1}, \bfY^{\barN_2}_{t+1}) \Big]\\
        &\stackrel{\text{(i)}}{\geq}  r_t(\mu^{\barN_1}_t, \nu^{\barN_2}_t) + \mathbb{E}_{\phi^*, \psi^{\barN_2}}\Big[J_{\infty, t+1}^{\phi^*, \tilde\psi^*_\infty}(\M^{\barN_1}_{t+1}, \N^{\barN_2}_{t+1}) \Big] - \Residue
        \\
        &=r_t(\mu^{\barN_1}_t, \nu^{\barN_2}_t)  - \ResidueNew + \mathbb{E}_{\phi^*, \psi^{\barN_2}}\Big[J_{\infty, t+1}^{\phi^*, \tilde\psi^*_\infty}(\M^{\barN_1}_{t+1}, \N^{\barN_2}_{t+1})
        \\&\qquad \qquad \qquad \qquad \qquad \qquad \qquad \qquad- J_{\infty, t+1}^{\phi^*, \tilde\psi^*_\infty}(\mu^{\phi^*}_{t+1}, \nu^{\psi^{\barN_2}_t}_{\apprx,t+1})  + J_{\infty, t+1}^{\phi^*, \tilde\psi^*_\infty}(\mu^{\phi^*}_{t+1}, \nu^{\psi^{\barN_2}_t}_{\apprx,t+1}) \Big] \nonumber\\
        &\stackrel{\text{(ii)}}{\geq}  r_t(\mu^{\barN_1}_t, \nu^{\barN_2}_t) +
        J_{\infty, t+1}^{\phi^*, \tilde\psi^*_\infty}(\mu^{\phi^*}_{t+1}, \nu^{\psi^{\barN_2}_t}_{\apprx,t+1}) -\ResidueNew- L_{J, t+1}  \underbrace{\mathbb{E}_{\phi^*}\Big[\dtv{\M_{t+1}^{\barN_1},\mu^{\phi^*}_{t+1}}\Big]}_{=\mathcal{O}(\frac{1}{\sqrt{\barN_1}}) \text{ due to Corollary~\ref{cor:mf-aprx}}}  
        \\
        &  \qquad \qquad \qquad \qquad \qquad \qquad \qquad \qquad\qquad \qquad -\underbrace{L_{J, t+1} \mathbb{E}_{ \psi^{N_2}}\Big[\dtv{ \N^{N_2}_{t+1},\nu^{\psi^{\barN_2}_t}_{\apprx,t+1}}\Big]}_{=\mathcal{O}(\frac{1}{\sqrt{\barN_2}}) \text{ due to\eqref{eqn:red-apprx-nu}}}  
        \nonumber \\
        & \stackrel{\text{(iii)}}{\geq}  r_t(\mu^{\barN_1}_t, \nu^{\barN_2}_t) +
         J_{\infty, t+1}^{\phi^*, \tilde\psi^*_\infty}(\mu^{\phi^*}_{t+1}, \nu^{\psi^{\barN_2}_t}_{\apprx,t+1})-\ResidueNew \\
        &\stackrel{\text{(iv)}}{\geq}  r_t(\mu^{\barN_1}_t, \nu^{\barN_2}_t) +
         \min_{\nu_{t+1}\in \mathds{R}_{\nu,t}(\mu^{\barN_1}_t, \nu^{\barN_2}_t)} J_{\infty, t+1}^{\phi^*, \tilde\psi^*_\infty}(\mu^{\phi^*}_{t+1}, \nu_{t+1})-\ResidueNew \\
        &\stackrel{\text{(v)}}{=}  {J}_{\infty,t}^{\phi^*, \tilde\psi^*_\infty}(\mu^{\barN_1}_{t} , \nu^{\barN_2}_{t}) -\ResidueNew.
    \end{align*}
    For inequality (i), we used the inductive hypothesis; 
    for inequality (ii), we utilized the Lipschitz continuity of the coordinator value function~\citep{guan2024zero} using mean-field distributions $(\mu^{\phi^*}_{t+1}, \nu^{\psi^{\barN_2}_t}_{\apprx,t+1})$ from~\eqref{eq:determinstic-extend-to-many-proof} and the the deterministic transitions from $t$ to $t+1$;
    inequality (iv) follows from the definition of the $\min$ operator; and equality (v) follows from the definition of $J_{\infty, t}^{\phi^*, \tilde\psi^*_\infty}(\mu^{N_1}_t, \nu^{N_2}_t )$ which completes the induction.

    Since the Red team policy $\psi^{\barN_2} \in \Psi^{\barN_2}$ is arbitrary, we have that, for all joint states $\bfx^{\barN_1} \in \bfX^{\barN_1}$ and $\bfy^{\barN_2} \in \bfY^{\barN_2}$,
    \begin{align*}
        \min_{\psi^{\bar{N}_2} \in \Psi^{\bar{N}_2}} 
        J^{\bar{N},\phi^*,\psi^{\barN_2}} (\bfx^{\barN_1}, \bfy^{\barN_2}) &= \min_{\psi^{\bar{N}_2} \in \Psi^{\bar{N}_2}} 
        J_0^{\bar{N},\phi^*,\psi^{\barN_2}} (\bfx^{\barN_1}, \bfy^{\barN_2})\\& \geq \underline{J}_{\infty}^{\phi^*, \tilde\psi^*_\infty} (\mu^{\barN_1}, \nu^{\barN_2})  -\ResidueNew.
    \end{align*}
     
 %

\textbf{Step 3.}  Combining~\eqref{eq:step1-inequality-e2m} from Step 1 with Theorem~\ref{thm:finite-population-training}, we have for all joint states $\bfx^{N_1} \in \X^{N_1}$ and $\bfy^{N_2} \in \Y^{N_2}$,
\begin{align}
        \label{eqn:cor-apprx-err-bound}
        {J}_\infty^{\phi^*, \tilde\psi^*_\infty} (\mu^{N_1} , \nu^{N_1}) &\geq J^{N,\phi^*,\psi^{N_2*}}(\bfx^{N_1}, \bfy^{N_2})  - \Residue\nonumber\\&\geq  \lowervalue^{N*}(\bfx^{N_1} , \bfy^{N_2} )-\Residue,
    \end{align}
    where $\mu^{N_1}=\empMu{\bfx^{N_1}}$ and $\nu^{N_2} = \empNu{\bfy^{N_2}}$. 
    %
    

%

Assuming that the initial distributions are the same for $\G_1$ and $\G_2$, i.e., $\mu^{N_1}=\mu^{\barN_1}$ and $\nu^{N_2} = \nu^{\barN_2}$ we have the following sequence of inequalities
\begin{align*}
    \min_{\psi^{\bar{N}_2} \in \Psi^{\bar{N}_2}} 
        J^{\bar{N},\phi^*,\psi^{\barN_2}} (\bfx^{\barN_1}, \bfy^{\barN_2}) &\stackrel{\text{(i)}}{\geq} {J}_\infty^{\phi^*, \tilde\psi^*_\infty} (\mu^{\barN_1}, \nu^{\barN_2})  -\ResidueNew \\
        &\stackrel{\text{(ii)}}{\geq} \lowervalue^{N*}(\bfx^{N_1} , \bfy^{N_2} )-\Residue - \ResidueNew\\
        &\stackrel{\text{(iii)}}{\geq}  \underline{J}_\infty^{\phi^*_\infty, \psi^*_\infty} (\mu^{N_1} , \nu^{N_1}) - \Residue -\ResidueNew\\
        &\stackrel{\text{(iv)}}{\geq}  \underline{J}^{\barN*}(\bfx^{\barN_1}, \bfy^{\barN_2})  - \Residue,
\end{align*}
where ${J}_\infty^{\phi^*_\infty, \psi^*_\infty} (\mu_0, \nu_0) = \max_{\phi \in \Phi}\min_{\psi \in \Psi} ~ J_\infty^{\phi, \psi}(\mu_0, \nu_0)$.
Note that (i) follows from Step 2~\eqref{eqn:step2-inequality-e2m}; (ii) is a consequence of~\eqref{eqn:cor-apprx-err-bound} and inequalities (iii) and (iv) follow from Lemmas 5 (applied on game $\G_1$) and 6 (applied on game $\G_2$) in~\cite{guan2024zerosumgameslargepopulationteams} respectively, combined with the fact that $\min(\barN_1, \barN_2) > \min(N_1, N_2)$, completing the proof.
\end{proof}

\mainpropositionlipschitz*
\begin{proof}
We prove the proposition in two parts.

\textbf{Step 1.} We first show that $\textstyle \left\|\nabla_{\eta} \log \phi(u \mid x, \mu, \nu)\right\|_2 \leq \frac{L_\phi}{2|\U|}$ implies
\begin{align}\label{eq:lcp-prop-temp}
    \sum_{u}|\log\phi_t\left(u|x, \hat\mu,\hat\nu\right) - 
    \log\phi_t\left(u|x, \hat\mu',\hat\nu'\right)| &\leq L_\phi\left(\dtv{\hat\mu, \hat\mu'} + \dtv{\hat\nu, \hat\nu'}\right) ~~\forall~~ x\in\X.
\end{align}
By the Definition of $\eta$, we can denote $\log\phi_t\left(u|x, \eta\right) \triangleq \log\phi_t\left(u|x, \hat\mu,\hat\nu\right) $.
By the Mean-Value Theorem and the bounded gradient from Step 1 we have for all $x\in\X$, $u\in\U$, $\eta\in\P(\X)\times\P(\Y)$,
\begin{align*}
    \vert\log\phi_t\left(u|x, \eta\right) - 
    \log\phi_t\left(u|x, \eta'\right)\vert &\leq \sup_{\eta\in \P(\X)\times\P(\Y)}\left\|\nabla_{\eta} \log \phi(u \mid x, \eta)\right\|_2\left\|\eta - \eta'\right\|_2\\
    &\leq \sup_{\eta\in \P(\X)\times\P(\Y)}\left\|\nabla_{\eta} \log \phi(u \mid x, \eta)\right\|_2\left\|\eta - \eta'\right\|_1\\
    &\leq \frac{L_\phi}{2|\U|}\cdot 2\left(\dtv{\hat\mu, \hat\mu'} + \dtv{\hat\nu, \hat\nu'}\right) \\
    &= \frac{L_\phi}{|\U|}\left(\dtv{\hat\mu, \hat\mu'} + \dtv{\hat\nu, \hat\nu'}\right).
\end{align*}
Summing over all actions for any given $x\in\X$,
\begin{align*}
    \sum_u\vert\log\phi_t\left(u|x, \eta\right) - 
    \log\phi_t\left(u|x, \eta'\right)\vert &\leq \sum_u\frac{L_\phi}{|\U|}\left(\dtv{\hat\mu, \hat\mu'} + \dtv{\hat\nu, \hat\nu'}\right)\\
    &\leq L_\phi\left(\dtv{\hat\mu, \hat\mu'} + \dtv{\hat\nu, \hat\nu'}\right) \sum_u\frac{1}{|\U|}\\
    &\leq L_\phi\left(\dtv{\hat\mu, \hat\mu'} + \dtv{\hat\nu, \hat\nu'}\right),
\end{align*}
completing the proof of Part 1.

\textbf{Step 2.} Using Step 1, we show that~\eqref{eq:lcp-prop-temp} implies~\eqref{eq:mf-lcp-alter}.
Define $h(z) \triangleq e^z$ and values $ a\triangleq\log\phi_t\left(u|x, \hat\mu',\hat\nu'\right)$ and $b\triangleq\log\phi_t\left(u|x, \hat\mu,\hat\nu\right)$. 
It follows from the Mean-Value Theorem that for all $x\in\X$ and $u\in\U$,
\begin{align*}
    \vert h(b)-h(a)\vert &= \vert h'(c)\vert\vert b-a\vert,  ~\min(a, b) \leq c \leq \max(a, b)\\
    \vert\phi_t\left(u|x, \hat\mu,\hat\nu\right)- \phi_t\left(u|x, \hat\mu',\hat\nu'\right)\vert&= e^c\vert\log\phi_t\left(u|x, \hat\mu,\hat\nu\right)-\log\phi_t\left(u|x, \hat\mu',\hat\nu'\right)\vert\\
     &\leq e^{\max(a, b)}\vert\log\phi_t\left(u|x, \hat\mu,\hat\nu\right)-\log\phi_t\left(u|x, \hat\mu',\hat\nu'\right)\vert.
\end{align*}
Note that $e^a = \phi_t\left(u|x, \hat\mu',\hat\nu'\right) \leq 1$ and $e^b = \phi_t\left(u|x, \hat\mu,\hat\nu\right) \leq 1$.
Consequently $e^{\max(a, b)}\leq 1$ and we have for all $x\in\X$,
\begin{align*}
    \vert\phi_t\left(u|x, \hat\mu,\hat\nu\right)- \phi_t\left(u|x, \hat\mu',\hat\nu'\right)\vert &\leq \vert\log\phi_t\left(u|x, \hat\mu,\hat\nu\right)-\log\phi_t\left(u|x, \hat\mu',\hat\nu'\right)\vert\\
     \sum_u\vert\phi_t\left(u|x, \hat\mu,\hat\nu\right)- \phi_t\left(u|x, \hat\mu',\hat\nu'\right)\vert &\leq \sum_u\vert\log\phi_t\left(u|x, \hat\mu,\hat\nu\right)-\log\phi_t\left(u|x, \hat\mu',\hat\nu'\right)\vert\\
     \sum_u\vert\phi_t\left(u|x, \hat\mu,\hat\nu\right)- \phi_t\left(u|x, \hat\mu',\hat\nu'\right)\vert &\leq L_\phi\left(\dtv{\hat\mu, \hat\mu'} + \dtv{\hat\nu, \hat\nu'}\right),
\end{align*}
where the last inequality follows from~\eqref{eq:lcp-prop-temp}
\end{proof}

\mainpropthree*
\begin{proof}[Sketch of Proof]
  MFs induced by identical team policies in an infinite-population game closely approximate the \ED{}s induced in the corresponding finite-population game.
  Combining this consequence with Lipschitz continuity in the dynamics, rewards (Assumption~\ref{assmpt:lipschitiz-model}) and policies (Proposition~\ref{prop:lipschitz-proposition}), we prove the statement.
\end{proof}

\begin{proof}
For the performance guarantee, from~\eqref{eqn:estimator-eval}, we have,
\begin{align*}
\Delta{J}(\phi^*_t, \psi^*_t) = \mathbb{E}_{{\phi^*, \psi^*}}
    \Big[\left\vert\sum_{t=0}^{T} r_t(\M^{N_1}_t, \N^{N_2}_t) - \sum_{t=0}^{T}r_t(\hat\M^{N_1}_t, \hat\N^{N_2}_t) \right\vert \Big \vert \bfx_0^{N_1}=\hat\bfx_0^{N_1}, \bfy^{N_2}_0=\hat\bfy^{N_2}_0\! \Big].
\end{align*}

From Lipschitz rewards (Assumption~\ref{assmpt:lipschitiz-model}), we have,
\begin{align}\label{eq:simplify-equality-eval}
&\Delta{J}(\phi^*_t, \psi^*_t; \Gamma) \nonumber\\&= \mathbb{E}_{{\phi^*, \psi^*}}
    \Big[\left \vert\sum_{t=0}^{T} r_t(\M^{N_1}_t, \N^{N_2}_t) - \sum_{t=0}^{T}r_t(\hat\M^{N_1}_t, \hat\N^{N_2}_t)\right\vert\Big \vert \bfx_0^{N_1}=\hat\bfx_0^{N_1}, \bfy^{N_2}_0=\hat\bfy^{N_2}_0\! \Big]\nonumber\\
    &\leq \mathbb{E}_{{\phi^*, \psi^*}}
    \Big[\sum_{t=0}^{T} L_r\left(\dtv{\M^{N_1}_t, \hat\M^{N_1}_t}  + \dtv{\N^{N_2}_t, \hat\N^{N_2}_t}\right) \Big \vert \bfx_0^{N_1}=\hat\bfx_0^{N_1}, \bfy^{N_2}_0=\hat\bfy^{N_2}_0\! \Big]\nonumber\\
    &\leq L_r\sum_{t=0}^{T}\mathbb{E}_{{\phi^*, \psi^*}}
    \Big[ \dtv{\M^{N_1}_t, \hat\M^{N_1}_t}  + \dtv{\N^{N_2}_t, \hat\N^{N_2}_t} \Big \vert \bfx_0^{N_1}=\hat\bfx_0^{N_1}, \bfy^{N_2}_0=\hat\bfy^{N_2}_0\! \Big],
\end{align}
where the last inequality follows from the linearity of expectations.

\textbf{Step 1.} We first show the following for all time steps $t$, 
\begin{align*}
    &\mathbb{E}_{{\phi^*, \psi^*}}
    \Big[ \dtv{\M^{N_1}_{t+1}, \hat\M^{N_1}_{t+1}}  + \dtv{\N^{N_2}_{t+1}, \hat\N^{N_2}_{t+1}} \Big \vert \bfx_t^{N_1}, \hat\bfx_t^{N_1}, \bfy^{N_2}_t,\hat\bfy^{N_2}_t\! \Big]\\ &\leq \kappa_1 + \kappa_2\left (\dtv{\M^{N_1}_{t}, \hat\M^{N_1}_{t}}  + \dtv{\N^{N_2}_{t}, \hat\N^{N_2}_{t}}\right),
\end{align*}
where $\kappa_1 \triangleq  \frac{1}{2}(L_\phi + L_\psi)\epsilon + \Residue $ and $\kappa_2 \triangleq (1+L_f + \frac{1}{2}L_\phi + \frac{1}{2}L_\psi)$.

By the linearity property of expectations,
\begin{align*}
    &\mathbb{E}_{{\phi^*, \psi^*}}
    \Big[ \dtv{\M^{N_1}_{t+1}, \hat\M^{N_1}_{t+1}}  + \dtv{\N^{N_2}_{t+1}, \hat\N^{N_2}_{t+1}} \Big \vert \bfx_t^{N_1}, \hat\bfx_t^{N_1}, \bfy^{N_2}_t,\hat\bfy^{N_2}_t\! \Big] \\
    &=\mathbb{E}_{{\phi^*}}
    \Big[ \dtv{\M^{N_1}_{t+1}, \hat\M^{N_1}_{t+1}}  \Big \vert \bfx_t^{N_1}, \hat\bfx_t^{N_1}, \bfy^{N_2}_t,\hat\bfy^{N_2}_t\! \Big] 
    + \mathbb{E}_{{\psi^*}}
    \Big[ \dtv{\N^{N_2}_{t+1}, \hat\N^{N_2}_{t+1}} \Big \vert \bfx_t^{N_1}, \hat\bfx_t^{N_1}, \bfy^{N_2}_t,\hat\bfy^{N_2}_t\! \Big]
\end{align*}
We now bound the Blue team's mean-field, i.e., the first term on the RHS. 
The Red team can be bounded similarly (second term).
\begin{align}\label{eq:blue-splits-estimator-eval}
    &\mathbb{E}_{{\phi^*}}
    \Big[ \dtv{\M^{N_1}_{t+1}, \hat\M^{N_1}_{t+1}}  \Big \vert \bfx_t^{N_1}, \hat\bfx_t^{N_1}, \bfy^{N_2}_t,\hat\bfy^{N_2}_t\! \Big]  \nonumber\\
    &\leq\mathbb{E}_{{\phi^*}}
    \Big[ \dtv{\M^{N_1}_{t+1}, \M_{t+1}} \Big \vert \bfx_t^{N_1}, \hat\bfx_t^{N_1}, \bfy^{N_2}_t,\hat\bfy^{N_2}_t\! \Big]\nonumber\\
    &+ \mathbb{E}_{{\phi^*}}
    \Big[ \dtv{\M_{t+1}, \hat\M_{t+1}} \Big \vert \bfx_t^{N_1}, \hat\bfx_t^{N_1}, \bfy^{N_2}_t,\hat\bfy^{N_2}_t\! \Big]\nonumber\\
    &+ \mathbb{E}_{{\phi^*}}
    \Big[ \dtv{\hat\M^{N_1}_{t+1}, \hat\M_{t+1}} \Big \vert \bfx_t^{N_1}, \hat\bfx_t^{N_1}, \bfy^{N_2}_t,\hat\bfy^{N_2}_t\! \Big],
\end{align}
where $\M_{t+1} = \M^{N_1}_{t}F(\M^{N_1}_{t}, \N^{N_2}_{t}, \phi^*)$ and $\hat\M_{t+1} = \hat\M^{N_1}_{t}F(\hat\M^{N_1}_{t}, \hat\N^{N_2}_{t}, \phi^*, \Gamma_{\textrm{D-PC}})$ are shorthand notations for the next induced mean-field from the \textit{infinite}-population \textit{deterministic} dynamics~\citep{guan2024zero} with and without the estimator $\Gamma_{\textrm{D-PC}}$. 
From Corollary~\ref{cor:mf-aprx} in~\cite{guan2024zero}, we have 
\begin{align*}
    \mathbb{E}_{{\phi^*}}
    \Big[ \dtv{\M^{N_1}_{t+1}, \M_{t+1}} \Big \vert \bfx_t^{N_1}, \hat\bfx_t^{N_1}, \bfy^{N_2}_t,\hat\bfy^{N_2}_t\! \Big] &\leq \Residue, \\
    \mathbb{E}_{{\phi^*}}
    \Big[ \dtv{\hat\M^{N_1}_{t+1}, \hat\M_{t+1}} \Big \vert \bfx_t^{N_1}, \hat\bfx_t^{N_1}, \bfy^{N_2}_t,\hat\bfy^{N_2}_t\! \Big]&\leq \Residue.
\end{align*}
Thus, we are left to simplify 

\begin{align*}
    &\mathbb{E}_{{\phi^*}}\Big[ \dtv{\M_{t+1}, \hat\M_{t+1}} \Big \vert \bfx_t^{N_1}, \hat\bfx_t^{N_1}, \bfy^{N_2}_t,\hat\bfy^{N_2}_t\! \Big] \\&= \dtv{\M^{N_1}_{t}F(\M^{N_1}_{t}, \N^{N_2}_{t}, \phi^*),  \hat\M^{N_1}_{t}F(\hat\M^{N_1}_{t}, \hat\N^{N_2}_{t}, \phi^*, \Gamma_{\textrm{D-PC}})},
\end{align*}
where the expectation vanishes due to \textit{deterministic} transitions under identical policies $(\phi^*, \psi^*)$.
Now, 
\begin{align}\label{eq:I1-I2}
    &2\dtv{\M^{N_1}_{t}F(\M^{N_1}_{t}, \N^{N_2}_{t}, \phi^*),  \hat\M^{N_1}_{t}F(\hat\M^{N_1}_{t}, \hat\N^{N_2}_{t}, \phi^*, \Gamma_{\textrm{D-PC}})} \\
    &=\sum_{x'\in\X}\Big\vert \sum_{x\in \X}\sum_{u \in \U} f_t(x'|x, u, \M^{N_1}_{t}, \N^{N_2}_{t})\phi^*_t(u|x,  \M^{N_1}_{t}, \N^{N_2}_{t}) \M^{N_1}_{t}(x) \nonumber \\&-\sum_{x \in \X} \sum_{u \in \U} f_t(x'|x, u,\hat\M^{N_1}_{t}, \hat\N^{N_2}_{t})\phi^*_t(u|x, \hat\M^{N_1}_{t}, \hat\N^{N_2}_{x, t}) \hat\M^{N_1}_{t}(x)\Big\vert \nonumber \\
    &\leq \sum_{x'\in\X}\sum_{x\in \X}\sum_{u \in \U} \Big\vert  f_t(x'|x, u, \M^{N_1}_{t}, \N^{N_2}_{t})\phi^*_t(u|x,  \M^{N_1}_{t}, \N^{N_2}_{t}) \M^{N_1}_{t}(x) \nonumber \\&- f_t(x'|x, u,\hat\M^{N_1}_{t}, \hat\N^{N_2}_{t})\phi^*_t(u|x, \hat\M^{N_1}_{t}, \hat\N^{N_2}_{x, t}) \hat\M^{N_1}_{t}(x)\Big\vert\nonumber .
\end{align}
Firstly, we add and subtract $f_t(x'|x, u,\hat\M^{N_1}_{t}, \hat\N^{N_2}_{t})\phi^*_t(u|x, \hat\M^{N_1}_{t}, \hat\N^{N_2}_{t}) \hat\M^{N_1}_{t}(x)$ to split absolute value as
\begin{align*}
     2\dtv{\M^{N_1}_{t}F(\M^{N_1}_{t}, \N^{N_2}_{t}, \phi^*),  \hat\M^{N_1}_{t}F(\hat\M^{N_1}_{t}, \hat\N^{N_2}_{t}, \phi^*, \Gamma_{\textrm{D-PC}})} \leq \sum_{x'\in\X}\sum_{x\in \X}\sum_{u \in \U}(I_1+ I_2),\\
\end{align*}
where, 
\begin{align*}
    I_1 &\triangleq \Big\vert  f_t(x'|x, u, \M^{N_1}_{t}, \N^{N_2}_{t})\phi^*_t(u|x,  \M^{N_1}_{t}, \N^{N_2}_{t}) \M^{N_1}_{t}(x) -f_t(x'|x, u,\hat\M^{N_1}_{t}, \hat\N^{N_2}_{t})\phi^*_t(u|x, \hat\M^{N_1}_{t}, \hat\N^{N_2}_{t}) \hat\M^{N_1}_{t}(x)\Big\vert \\
    I_2 &\triangleq \Big\vert {f_t(x'|x, u,\hat\M^{N_1}_{t}, \hat\N^{N_2}_{t})\phi^*_t(u|x, \hat\M^{N_1}_{t}, \hat\N^{N_2}_{t}) \hat\M^{N_1}_{t}(x)  -f_t(x'|x, u,\hat\M^{N_1}_{t}, \hat\N^{N_2}_{t})\phi^*_t(u|x, \hat\M^{N_1}_{t}, \hat\N^{N_2}_{x, t}) \hat\M^{N_1}_{t}(x)\Big\vert}.
\end{align*}
Similar to \eqref{eq:mf-lcp-alter}, we can simplify $I_2$ using  Proposition~\ref{prop:lipschitz-proposition} and obtain

\begin{align*}
    \sum_{x'\in\X}\sum_{x\in \X}\sum_{u \in \U}I_2 < L_\phi\epsilon.
\end{align*}

We now individually bound the terms in $I_1$ using the definition of total variation distance, the identity
\begin{align*}
|abc -a'b'c'| \leq |a-a'|bc + a'|b-b'|c +a'b'|c-c'|, ~a,b,c\geq 0,
\end{align*}
Assumption~\ref{assmpt:lipschitiz-model} and Proposition~\ref{prop:lipschitz-proposition} as follows:

\begin{align*}
    \sum_{x'\in\X}\sum_{x\in \X}\sum_{u \in \U}I_1\leq 2\dtv{\M^{N_1}_{t}, \hat \M^{N_1}_{t}} + (L_\phi+L_f)\left(\dtv{\M^{N_1}_{t}, \hat \M^{N_1}_{t}} + \dtv{\N^{N_2}_{t}, \hat \N^{N_2}_{t}}\right) 
\end{align*}
%

Thus we can substitute the expressions for $I_1$ and $I_2$ back in \eqref{eq:I1-I2} and rewrite~\eqref{eq:blue-splits-estimator-eval} to get 
\begin{align}\label{eq:blue-step-1-final}
     &\mathbb{E}_{{\phi^*}}
    \Big[ \dtv{\M^{N_1}_{t+1}, \hat\M^{N_1}_{t+1}}  \Big \vert \bfx_t^{N_1}, \hat\bfx_t^{N_1}, \bfy^{N_2}_t,\hat\bfy^{N_2}_t\! \Big] \nonumber \\
    &<\dtv{\M^{N_1}_{t}, \hat \M^{N_1}_{t}}\nonumber  \\&+ \frac{1}{2}(L_\phi+L_f)\left(\dtv{\M^{N_1}_{t}, \hat \M^{N_1}_{t}} + \dtv{\N^{N_2}_{t}, \hat \N^{N_2}_{t}}\right) + \frac{1}{2}L_\phi\epsilon + \Residue
\end{align}
By symmetry,
\begin{align}\label{eq:red-step-1-final}
     &\mathbb{E}_{{\psi^*}}
    \Big[ \dtv{\N^{N_2}_{t+1}, \hat\N^{N_2}_{t+1}}  \Big \vert \bfx_t^{N_1}, \hat\bfx_t^{N_1}, \bfy^{N_2}_t,\hat\bfy^{N_2}_t\! \Big]\nonumber  \\
    &<\dtv{\N^{N_2}_{t}, \hat \N^{N_2}_{t}}\nonumber  \\&+ \frac{1}{2}(L_\psi+L_f)\left(\dtv{\M^{N_1}_{t}, \hat \M^{N_1}_{t}} + \dtv{\N^{N_2}_{t}, \hat \N^{N_2}_{t}}\right) + \frac{1}{2}L_\psi\epsilon + \Residue
\end{align}
Adding \eqref{eq:blue-step-1-final} and \eqref{eq:red-step-1-final} we obtain,
\begin{align*}
    &\mathbb{E}_{{\phi^*, \psi^*}}
    \Big[ \dtv{\M^{N_1}_{t+1}, \hat\M^{N_1}_{t+1}}  + \dtv{\N^{N_2}_{t+1}, \hat\N^{N_2}_{t+1}} \Big \vert \bfx_t^{N_1}, \hat\bfx_t^{N_1}, \bfy^{N_2}_t,\hat\bfy^{N_2}_t\! \Big] \\
    &< (1+L_f + \frac{1}{2}L_\phi + \frac{1}{2}L_\psi)\left(\dtv{\M^{N_1}_{t}, \hat \M^{N_1}_{t}} + \dtv{\N^{N_2}_{t}, \hat \N^{N_2}_{t}}\right) + \frac{1}{2}(L_\phi + L_\psi)\epsilon + \Residue\\
    &= \kappa_1 + \kappa_2\left(\dtv{\M^{N_1}_{t}(x), \hat \M^{N_1}_{t}(x)} + \dtv{\N^{N_2}_{t}, \hat \N^{N_2}_{t}}\right).
\end{align*}

\textbf{Step 2.} Define
\begin{align*}
    a_t \triangleq \mathbb{E}_{{\phi^*, \psi^*}}
    \Big[ \dtv{\M^{N_1}_{t}, \hat\M^{N_1}_{t}}  + \dtv{\N^{N_2}_{t}, \hat\N^{N_2}_{t}} \Big \vert \bfx_0^{N_1}=\hat\bfx_0^{N_1}, \bfy^{N_2}_0=\hat\bfy^{N_2}_0\! \Big]
\end{align*}
At $t=0$, $\M^{N_1}_{0} = \hat\M^{N_1}_{0} ~~\text{and}~~ \N^{N_2}_{0}= \hat\N^{N_2}_{0}$ as we begin the fully observable and estimated scenarios under the same initial joint states , i.e., $\bfx_0^{N_1}=\hat\bfx_0^{N_1}, \bfy^{N_2}_0=\hat\bfy^{N_2}_0$.
Thus, $a_0 = 0$.
We proceed to show that $a_{t+1} \leq \kappa_1 + \kappa_2a_t$ where $\kappa_1$ and $\kappa_2$ are the same as Step 1.
By the law of iterated expectations, 
\begin{align*}
    &\mathbb{E}_{{\phi^*, \psi^*}}
    \Big[ \dtv{\M^{N_1}_{t+1}, \hat\M^{N_1}_{t+1}}  + \dtv{\N^{N_2}_{t+1}, \hat\N^{N_2}_{t+1}} \Big \vert \bfx_0^{N_1}=\hat\bfx_0^{N_1}, \bfy^{N_2}_0=\hat\bfy^{N_2}_0\! \Big] 
    \\&= \mathbb{E}_{{\phi^*, \psi^*}}
    \left[\Big[ \mathbb{E}_{{\phi^*, \psi^*}}\dtv{\M^{N_1}_{t+1}, \hat\M^{N_1}_{t+1}}  + \dtv{\N^{N_2}_{t+1}, \hat\N^{N_2}_{t+1}} \Big \vert \bfx_t^{N_1}, \hat\bfx_t^{N_1}, \bfy^{N_2}_t,\hat\bfy^{N_2}_t\! \Big]\Big \vert \bfx_0^{N_1}=\hat\bfx_0^{N_1}, \bfy^{N_2}_0=\hat\bfy^{N_2}_0\! \right]\\
    &< \mathbb{E}_{{\phi^*, \psi^*}}
    \left[\kappa_1 + \kappa_2\left (\dtv{\M^{N_1}_{t}, \hat\M^{N_1}_{t}}  + \dtv{\N^{N_2}_{t}, \hat\N^{N_2}_{t}}\right) \vert \bfx_0^{N_1}=\hat\bfx_0^{N_1}, \bfy^{N_2}_0=\hat\bfy^{N_2}_0\! \right]\\
    &= \kappa_1 + \kappa_2\mathbb{E}_{{\phi^*, \psi^*}}
    \left[\dtv{\M^{N_1}_{t}, \hat\M^{N_1}_{t}}  + \dtv{\N^{N_2}_{t}, \hat\N^{N_2}_{t}} \vert \bfx_0^{N_1}=\hat\bfx_0^{N_1}, \bfy^{N_2}_0=\hat\bfy^{N_2}_0\! \right] \\
    &= \kappa_1 + \kappa_2 a_t,
\end{align*}
where we use the result obtained from Step 1 along with the linearity property of expectations.
Using this new notation,~\eqref{eq:simplify-equality-eval} simplifies to
\begin{align*}
    \Delta{J}(\phi^*_t, \psi^*_t) \leq L_r\sum_{t=0}^{T}a_t
\end{align*}
We have $a_0 = 0$, $a_1 < \kappa_1$, $a_2 < \kappa_1 + \kappa_2a_1$ or $a_2 < \kappa_1(\kappa_2 + 1)$, $a_3 < \kappa_2a_2 + \kappa_1$ or $a_3 < \kappa_1(\kappa_2^2 + \kappa_2 + 1) $ and so on.
This can be written compactly for all $t>0$ as 
\begin{align*}
    a_{t} < \kappa_1\sum_{\tau=0}^{t-1}\kappa_2^\tau = \kappa_1 \frac{\kappa_2^t-1}{\kappa_2-1},
\end{align*}
as $\kappa_2 =  (1+L_f + \frac{1}{2}L_\phi + \frac{1}{2}L_\psi) \neq 1$.
This is a geometric sum and thus we have
\begin{align*}
    \Delta{J}(\phi^*_t, \psi^*_t)&\leq L_r\sum_{t=0}^{T}a_t\\
    &=L_r\sum_{t=1}^{T}a_t\\
    &<  L_r\sum_{t=1}^{T} \kappa_1 \frac{\kappa_2^t-1}{\kappa_2-1}\\
    &=   \frac{L_r\kappa_1}{\kappa_2-1}\sum_{t=1}^{T}\kappa_2^t-1\\
    &=  \frac{L_r\kappa_1}{\kappa_2-1} \left(\kappa_2\frac{\kappa_2^T - 1}{\kappa_2 - 1} - T\right)\\
    &= \underbrace{\frac{1}{2}\frac{L_r(L_\phi + L_\psi)}{\kappa_2-1} \left(\kappa_2\frac{\kappa_2^T - 1}{\kappa_2 - 1} - T\right)}_{K}\epsilon + \Residue\frac{L_r}{\kappa_2-1} \left(\kappa_2\frac{\kappa_2^T - 1}{\kappa_2 - 1} - T\right)\\
    &= K\epsilon + \Residue.
\end{align*}
\end{proof}

\maintheoremfour*
\begin{proof}[Sketch of Proof]
The exponential convergence of D-PC at each time-step $t$ follows directly from Theorems 6-8 in~\cite{nedich2014lyapunov} as the number of communication rounds $R_\textrm{com}\to\infty$.
Therefore, for finite communication rounds, there exist constants \(c_x, c_y > 0\) and \(\rho_x, \rho_y \in (0,1)\) such that for any number of communication rounds \(R_{\mathrm{com}}\) and any underlying distributions $(\hat\mu_t^{N_1}, \hat\nu_t^{N_2})$ D-PC satisfies
\begin{align}\label{eq:estimator-eval-bound}
   \dtv{\hat\nu_{x, t}^{N_2}, \hat\nu_t^{N_2}} < c_x \rho_x^{R_{\mathrm{com}}} \triangleq \epsilon_x, \quad \text{for all } x \in \mathcal{X}^o \text{ and } t \nonumber\\
   \dtv{\hat\mu_{y, t}^{N_1}, \hat\mu_t^{N_1}} < c_y\rho_y^{R_{\mathrm{com}}} \triangleq \epsilon_y, \quad \text{for all } y \in \mathcal{Y}^o \text{ and } t.
\end{align}
We define $\epsilon = \max(\epsilon_x, \epsilon_y).$
\end{proof}
\begin{proof} We begin by defining set regularity.
    \begin{definition}[Set Regularity~\citep{nedich2014lyapunov}]\label{defn:set-regularity}
    Let $Z \subseteq \mathbb{R}^n$ be a nonempty set.
    A (finite) collection of closed convex sets $\{Y_j\}_{j=1}^J \subseteq \mathbb{R}^n$ is
regular (in Euclidian norm) with respect to the set $Z$, if there exists a constant $r\geq1$ such that 
\begin{align*}
\inf_{y\in Y}\|z-y\|_2 \leq r \max_{j}\inf_{y\in Y_j}\|z-y\|_2, z\in Z,
\end{align*}
where $Y  \triangleq\bigcap_{j=1,\ldots, J}Y_j$ is non-empty.
We refer to the scalar $r$ as the regularity constant.
When the preceding relation holds with $Z=\mathbb{R}^n$, we say that the sets $\{Y_j\}_{j=1}^J$ are uniformly regular.
\end{definition}

\begin{proposition}\label{prop:set-regularity}
    The constraint sets $\R(x)$ for all $x\in\X^o$ are uniformly regular with regularity constant $\kappa\geq 1$, i.e.,
    \begin{align*}
        \inf_{\nu\in\F}\|\omega -\nu\|_2\leq \kappa \max_{x\in\X}\inf_{\nu_x\in\R(x)}\|\omega -\nu_x\|_2,  \omega\in\mathbb{R}^{|\X|}.
    \end{align*}
\end{proposition}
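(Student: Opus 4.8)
The plan is to recognize that the family $\{\R(x)\}_{x\in\X^o}$, together with $\F=\bigcap_{x\in\X^o}\R(x)$, is a finite collection of \emph{polyhedra} with nonempty intersection, and then to extract a uniform regularity constant from a Hoffman-type error bound. First I would record the polyhedral structure: each $\R(x)$ is the probability simplex $\P(\Y)$ — cut out by $\nu(y)\ge 0$ for all $y\in\Y$ and $\sum_{y}\nu(y)=1$ — intersected with the affine equality constraints $\nu(z)=\hat\nu^{N_2}_t(z)$ for $z\in\mathcal{N}^{\mathrm{viz}}_t(x)$. Hence $\F$ is itself a polyhedron, obtained by imposing the simplex constraints together with $\nu(z)=\hat\nu^{N_2}_t(z)$ for every $z\in\bigcup_{x\in\X^o}\mathcal{N}^{\mathrm{viz}}_t(x)$. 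It is nonempty because the true opponent mean-field $\hat\nu^{N_2}_t$ satisfies every such equality and is a valid distribution, so $\hat\nu^{N_2}_t\in\R(x)$ for all $x$ and therefore $\hat\nu^{N_2}_t\in\F$. (The ambient space is $\mathbb{R}^{|\Y|}$, since $\R(x)\subseteq\P(\Y)$; the $\omega$ in the statement lives there.)

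Next, stacking all these (in)equalities as $\F=\{\nu: A\nu\le b\}$, Hoffman's lemma supplies a constant $L>0$, depending only on $A$ — equivalently only on the finite set $\Y$ and the visibility structure — such that $\mathrm{dist}(\omega,\F)\le L\,\|(A\omega-b)_{+}\|_2$ for every $\omega\in\mathbb{R}^{|\Y|}$. The crucial step is to control each residual entry of $(A\omega-b)_{+}$ by $\max_{x\in\X^o}\mathrm{dist}(\omega,\R(x))$. Every row of the system defining $\F$ is a constraint of at least one $\R(x)$: the simplex rows are common to all of them, and each equality row $\nu(z)=\hat\nu^{N_2}_t(z)$ belongs to any $\R(x)$ with $z\in\mathcal{N}^{\mathrm{viz}}_t(x)$. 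For such a row and any feasible $p\in\R(x)$, the residual is bounded by a fixed multiple of $\|\omega-p\|_2$ (with a factor depending only on $|\Y|$, arising solely from the normalization row), because $p$ meets that constraint exactly for equalities and feasibly for the simplex inequalities; taking the infimum over $p\in\R(x)$ bounds the residual by a constant times $\mathrm{dist}(\omega,\R(x))\le\max_{x'\in\X^o}\mathrm{dist}(\omega,\R(x'))$. Since there are only finitely many rows, collecting them gives $\|(A\omega-b)_{+}\|_2\le C\max_{x\in\X^o}\mathrm{dist}(\omega,\R(x))$ for a constant $C$ depending only on $|\Y|$ and the number of constraints.

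Combining the two bounds yields $\mathrm{dist}(\omega,\F)\le LC\,\max_{x\in\X^o}\mathrm{dist}(\omega,\R(x))$, so $\kappa=LC$ serves as the regularity constant; moreover $\F\subseteq\R(x)$ forces $\mathrm{dist}(\omega,\F)\ge\mathrm{dist}(\omega,\R(x))$ for each $x$, whence $\mathrm{dist}(\omega,\F)\ge\max_{x}\mathrm{dist}(\omega,\R(x))$ and necessarily $\kappa\ge 1$, matching the claimed statement. The main obstacle is obtaining a \emph{uniform} (global) constant rather than one valid only on bounded sets: for generic closed convex sets linear regularity holds only locally, and it is precisely the polyhedrality of the $\R(x)$ — hence the availability of Hoffman's bound with a constant valid on all of $\mathbb{R}^{|\Y|}$ — that promotes this to uniform regularity. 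I would therefore argue polyhedrality explicitly and keep all constants independent of $\omega$; alternatively, one may invoke directly the classical fact (Bauschke--Borwein--Li) that any finite family of polyhedral convex sets with nonempty intersection is linearly regular, of which the argument above is the self-contained instantiation.
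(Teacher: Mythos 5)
Your proposal is correct. For comparison: the paper's entire proof of this proposition is a one-line citation to Proposition~1 of \cite{nedich2014lyapunov}, which is exactly the statement that a finite collection of polyhedral convex sets with nonempty intersection is uniformly (linearly) regular; the paper verifies none of the hypotheses. Your argument is a self-contained instantiation of that result: you establish polyhedrality of each $\R(x)$ (simplex constraints plus the affine equalities $\nu(z)=\hat\nu^{N_2}_t(z)$ for $z\in\mathcal{N}^{\mathrm{viz}}_t(x)$), verify nonemptiness of $\F$ by exhibiting the true opponent mean-field as a common point (a fact the paper states separately but never connects to this proposition), and then run the standard Hoffman-bound argument — residual of each row controlled by the distance to whichever $\R(x)$ contains that row as a constraint — to extract a global constant $\kappa=LC$, with the observation that $\F\subseteq\R(x)$ forces $\kappa\ge 1$. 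You also correctly repair the ambient space: the statement as printed has $\omega\in\mathbb{R}^{|\X|}$ and $\max_{x\in\X}$, but since $\R(x)\subseteq\P(\Y)$ the sets live in $\mathbb{R}^{|\Y|}$ and the maximum should range over $\X^o$. What your route buys is an explicit identification of \emph{why} the regularity is uniform on all of $\mathbb{R}^{|\Y|}$ rather than merely local (polyhedrality, hence a Hoffman constant independent of $\omega$), which is precisely the property the downstream convergence-rate argument in Theorem~\ref{thm:estimator-eval-dyn-dpc} relies on; the paper's citation-only proof obscures this. The one thing worth making fully explicit if you write this up is that the Hoffman constant $L$ and hence $\kappa$ depend on the visibility structure $\{\mathcal{N}^{\mathrm{viz}}_t(x)\}$ at time $t$, so "uniform in $\omega$" does not mean "uniform in $t$" unless one takes a maximum over the finitely many possible constraint matrices.
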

\begin{proof}
    Refer to Proposition 1 from \cite{nedich2014lyapunov}.
\end{proof}

    Following Theorem 8 from \cite{nedich2014lyapunov}, we have
    \begin{align*}
        \rho_x =  \left(1 - \frac{\theta^2}{{4|\X^o|\mathrm{diam}(\G^{\mathrm{com}}_{\mathrm{Blue}})}(\kappa+1)^2}\right), \quad c_x = \frac{1}{2}\sqrt{|\Y|}\sum_{x\in\X^o}\|\hat\nu^{\tau=0}_x - \xi^{\tau=1}_x\|_2^2 \\
        \rho_y =  \left(1 - \frac{\theta^2}{{4|\Y^o|\mathrm{diam}(\G^{\mathrm{com}}_{\mathrm{Red}})}(\kappa+1)^2}\right), \quad c_y = \frac{1}{2}\sqrt{|\X|}\sum_{x\in\Y^o}\|\hat\mu^{\tau=0}_y - \xi^{\tau=1}_y\|_2^2,
    \end{align*}
    where $\kappa$ is the regularity constant as in Proposition \ref{prop:set-regularity} and {$\theta$ follows from~\eqref{eq:theta-def}}.
\end{proof}

\section{RPS and \texorpdfstring{\lowercase{c}}{c}RPS Setup}\label{AppendixB}

\subsection{State Space}
We have three states in this representation of the game: rock, paper and scissors. 
We denote this state space as $\mathcal{S} =  \{\texttt{R,P,S}\}$. 
The empirical distribution of the Blue team is denoted by ${\mu} \in \mathcal{P(S)}$ and that of the Red team is denoted by ${\nu} \in \mathcal{P(S)}$. 
Since we have three states for each team, both EDs lie in a three-dimensional simplex denoted by $\mathcal{P(S)}$.

\subsection{Action Space}
\subsubsection{RPS}
At each state, we define three actions denoted by $\mathcal{A} = \{\texttt{CW}, \texttt{CCW}, \texttt{Stay}\}$. 
These actions represent the ability of the agents to move from one state to the other in the following fashion:
\begin{enumerate}
    \item \texttt{CW} denotes a clockwise cyclic action from one state to the other, i.e., from $\texttt{R} \rightarrow \texttt{P}$, $\texttt{P} \rightarrow \texttt{S}$, $\texttt{S} \rightarrow \texttt{R}$.
    \item \texttt{CCW} denotes a counterclockwise cyclic movement, i.e., from $\texttt{R} \rightarrow \texttt{S}$, $\texttt{S} \rightarrow \texttt{P}$, $\texttt{P} \rightarrow \texttt{R}$.
    \item $\texttt{Stay}$ denotes the idle action (remain in the same state as before).
\end{enumerate}

\subsubsection{cRPS}
At each state we have two actions denoted by $\mathcal{A} = \{\texttt{CW},  \texttt{Stay}\}$. 
These actions represent the ability of the agents to move from one state to the other in the following fashion:
\begin{enumerate}
    \item \texttt{CW} denotes a clockwise cyclic action from one state to the other, i.e., from $\texttt{R} \rightarrow \texttt{P}$, $\texttt{P} \rightarrow \texttt{S}$, $\texttt{S} \rightarrow \texttt{R}$.
    \item $\texttt{Stay}$ denotes the idle action (remain in the same state as before).
\end{enumerate}
Thus, we cannot directly jump from \texttt{R}  to \texttt{S} within a single step, but must go via \texttt{P}. 
Mathematically,  \texttt{S} does not lie in the reachable set of \texttt{R}. 
The reachable set $\mathcal{R}(s)$ for each state $s$ at a given time step under this modified action space is as follows
\begin{align*}
    \mathcal{R}(\texttt{R}) = \{\texttt{R}, \texttt{P}\}, \quad
    \mathcal{R}(\texttt{P}) = \{\texttt{P}, \texttt{S}\}, \quad
    \mathcal{R}(\texttt{S}) = \{\texttt{S}, \texttt{R}\}. 
\end{align*}

The state-action space of both RPS and cRPS are presented in Figure~\ref{fig:rps-crps-state-space}.

\begin{figure}[b]
    \centering
    \includegraphics[width=0.6\linewidth]{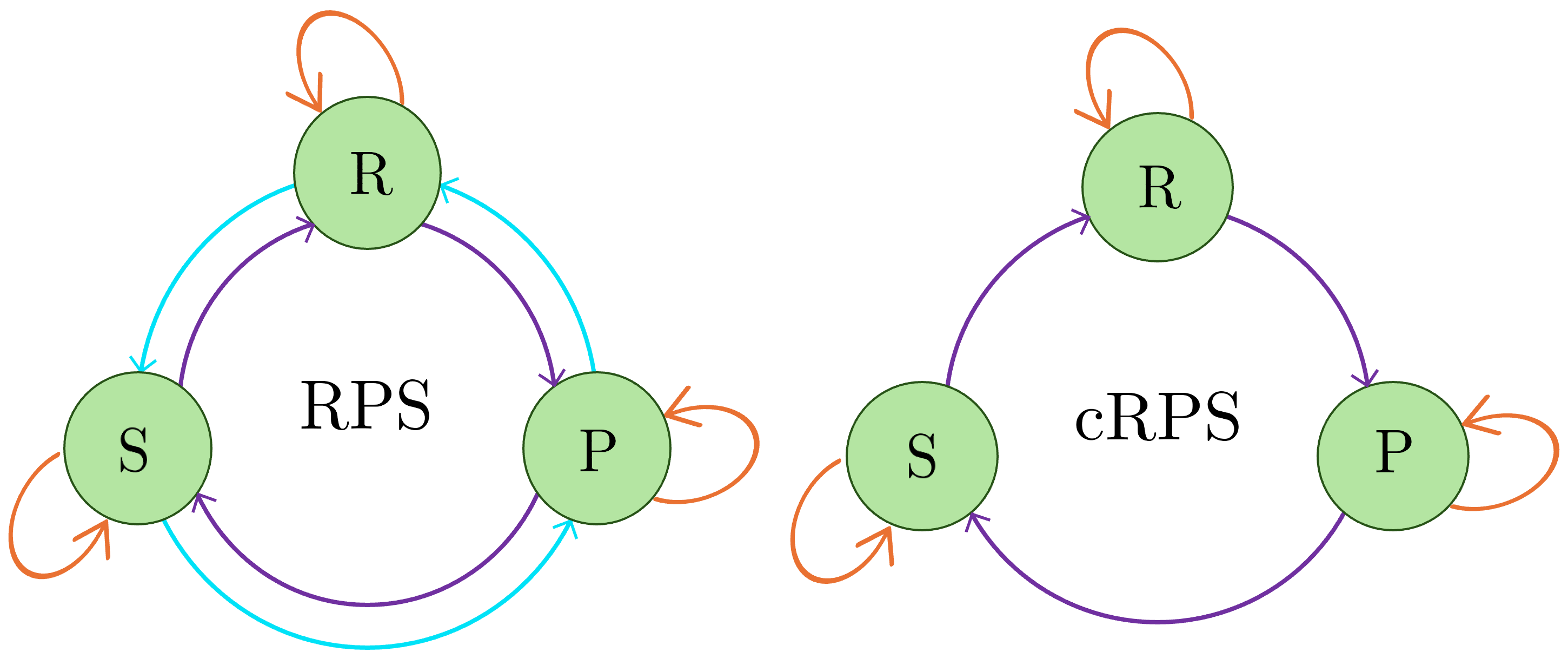}
    \caption{State-action spaces for RPS and cRPS.}
    \label{fig:rps-crps-state-space}
\end{figure}

\subsection{Dynamics and Transition Probabilities}

For both RPS and cRPS, we consider deterministic transitions $\mathcal{T}(s, a, s')$, which implies that given a state-action pair $(s, a)$, the agent reaches a unique next state $s'$ with certainty (no distribution over the reachable states). 
Thus, for state $\texttt{R}$ and action $\texttt{Stay}$, the transition function $\mathcal{T}:\mathcal{S}\times\mathcal{A}\times \mathcal{S}\rightarrow \{0, 1\}$ is given as: 
\begin{align*}
    \mathcal{T}(\texttt{R}, \texttt{Stay}, \texttt{R}) = 1,\quad
    \mathcal{T}(\texttt{R}, \texttt{Stay}, \texttt{P}) = 0,\quad
    \mathcal{T}(\texttt{R}, \texttt{Stay}, \texttt{S}) = 0.
\end{align*}
This implies that an agent in the state \texttt{R}  upon taking the \texttt{Stay}  action remains in state \texttt{R}. 
Similarly, 
\begin{align*}
    \mathcal{T}(\texttt{R}, \texttt{CW}, \texttt{R}) = 0,\quad
    \mathcal{T}(\texttt{R}, \texttt{CW}, \texttt{P}) = 1,\quad
    \mathcal{T}(\texttt{R}, \texttt{CW}, \texttt{S}) = 0.
\end{align*}
represent the method to transition from \texttt{R}  to \texttt{P} .

\subsection{Reward Structure}

In a two player RPS game, the reward matrix for Player 1 is defined as:
\begin{align*}
    A = \begin{bmatrix}
        \begin{array}{rrr}
         0 & -1 &  1 \\
         1 &  0 & -1 \\
        -1 &  1 &  0
        \end{array}
    \end{bmatrix}.
\end{align*}
We extend this two-player framework to the multi-agent team game formulation. 
Define the pairwise reward for an agent at state $x\in \mathcal{S}$ within the Blue team and at state $y \in \mathcal{S}$ from the Red team as 
\begin{equation*}
    r(x, y) \triangleq A_{xy},
\end{equation*}
where $A_{xy}$ represents the element from the reward matrix $A$ corresponding to the states $x$ (row player) and $y$ (column player). 
In lieu of the zero-sum structure, the reward for the agent at $y$ with respect to $x$ becomes $-r(x, y)$. 
Thus, for each player $x_i \in \mathcal{S} \text{ and } i=1, 2,\ldots, N_1$ in the Blue team and $y_j  \in \mathcal{S}  \text{ and } j=1, 2,\ldots, N_2$ in the Red team, the reward for the Blue team can be defined as
\begin{equation*}
R_\mathrm{Blue}(\mathbf{x}, \mathbf{y}) = \frac{1}{N_1} \sum_{i=1}^{N_1} \underbrace{\Big[\frac{1}{N_2} \sum_{j=1}^{N_2} r(x_i, y_j) \Big]}_{\text{reward for agent $i$}}.
\end{equation*}

Rewriting the term inside the square brackets as 
\begin{align}\label{eq:reward-simplification-rps}
    \frac{1}{N_2}\sum_{j=1}^{N_2} r(x_i, y_j) &=\frac{1}{N_2}\sum_{y\in \mathcal{S}} \sum_{j=1}^{N_2} r(x_i, y) \mathbf{1}_{y_j=y}\nonumber \\
    &= A_{x_is_0}\sum_{j=1}^{N_2}\frac{1}{N_2}\mathbf{1}_{y_j=s_0} + A_{x_is_1}\sum_{j=1}^{N_2}\frac{1}{N_2}\mathbf{1}_{y_j=s_1} \nonumber\\ &+ A_{x_is_2}\sum_{j=1}^{N_2}\frac{1}{N_2}\mathbf{1}_{y_j=s_2}\nonumber\\
    &= A_{x_is_0}{\nu}(s_0)+A_{x_is_1}{\nu}(s_1)+A_{x_is_2}{\nu}(s_2)\nonumber\\
    &= A(x_i){\nu},
\end{align}
where $A(x_i)$ is the row of the reward matrix corresponding to state $x_i$. 
Using \eqref{eq:reward-simplification-rps}, the total Blue reward can be expressed as 
\begin{align*}
    R_\mathrm{Blue}(\mathbf{x}, \mathbf{y}) &= \frac{1}{N_1} \sum_{i=1}^{N_1}  A(x_i){\nu}\\
    &= \Bigl( \frac{1}{N_1}\sum_{x\in \mathcal{S}} \sum_{i=1}^{N_1} A(x_i) \mathbf{1}_{x_i=x} \Bigr){\nu}\\
    &= \Bigl(A(s_0){\mu}(s_0)+A(s_1){\mu}(s_1)+A(s_2){\mu}(s_2)\Bigr){\nu}\\
    &= {\mu}\t A{\nu}.
\end{align*}

\begin{proposition}\label{prop:two-step-convergence}
    With initial conditions ${\mu}_{t=0} = [1, 0, 0]\t$ and ${\nu}_{t=0} = [0, 1, 0]\t$, all mean-field optimal trajectories satisfy
    ${\mu}_t^* = {\nu}_t^* = [\frac{1}{3}, \frac{1}{3}, \frac{1}{3}]\t$ for all $t\geq2$, and ${\mu}^*_{1} = [0, 1-\eta, \eta]\t$ where $\eta \in [\frac{1}{3}, 
    \frac{2}{3}]$ and ${\nu}^*_{1} = [0, \frac{2}{3}, \frac{1}{3}]\t$.
    Furthermore,
    the unique game value is given by~$-\frac{1}{3}$.
\end{proposition}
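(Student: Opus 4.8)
The plan is to recast cRPS as a deterministic zero-sum game on the simplex $\P(\S)$ and solve it by backward induction together with a stage-wise saddle-point computation. First I would record the mean-field map induced by a $\{\texttt{CW},\texttt{Stay}\}$ policy: writing $p_R,p_P,p_S\in[0,1]$ for the clockwise-move probabilities at the three states, the pushforward of $\mu=[\mu_R,\mu_P,\mu_S]\t$ moves mass $\mu_R p_R$ from $R$ to $P$, $\mu_P p_P$ from $P$ to $S$, and $\mu_S p_S$ from $S$ to $R$, leaving the rest in place. The structural fact I would extract is that the one-step reachable set from a pure state is supported only on that state and its clockwise successor, so from $\mu_0=[1,0,0]\t$ every $\mu_1$ lies on the segment $[1-\eta,\eta,0]\t$, and from $\nu_0=[0,1,0]\t$ every $\nu_1$ lies on $[0,1-b,b]\t$, with $\eta,b\in[0,1]$. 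The stage reward is the bilinear RPS form $r(\mu,\nu)=\mu\t A\nu$, whose row and column sums vanish; hence $A\mathbf{u}=\mathbf{u}\t A=0$ for the uniform vector $\mathbf{u}=[\tfrac13,\tfrac13,\tfrac13]\t$, and the unconstrained stage value is $0$, attained at the uniform pair.

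Next I would dispose of the forced and neutral portions of the trajectory. The opening reward is fixed by the endpoints, $r_0=[1,0,0]A[0,1,0]\t=-1$, independently of either policy. For the tail I would show that once both teams are at $\mathbf{u}$ the continuation value is $0$ and stays there: against a uniform opponent every stage reward equals $0$ because $A\mathbf{u}=0$, so neither team can profit by deviating, while the uniform distribution is self-sustaining (any common clockwise probability across the three states fixes $\mathbf{u}$). It then remains to verify that uniform is actually \emph{reachable} at $t=2$: from $\mu_1=[1-\eta,\eta,0]\t$ one solves $\mu_2=\mathbf{u}$ and finds that the required clockwise probabilities are feasible ($0\le p\le1$) exactly when $\eta\in[\tfrac13,\tfrac23]$, and symmetrically $b\in[\tfrac13,\tfrac23]$ for the Red team.

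The crux is the single contested stage $t=1$. Restricting to the reachable, tail-feasible configurations $\mu_1=[1-\eta,\eta,0]\t$ and $\nu_1=[0,1-b,b]\t$ with $\eta,b\in[\tfrac13,\tfrac23]$, a direct computation gives
\begin{equation*}
r_1(\eta,b)=\mu_1\t A\nu_1=-1+\eta+b\,(2-3\eta).
\end{equation*}
Minimizing over the Red variable, the coefficient $2-3\eta\ge0$ on $[\tfrac13,\tfrac23]$ forces the unique minimizer $b=\tfrac13$, i.e.\ $\nu_1^{*}=[0,\tfrac23,\tfrac13]\t$; at $b=\tfrac13$ the coefficient of $\eta$ vanishes, so the maximizing Blue team is indifferent and $\mu_1^{*}=[1-\eta,\eta,0]\t$ for every $\eta\in[\tfrac13,\tfrac23]$ (the support being the one dictated by reachability from $[1,0,0]\t$). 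The saddle value is $r_1=-\tfrac13$. Aggregating the three regimes---$r_0=-1$, the unique contested value $r_1=-\tfrac13$, and $r_t=0$ with $\mu_t^{*}=\nu_t^{*}=\mathbf{u}$ for $t\ge2$---under the per-step averaging convention used to report the game value then yields the stated value $-\tfrac13$ and the claimed trajectory profile.

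The main obstacle I anticipate is the tail/backward-induction step: rigorously proving that \emph{every} optimal trajectory is driven to $\mathbf{u}$ by $t=2$ and stays there, i.e.\ that the minimizer cannot improve the cumulative reward by steering off uniform and paying a nonzero future stage reward rather than the reachability cost. This requires carefully coupling each stage reward to the set of feasible successor distributions (which is exactly where the binding endpoints $\tfrac13$ and $\tfrac23$ enter, as nonnegativity constraints on the clockwise probabilities) and invoking the deterministic mean-field propagation of the infinite-population limit so that the stage-wise saddle points chain into a single global minimax value; the asymmetric conclusion (a unique $\nu_1^{*}$ but a segment of optimal $\mu_1^{*}$) is then read off from the indifference structure at the $t=1$ saddle.
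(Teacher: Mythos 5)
Your proposal follows essentially the same route as the paper: parametrize the reachable distributions $\mu_1=[1-\eta,\eta,0]^\top$ and $\nu_1=[0,1-b,b]^\top$, derive $\eta,b\in[\tfrac13,\tfrac23]$ as the feasibility condition for reaching the uniform distribution at $t=2$, and solve the bilinear saddle-point problem $r_1=-1+\eta+b(2-3\eta)$, whose equilibrium set is $\{\eta\in[\tfrac13,\tfrac23],\,b=\tfrac13\}$ with value $-\tfrac13$. Two points need repair. First, the final aggregation: you include $r_0=-1$ and then appeal to a ``per-step averaging convention'' to land on $-\tfrac13$, but averaging $-1,-\tfrac13,0,\dots$ over the horizon does not give $-\tfrac13$ either. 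The paper's cRPS objective simply sums from $t=1$; the $t=0$ reward is a policy-independent constant fixed by the initial conditions, so the game value equals the contested stage reward alone. Second, your claim that minimizing over $b$ ``forces the unique minimizer $b=\tfrac13$'' fails at the boundary $\eta=\tfrac23$, where the coefficient $2-3\eta$ vanishes and Red is indifferent; to conclude that \emph{every} saddle point has $b=\tfrac13$ you need the best-response elimination the paper carries out (if $b>\tfrac13$, Blue's unique best response is $\eta=\tfrac13$, to which Red's unique best response is $b=\tfrac13$, a contradiction). The obstacle you flag---showing every optimal trajectory sits at the uniform distribution for all $t\ge2$---is genuine; the paper does not fully close it either, instead exhibiting the two-step trajectory as optimal within the candidate family and invoking the concave--convex structure and von Neumann's minimax theorem only to establish uniqueness of the \emph{value}, not of the entire trajectory.
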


\begin{proof}
    For the constrained RPS game under the stated initial condition, we cannot obtain the target distribution $\begin{bmatrix} \frac{1}{3} & \frac{1}{3} & \frac{1}{3} \end{bmatrix}\t$ after a single time step but this may be possible for $t\geq 2$. 
To this end, consider the following candidate trajectory respecting the transition dynamics:
\[
\begin{aligned}
    {\mu}_0\t &= \begin{bmatrix} 1  & 0 & 0 \end{bmatrix}, \\
    {\mu}_1\t &= \begin{bmatrix} 1 - x_1 & x_1 & 0 \end{bmatrix}, \\
    {\mu}_2\t &= \begin{bmatrix} 1 - x_1 - x_2 & x_1 + x_2 - x_3 & x_3 \end{bmatrix}, \\
    {\mu}_t\t &= \begin{bmatrix} \frac{1}{3} & \frac{1}{3} & \frac{1}{3} \end{bmatrix}, \quad \forall t > 2.
\end{aligned}
\]
\[
\begin{aligned}
    {\nu}_0\t &= \begin{bmatrix} 0 & 1 & 0 \end{bmatrix}, \\
    {\nu}_1\t &= \begin{bmatrix} 0 & 1-y_1 & y_1 \end{bmatrix}, \\
    {\nu}_2\t &= \begin{bmatrix} y_3 & 1-y_1-y_2 & y_1+y_2-y_3 \end{bmatrix}, \\
    {\nu}_t\t &= \begin{bmatrix} \frac{1}{3} & \frac{1}{3} & \frac{1}{3} \end{bmatrix}, \quad \forall t > 2.
\end{aligned}
\]
In order to respect the simplex structure for ${\mu}_t$ and ${\nu}_t$, we have the following constraints at all times:
\[
\begin{aligned}
    0 \leq x_1, x_2, x_3 \leq 1,\quad
    x_2 \leq 1-x_1, \quad
    x_3 \leq x_1.
\end{aligned}
\]
Similarly,
\[
\begin{aligned}
    0 \leq y_1, y_2, y_3 \leq 1,\quad
    y_2 \leq 1-y_1, \quad
    y_3 \leq y_1.
\end{aligned}
\]
For the distribution at $t = 2$ to be $\begin{bmatrix} \frac{1}{3} & \frac{1}{3} & \frac{1}{3} \end{bmatrix}\t$ for both teams, we get the additional constraints
\[
\begin{aligned}
    x_3 = y_3 = \frac{1}{3},\\
    x_1+x_2-x_3 = y_1+y_2-y_3 = \frac{1}{3}, \\
    \Rightarrow x_1+x_2 = y_1+y_2 = \frac{2}{3},
\end{aligned}
\]
which implies that
\[
x_1, y_1 \leq \frac{2}{3},
\]
since $x_2, y_2 \geq 0$.
The constraints now take the form

\begin{align}
    \frac{1}{3} &\leq x_1 \leq \frac{2}{3},
\label{eq:c1x1}
\end{align}
and similarly,
\begin{align}
    \frac{1}{3} &\leq y_1 \leq \frac{2}{3}.
    \label{eq:c1y1}
\end{align}
The objective function for cRPS is given by
\begin{align}\label{eq:crps-obj}
    & J^{N, \phi, \psi} \big(\mu_0, \nu_0 \big) =
    \mathbb{E}_{{\phi, \psi}} \!
    \Big[\sum_{t=1}^{T}\mu_t\t A\nu_t \Big \vert \mu_0, \nu_0\Big],
\end{align}
which leads to the optimization problem
\begin{equation}
\begin{array}{rrclcl}
\displaystyle \max_{{\phi}^t}\min_{{\psi}^t} & \multicolumn{3}{l}{J^{N_1, N_2, {\phi}^t, {\psi}^t}({\mu}_0, {\nu}_0) = {\mu}_1\t A{\nu}_1}.
\end{array}
\label{eq: crps-ss-obj}
\end{equation}
Substituting  $\mu_1\t = [1-x_1, x_1, 0]$ and $\nu_1\t = [0, 1-y_1, y_1]$ results in the following expression for the maximizing Blue team:
\begin{align*}
\max_{{\phi}^t} \quad J^{N_1, N_2, {\phi}^t, {\psi}^t}({\mu}_0, {\nu}_0) &= x_1 + 2y_1 - 3x_1y_1 - 1 \nonumber \\
& = x_1(1-3y_1) + (2y_1-1).
\end{align*}

Since this equation is linear in $x_1$, the solution to the maximization problem subject to the constraint \eqref{eq:c1x1}
is
\begin{align}
    x_1 = \frac{1}{3}, \quad y_1>\frac{1}{3} \label{eq:soln1x1},\\
    x_1 = \frac{2}{3}, \quad y_1<\frac{1}{3}\label{eq:soln2x1},\\
    x_1 \in [\frac{1}{3}, 
    \frac{2}{3}], \quad y_1 = \frac{1}{3}\label{eq:soln3x1}.
\end{align}

Following the same approach for the minimizing Red team, we get the following objective,
\begin{align*}
\min_{{\psi}^t} \quad J^{N_1, N_2, {\phi}^t, {\psi}^t}({\mu}_0, {\nu}_0) &= x_1 + 2y_1 - 3x_1y_1 - 1 \nonumber \\
& = y_1(2-3x_1) + (x_1-1),
\end{align*}
subject to the constraint \eqref{eq:c1y1}, with the solution being:
\begin{align}
    y_1 = \frac{1}{3}, \quad x_1<\frac{2}{3}\label{eq:soln1y1},\\
    y_1 = \frac{2}{3}, \quad x_1>\frac{2}{3}\label{eq:soln2y1},\\
    y_1 \in [\frac{1}{3}, 
    \frac{2}{3}], \quad x_1 = \frac{2}{3}\label{eq:soln3y1}.
\end{align}

Constraint \eqref{eq:c1x1} ensures that \eqref{eq:soln2y1} cannot hold, while constraint \eqref{eq:c1y1} similarly prevents \eqref{eq:soln2x1} from holding.

Consider now the case when  $y_1 > \frac{1}{3}$. 
From \eqref{eq:soln1x1} it follows that $x_1 = \frac{1}{3}$. 
Conversely, if the Blue team commits to a distribution with $x_1 = \frac{1}{3}$, the Red team's best response given by \eqref{eq:soln1y1} gives $y_1 = \frac{1}{3}$, resulting in an incentive for the Red team to deviate from $y_1 > \frac{1}{3}$.
Thus, \eqref{eq:soln1x1} does not constitute an optimal solution.
Following a similar argument, it can be shown that \eqref{eq:soln3y1} is not an optimal solution either, as illustrated below.

Assume that $x_1 = \frac{2}{3}$. From \eqref{eq:soln3y1}, $y_1 \in [\frac{1}{3}, 
    \frac{2}{3}]$. 
Now, if the Red team announces that it will deploy the distribution $y_1 \in [\frac{1}{3}, 
    \frac{2}{3}]$, the Blue team's response for $x_1$ follows from \eqref{eq:soln1x1} and \eqref{eq:soln3x1}. 
    We have already established that \eqref{eq:soln1x1} is not an optimal solution. 
    This implies that $x_1 \in [\frac{1}{3}, 
    \frac{2}{3}]$ can be a possible response to the Red team. However it violates \eqref{eq:soln3y1}, where $x_1 = \frac{2}{3}$ follows from strict equality.
    Thus, \eqref{eq:soln3y1} does not constitute an optimal solution as the Blue team has an incentive to deviate.

Now, suppose the Blue team announces a distribution where $ x_1 \in [\frac{1}{3}, 
    \frac{2}{3}]$.  In this case, the Red team's optimal response, derived from \eqref{eq:soln1y1} and \eqref{eq:soln3x1}, is $y_1 = \frac{1}{3}$.
    Conversely, if the Red team announces that its distribution will be $y_1 = \frac{1}{3}$, the Blue team will still follow $ x_1 \in [\frac{1}{3}, 
    \frac{2}{3}]$. 
     Since neither team has an incentive to deviate from these distributions, they form an optimal trajectory.
    Thus, the solution to the bilinear optimization problem for two-time step convergence takes the form: 
\begin{align}
    {\mu}^*_1 = \begin{bmatrix}
        1-x_1 \\x_1\\0
    \end{bmatrix}  \quad\text{and} \quad {\nu}^*_1 = 
    \begin{bmatrix}
        0 \\[3pt]
        \frac{2}{3}\\[3pt]   
        \frac{1}{3} \end{bmatrix},
\end{align}
such that $x_1 \in [\frac{1}{3}, 
    \frac{2}{3}]$, leading to a game value of $-\frac{1}{3}$. This establishes the distribution at $t=1$ and confirms the existence of a two-time step optimal trajectory, thereby proving the first part of the proposition.

    Now note the following:
    \begin{enumerate}
        \item 
        The original objective function \eqref{eq:crps-obj} can be expressed in a bilinear form (similar to the expressions for $\mu_0, \mu_1, \mu_2$ using $x_1, x_2, x_3$). This makes it concave in the first argument and convex in the second argument.
        
        \item  
        The mean-fields $\mu$ and $\nu$ lie on a simplex and are hence, compact and convex. 
    \end{enumerate}
    Thus, by the generalized version of von Neumann's minimax theorem~\cite{sorin2002first}, we conclude that the game value is unique, proving the second part of the proposition \footnote{Note: The optimal infinite horizon trajectory itself need not be unique (we have shown that $x_1$ can take a range of values).}.
\end{proof}

\subsection{Implementation Details and Hyperparameters}

The state distributions are represented as arrays that are concatenated together to form the global observation. 
This becomes the input to the critic network which consists of a single hidden layer of 64 neurons and two \textsf{tanh} activation functions. 
The output is a single value that is equal to the estimated value function.
On the other hand, the actor-network consists of a single MLP layer of 64 neurons that is concatenated with the local agent observation. 
Additionally, the \textsf{logits} are converted to a probability distribution through a softmax layer. 
The dimension scales with $|\mathcal{A}|$. 
Both the actor and critic networks are initialized using orthogonal initialization \citep{shengyi2022the37implementation}.

The single-stage RPS game is trained for 5,000 time steps with the actor and critic learning rates set to 0.0005 and 0.001, respectively, which remain constant throughout training. 
The networks are updated using the ADAM optimizer~\cite{Kingma2014AdamAM}
every 50 time steps for 10 epochs and a PPO clip value of 0.1.
The entropy is decayed from 0.01 to 0.001 geometrically. 
We use an episode length of 1 after which the rewards are bootstrapped. 

%
Moreover, since we have a single ``team'' buffer and the input/output dimensions are small, we do not use a mini-batch based update.
For cRPS we use an episode length of 10 after which the rewards are bootstrapped.
cRPS 
is trained using 200,000 time steps (=20,000 episodes) and is updated every 100 time steps. 
The algorithm was trained on a single NVIDIA GeForce RTX 3070 GPU and the training times are given in Tables~\ref{table:rps} and \ref{table:crps-eval}.


\section{Battlefield Setup}\label{AppendixC}
\subsection{State and Action Space}
We consider a large-scale two-team (Blue and Red) ZS-MFTG on an $n\times n$ grid world.
The state of the $i^{th}$ Blue agent is defined as the pair $x_i = (p_i^x, s_i^x)$ where $p_i^x \in \S_{\text{position}}$ denotes the position of the agent in the grid world and $s_i^x\in \S_{\text{status}} = \{0, 1\}$ defines the status of the agent: 0 being inactive and 1 being active.
Similarly, we define the state of the Red agent as $y_i = (p_i^y, s_i^y)$. 
The state spaces for the Blue and Red teams are denoted by $\X = \Y = \S_{\text{position}} \times \S_{\text{status}} $, respectively. 
The mean-fields of the Blue ($\mu$) and Red ($\nu$) teams are distributions over the joint position and status space, i.e., ${\mu}, {\nu} \in \P(\S_{\text{position}} \times \S_{\text{status}})$. 
The action spaces are given by $\U = \V =\{\texttt{Up},\texttt{Down},\texttt{Left},\texttt{Right},\texttt{Stay}\}$ for both teams, representing discrete movements in the grid world. 
The learned identical team policy assigns actions based on an agent's local position and status, as well as the observed mean-fields of both teams. 
In the following subsections, we elaborate on the weakly coupled transition dynamics and reward structure introduced in the game, followed by a detailed discussion of the training procedure and network architecture for MF-MAPPO in this example.

\subsection{Interaction Between Agents}

The transitions between states for agents belonging to both teams are characterized by their dynamics. 
These dynamics are probabilistic and depend on interactions among agents and are weakly coupled through their mean-field distributions.
The weak coupling dynamics is keeping in line with the assumption in \cite{guan2024zero}.

An agent at a given grid cell can be deactivated by the opponent team with a nonzero probability if the empirical mean-field of the opponent team at the grid cell supersedes that of the agent's own team. 
Similarly, a deactivated agent can be revived if the empirical mean-field of the agent's team is greater than the opponent's. 
This is referred to as numerical advantage.
The total transition probability from state $(p, s)$ to state $(p', s')$ by taking an action $a$ is given by 
\begin{align*}
     \mathbb{P}\big((p', s') \,|\, (p, s), a, \mu, \nu \big)
    = \mathbb{P}\big(p'\,|\,(p', s'), a \big)  \, 
    \mathbb{P}\big(s' \,|\, (p, s), \mu, \nu \big),
\end{align*}
where the first term on the right-hand side corresponds to the deterministic position transition when the agent is active. 
The second term corresponding to the status transition is given by
\begin{align*}
\mathbb{P} \big( 0 \mid (p,1), \mu, \nu \big) &= \operatorname{clip}_{[0, 1]} \big( \alpha_x (\nu(p) - \mu(p)) \big), \\
\mathbb{P} \big( 1 \mid (p,1), \mu, \nu \big) &= 1 -  \mathbb{P} \big( 0 \mid (p,1), \mu, \nu \big),
\end{align*}
and
\begin{align*}
    \mathbb{P}\big(1 \,|\, (p, 0), \mu, \nu \big) &= \operatorname{clip}_{[0, 1]}\big(\beta_x (\mu(p) - \nu(p)) \big), \\
    \mathbb{P}\big(0 \,|\, (p, 0), \mu, \nu \big) &= 1-\mathbb{P}\big(1 \,|\, (p, 0), \mu, \nu \big),
\end{align*}
where  $\nu(p)-\mu(p)$ is the Red team's numerical advantage over the Blue team at $p$
Similarly, the Blue team's numerical advantage over Red is given by $\mu(p)-\nu(p)$.
$\alpha_x$ and $\beta_x$ are tuning parameters to control the Red team's deactivation power and Blue team's reactivation power respectively. 
The Red team, being the defending team, is given a slight advantage in terms of higher deactivation power. 
This enables the possibility of capturing Blue team agents. 
However, to avoid degeneracy, the Red team agents are not allowed to enter the target.
For our experiments, we assume $\alpha_x = 15$, $\alpha_y = 5$, and $\beta_x = \beta_y = 0.$

\subsection{Reward Structure}

The team rewards only depend on the mean-fields of the two teams.
For the battlefield scenario, the Blue team agents receive a positive reward corresponding to the fraction of agents that reach the target alive.
%
This is a one-time reward that depends on the change in the fraction of the population of the agents at the target, i.e., if ${\mu}_{t}|_{\text{target}} = {\mu}_{t+1}|_{\text{target}}$, then the team does not receive any positive reward. 
Each agent in the team receives an identical ``team reward.'' 
The reward function is mathematically formulated as
\begin{align*}
 R_{\text{Blue},{t+1}}(\mu, \nu) = \kappa \, \Delta\mu_{t+1}|_{\text{target}},
\end{align*}
where,
\begin{align*}
    \Delta\mu_{t+1}|_{\text{target}} &=     \mu_{t+1}(p^x = \text{Target}, s^x = 1) - \mu_{t}(p^x = \text{Target}, s^x = 1).
\end{align*}

We have chosen
$\kappa = 100$ in our simulations (heavier emphasis on reaching the target). 
The Red team's reward is the negative of the Blue team since we have a zero-sum game.
Each team aims to maximize its own expected reward.

\subsection{Implementation and Hyperparameters}

The state distribution for a grid world of size $n\times n$ is represented as a 
three-dimensional array of size $(2, n, n)$ for each team. 
The first layer depicts the mean-field of the agents over an $n\times n$ grid that are alive and active, while the second layer gives information about the team's deactivated population. 
Each team's distribution is then concatenated together to form the global observation.
This becomes the common information that is the input to the critic network which in our case is of size $(4, n, n)$ as we have two teams. 
Both neural networks consist of two main parts: a convolutional block and a fully connected block. 

For the critic, the first CNN layer is the input layer that takes the 4 channels and outputs 32 channels, with a kernel size of 3x3, stride of 1, and padding of 1. 
Followed by \textsf{ReLU} activation, we have a hidden layer that takes 32 channels and outputs 64 channels, with the same kernel size, stride, and padding. 
Lastly, after another \textsf{ReLU} activation, we have the output layer that takes 64 channels and outputs 64 channels, again with the same kernel size, stride, and padding. 
After another \textsf{ReLU} layer, the output of the CNN is passed through an MLP. 
Namely, a fully connected (dense) layer takes the flattened output of the convolutional block and reduces it to 128 units. 
Between the input and the output layers, we have a single \textsf{tanh} activation function.

On the other hand, the input to the actor-network is split into two CNN blocks: one to process the common information and one to process the local information. 
The local information channel, is an array of size $(1, n, n)$  that locates the position of the agent with value +1 if it is active and -1 if it has been deactivated.
This local information is passed through a single CNN layer that outputs 16 channels with a kernel size of 3x3, stride of 1, and padding of 1 while the common information is passed through two such layers with the output of 32 channels.
Both outputs are then followed by a \textsf{ReLU} activation function and the latent representation of the common information combined with the local agent observation is then passed through an MLP architecture.

A fully connected (dense) layer takes the flattened output of the convolutional block and reduces it to 512 units. 
We have a single hidden layer that reduces the dimension further to 128 and then the output \textsf{logits}. 
The layers are separated by the \textsf{tanh} activation functions. 
Finally, the \textsf{logits} are converted to a probability distribution through a 
\textsf{softmax} layer.
Both the actor and critic networks are initialized using orthogonal initialization \citep{shengyi2022the37implementation}.
The architectures of the shared-team actor and minimally-informed critic networks for this example are shown in Figures \ref{fig:actor-network-battlefield} and \ref{fig:critic-network-battlefield} respectively.

 \begin{figure}[ht]
    \centering 
    \includegraphics[width=0.8\linewidth]{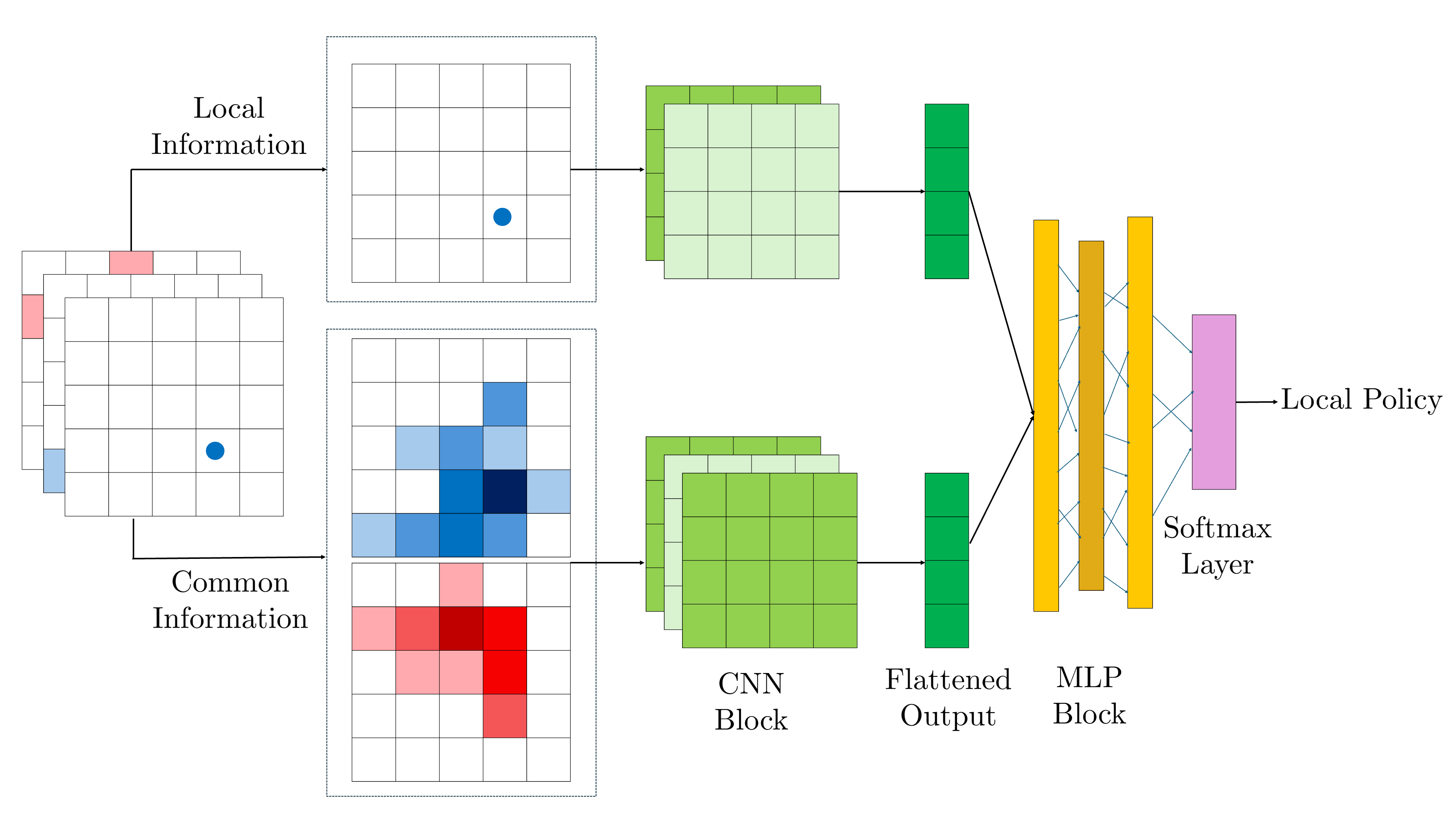}
    \caption{MF-MAPPO: Shared-team actor for battlefield}
    \label{fig:actor-network-battlefield}
\end{figure}

\begin{figure}[ht]
    \centering
    \includegraphics[width=0.8\linewidth]{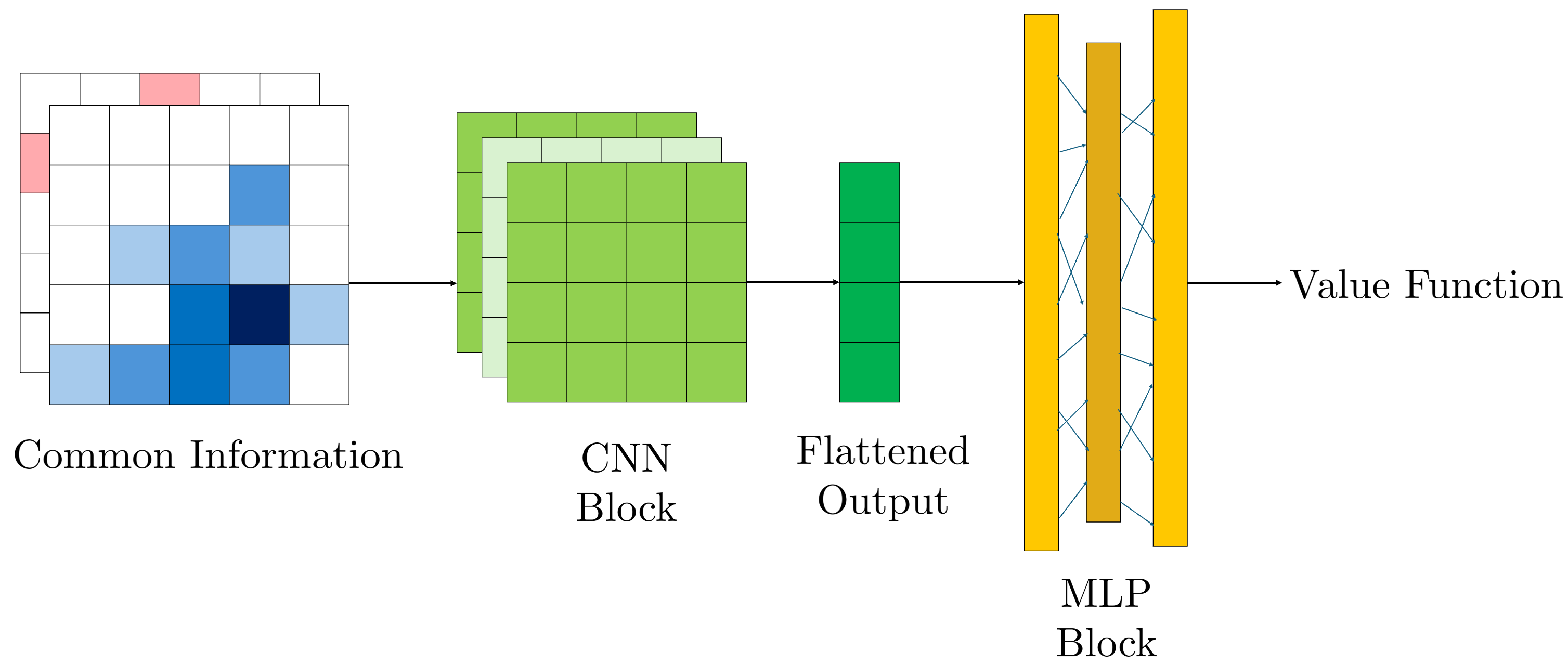}
    \caption{MF-MAPPO: Minimally-informed critic for battlefield}
    \label{fig:critic-network-battlefield}
\end{figure}

 All maps are trained using a single NVIDIA GeForce RTX 3070 GPU. 
 The actor and critic learning rates are set to 0.0005 and 0.001 and both decay geometrically by a factor of 0.999. 
 The networks are updated using the ADAM optimizer~\citep{Kingma2014AdamAM} with two mini-batches for 10 epochs and a PPO clip value of 0.1. 
 The entropy coefficient is initialized to 0.01 and decays with a factor of 0.995. 

 Maps 1 and 2 which are $4\times4$ grid worlds are trained for $5\times 10^6$ and $4.5\times 10^6$ time steps, respectively, and in both cases, the episode length is 20 time steps and the update frequency is every 500 time steps. 
 The total training period is about one day. 
 On the other hand, Map 3 being $8\times8$ in dimension, has an episode length of 64, is trained for $9\times 10^6$ time steps and its network is updated every 1,000 time steps. 
 The total training period is approximately three days.

For evaluating D-PC, we employ MF-MAPPO to learn Lipschitz policies under a fully observable mean-field regime, and use these policies to evaluate the performance of our dynamic estimation algorithm, D-PC.
We consider a time-invariant, edgeless visibility graph, wherein agents have access only to the mean-field distribution of their opponent corresponding to their current cell.
For the communication graph, we assume a subgrid based topology that we detail in Section~\ref{sec:d-pc-single-team}.
The communication graph is naturally time-varying since the number of states occupied by the agents need not be fixed.
In accordance with Assumption~\ref{assumpt:graph-connectedness} and the theoretical requirements of both D-PC and benchmark algorithms, we explicitly impose the constraint that the communication graph is connected at all time steps.

At each time step $t$, the initial Red team's estimate for the D-PC algorithm by the Blue team (i.e., at $\tau = 0$) for every state \(x\) is taken as the projection of the estimate from the previous time step \(t-1\) onto the constraint set \(\R(x)\), that is,
\[
\hat\nu^{N_2, \tau=0}_{x, t} = \Omega_{\R(x)}\left[\hat\nu^{N_2, R_{\mathrm{com}}}_{x, t-1}\right],  \quad x \in \mathcal{X}^o_t.
\]
At $t=0$ and $\tau=0$, we assume $\hat\nu^{N_2, \tau=0}_{x, t=0} \sim \textrm{unif}\left(\R(x)\right) $ for all $ x \in \mathcal{X}^o_t.$

\section{Additional Results}\label{appendix-additional-results}
In this section, we present additional simulation results from various environments present on \texttt{MFEnv}.
In particular, we look at the standard version of Rock-Paper-Scissors and more experiments on the zero-sum battlefield game - both trained using MF-MAPPO.
We also present a single team grid world navigation task wherein agents learn a Lipschitz constrained MF-MAPPO policy and use D-PC to estimate their empirical distributions.
This grid world navigation task in particular highlights the applicability of the presented algorithms to general mean-field team settings, not restricted to competitive zero-sum team games.

\subsection{Rock-Paper-Scissors (RPS)}\label{subsec:rps}
We first extend the two-player Rock-Paper-Scissors (RPS) game to a game played between two populations as described in Appendix~\ref{AppendixB}. 
The Nash equilibrium for this population-based RPS game is the uniform population distribution $[1/3, 1/3, 1/3]$ over the 3 states~\citep{raghavan1994zero}. 

We compare MF-MAPPO with DDPG-MFTG~\citep{shao2024reinforcementlearningfinitespace} based on the training time, average test rewards and attainment of the computed Nash distributions for $N_1=N_2=1,000$ agents. 
We exclude MADDPG~\citep{lowe2017multi} from our comparison, as it scales poorly to hundreds or thousands of agents {due to its reliance on all agents’ local and global observations and actions as inputs to its critic networks.}
We include the training curve for cRPS in Figure~\ref{fig:crps-training} for reference.

From the learning curves in Figures~\ref{fig:rps-training} and~\ref{fig:crps-training} one can see that the DDPG-MFTG algorithm failed to converge to the analytical game value of zero, while MF-MAPPO almost immediately attained the Nash game value.
This corroborates with the results presented for cRPS (Table~\ref{table:crps-eval}) in the main text. 
However, as shown in Table~\ref{table:rps}, MF-MAPPO does take slightly longer to train since, unlike DDPG-MFTG, {since MF-MAPPO} avoids mini-batch training, following \cite{yu2022surprising}.

We tested the learned policy with a fixed initial distribution $\mu_{t=0} = [1, 0, 0]\t$ and $\nu_{t=0} = [0, 1, 0]\t$, and the resulting trajectories are  visualized in Figure~\ref{fig:rps-simplex-combined}.
All simulations were run for 150 instances.
The trajectories of the Blue and Red team ED are depicted in cyan and pink, respectively, alongside the mean trajectory. 
The randomness in these trajectories arises from the finite-population approximation under a stochastic optimal policy, resulting in stochastic EDs. 
As shown in Figures~\ref{fig:rps-simplex-combined},  DDPG-MFTG diverges from the equilibrium whereas MF-MAPPO converges immediately.

\begin{figure}[b]
    \centering
    \includegraphics[width=0.5\linewidth]{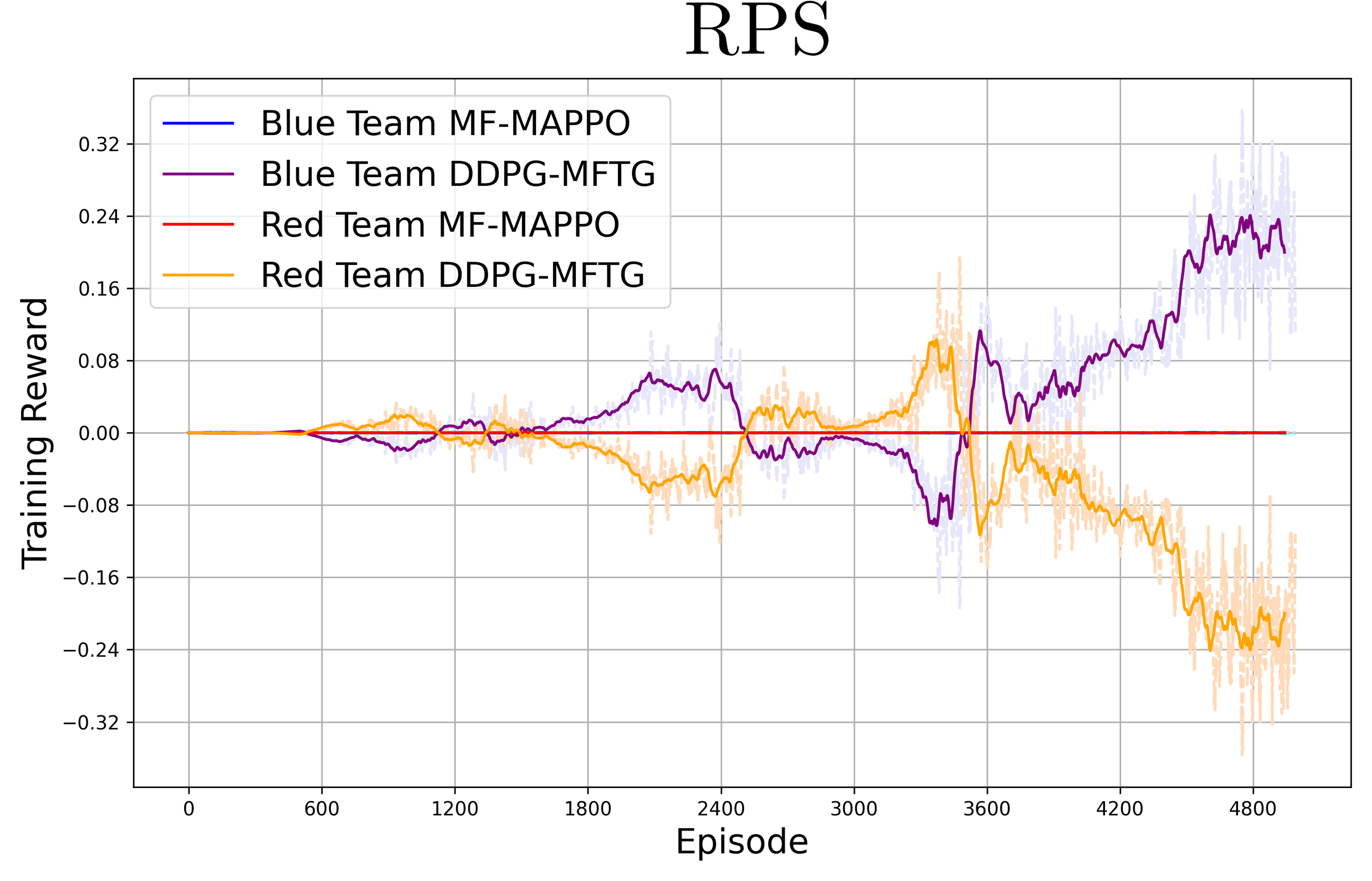}
    \caption{Training curve for RPS.}
    \label{fig:rps-training}
\end{figure}

\begin{figure}[b]
    \centering
    \includegraphics[width=\linewidth]{rps-figures/training-curve-crps.png}
    \caption{Training curve for cRPS.}
    \label{fig:crps-training}
\end{figure}

\begin{table}[b!]
\centering
\caption{Performance comparison for RPS}
\begin{tabular}{||c | c c c ||} 
 \hline
 Approach  & Training Time & Average Reward & NE Attained? \\ 
 \hline\hline
 MF-MAPPO  & 5min 17s & 0.0 & \cmark \\ 
 \hline
 DDPG-MFTG  & 1min 34s & 0.334 & \xmark \\ 
 \hline
\end{tabular}
\label{table:rps}
\end{table}


\begin{figure}[t!]
    \centering
    \includegraphics[width=0.6\textwidth]{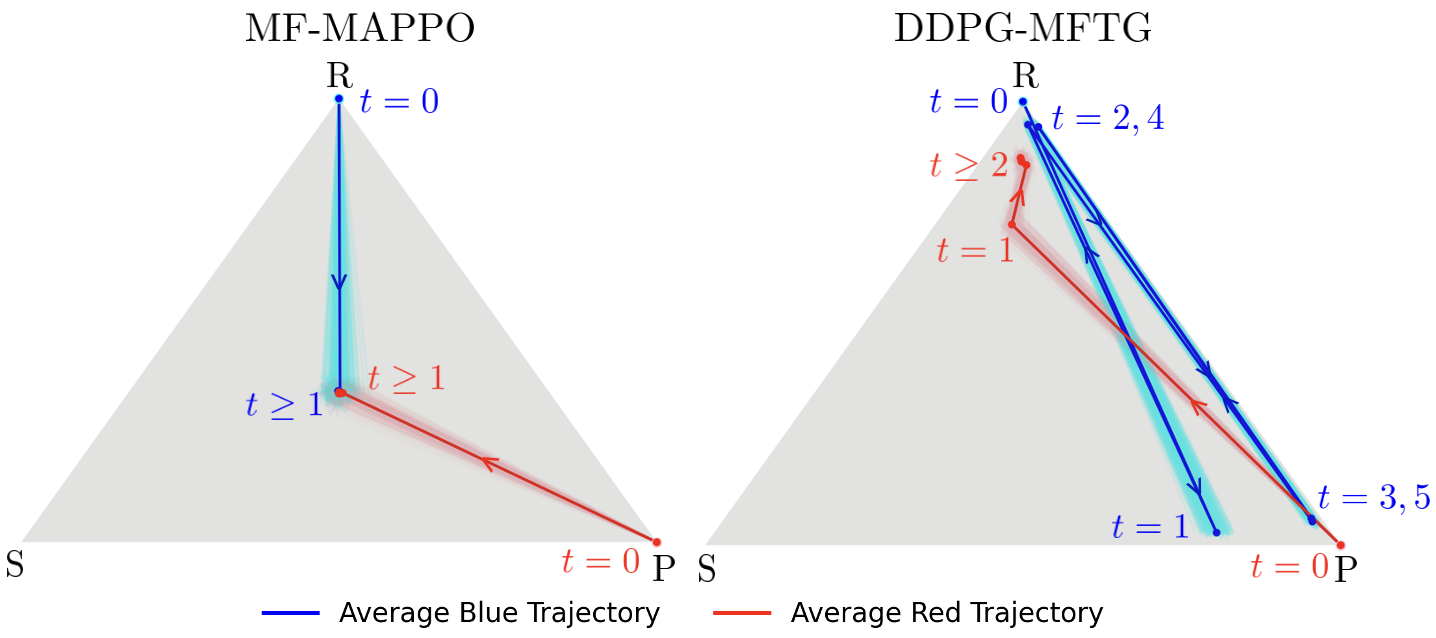}
    \caption{ED trajectories induced by learned team policies on the state distribution simplex. Mean trajectories are averaged based on the 150 runs from fixed initialization $\mu_{t=0} = [1, 0, 0]\t$ and $\nu_{t=0} = [0, 1, 0]\t$; $N_1=N_2=1,000$. }
    \label{fig:rps-simplex-combined}
    \vspace{-0.2in}
\end{figure}

\subsection{Battlefield}

\subsubsection{Validation Cases for MF-MAPPO}
The following subsection qualitatively discusses the battlefield game for different map layouts.  
For these results, both teams are deploying policies trained using MF-MAPPO.

\textbf{Map A.}
The first map is a simple $4\times 4$ grid world with a single target that we use to validate our algorithm. 
The target is partially blocked by an obstacle, see Figure~\ref{fig:map1IC3}.
For the initial condition in Figure~\ref{fig:map1IC3}, the Blue team is initially split into two equal groups. 
The Blue team decides to merge the two sub-groups of agents into a single group.
With this formation, the Red team has zero numerical advantage over the Blue team when they encounter in (g), resulting in all Blue agents safely arriving at the target. 
In comparison, if the two Blue subgroups do not merge but move toward the target one at a time, it will lead to 50\% of the Blue team population being deactivated (first subgroup), followed by the remaining 50\% (second subgroup).
This demonstrates how the observation of mean-field distributions guides rational decision-making.

\begin{figure}[ht]
    \centering
    \includegraphics[width=0.4\linewidth]{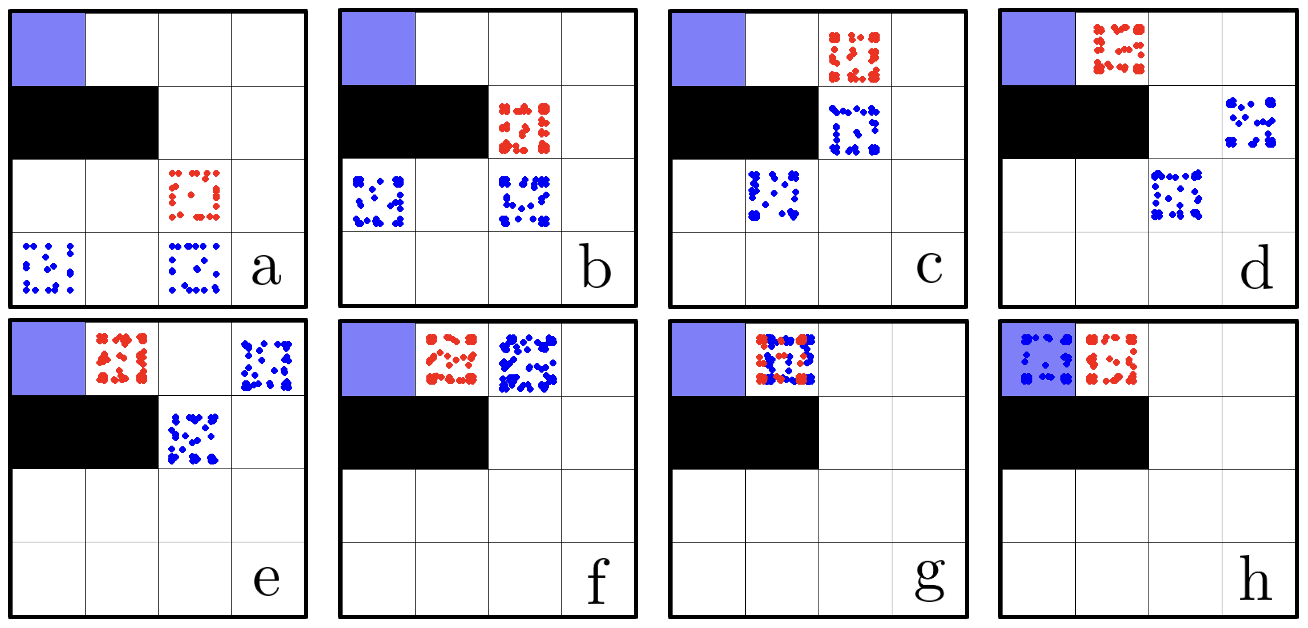}
    \caption{Red is concentrated; Blue is evenly split.}
    \label{fig:map1IC3}
\end{figure}

\textbf{Map B.}
This map is identical to the one presented in Section \ref{sec:numerical-exp}. 
In the first scenario (Figure \ref{fig:map2IC2}), 70\% of the Blue agents start at cell [2, 2], and 30\% at cell [1, 1], while the Red team is evenly split between cells [0, 1] and [3, 3]. 
Half of the Red team at [0, 1] successfully blocks the 30\% Blue agent group from entering the left corridor due to its numerical advantage, which forces the Blue agents to opt for the right corridor. 
At the same time, the larger Blue group with 70\% of the population utilized their numerical advantage over the half Red team at the top right and deactivated all the Red agents as shown in (c) and reached the target at time step (d). 
This allowed the smaller 30\% group to follow through the same corridor without losing agents. 
\begin{figure}[ht]
    \centering
    \includegraphics[width=0.6\linewidth]{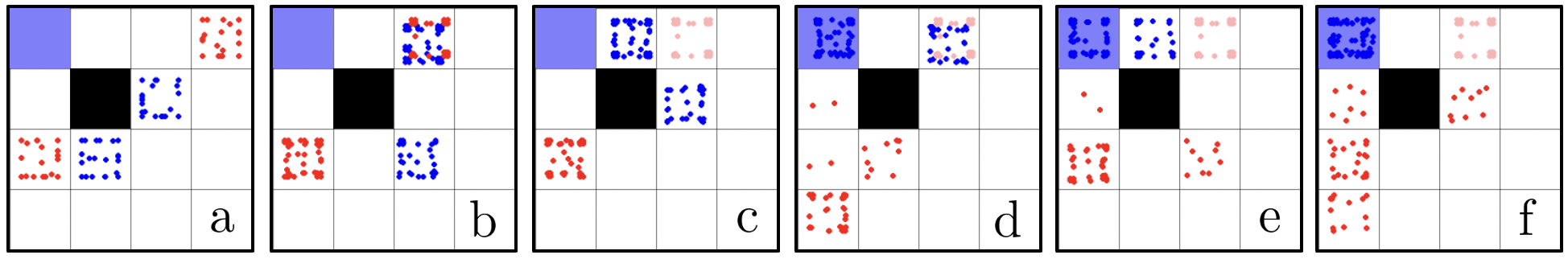}
    \caption{Red is evenly split; 30\% Blue are at $[1, 1]$ and 70\% are at $[2, 2]$.}
    \label{fig:map2IC2}
\end{figure}
In the second scenario (Figure~\ref{fig:map2IC3}), with the Red team evenly distributed at the corners, 30\% of the Blue agents start at cell [2, 2] and 70\% at cell [3, 1]. 
The Red team's numerical advantage at [3, 3] forces the Blue agents to move around and regroup (Figures \ref{fig:map2IC3}(b)-\ref{fig:map2IC3}(f)). 
Once united, the Blue team’s numerical advantage forces the Red subgroup at [0,1] to disperse to avoid deactivation, allowing Blue to reach the target. 

\begin{figure}[ht]
    \centering
    \includegraphics[width=0.4\linewidth]{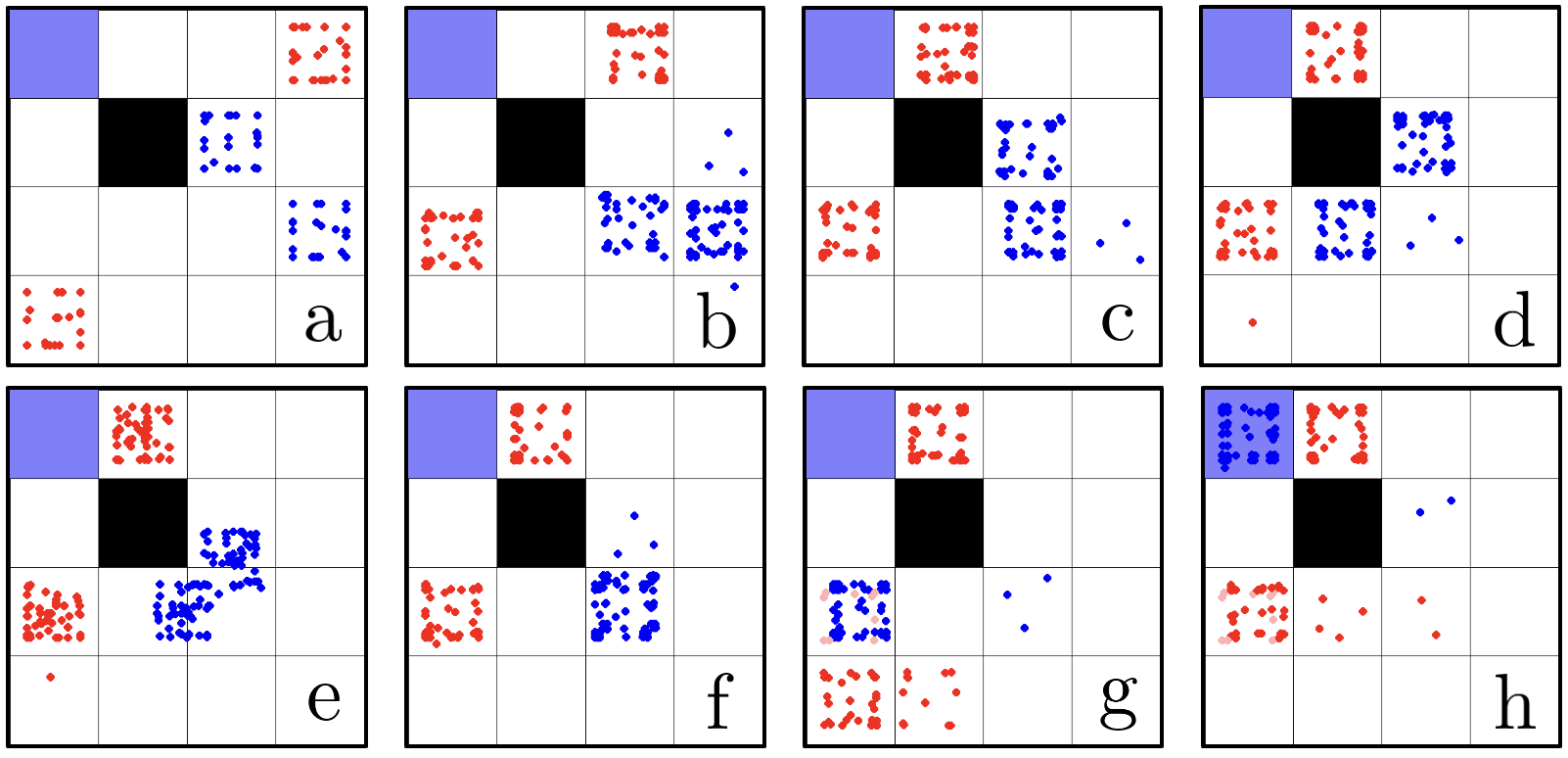}
    \caption{Red is evenly split; 70\% Blue are at $[3, 1]$ and 30\% are at $[2, 2]$.}
    \label{fig:map2IC3}
\end{figure}

\textbf{Map C.}
In Figure~\ref{fig:extend-to-many-battle}, we deploy the MF-MAPPO policies trained with $N_1=N_2=100$ (from the main text) to larger populations of $\barN_1 = \barN_2 = 1500$ and $\barN_1 = \barN_2 = 1000$ using the same initial condition as Figure~\ref{fig:map3}I and II respectively. The teams continue to achieve their objectives as established in Theorem~\ref{thm:extend-to-many}, exhibiting behaviors similar to $N_1=N_2=100$ setting.
Section~\ref{sec:numerical-exp} presented scenarios featuring structured initial configurations for both teams over an $8\times 8$ grid.
It is important to emphasize, however, that the algorithm is trained on a diverse set of initial conditions for a given map, ranging from agents concentrated within a few selected cells to agents distributed randomly across the grid world.
The following examples demonstrate that the teams are able to accomplish their objectives even in scenarios where agents are dispersed across the environment rather than clustered into just 1-2 subgroups using the same policy from Section~\ref{sec:numerical-exp}.
In Figure~\ref{fig:randomblue}, the Blue team is initialized randomly, and local subgroups of agents emerge and coordinate to reach the target.
This behavior is particularly pronounced near the upper target, where a greater numerical advantage facilitates successful coalition formation (Figures \ref{fig:randomblue}(c)–(e)).
\begin{figure}[ht]
    \centering
    \includegraphics[width=\linewidth]{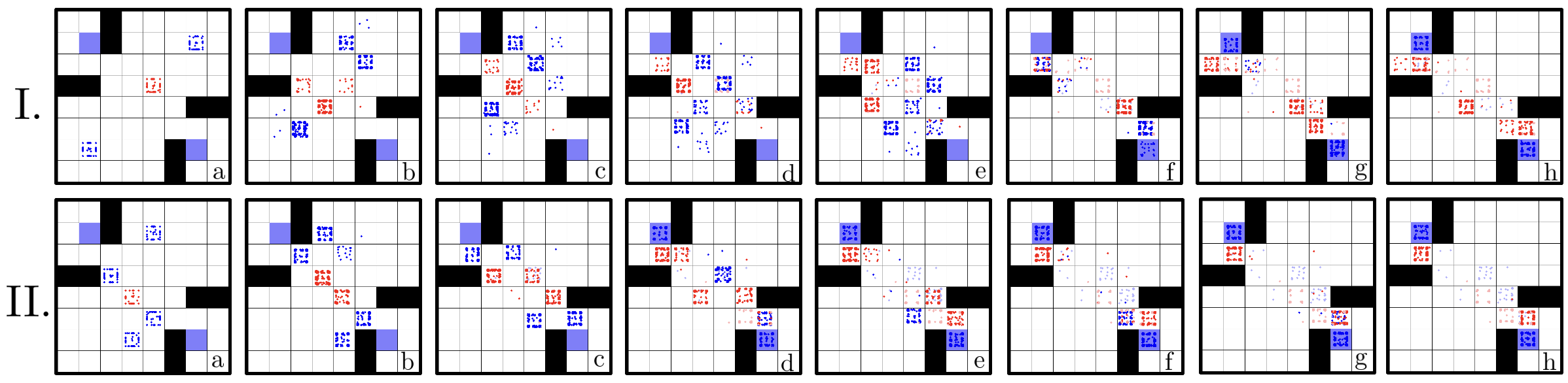}
    \caption{I. 1500 agents in each team where Red is concentrated; 30\%  Blue are at the bottom, rest are at the top II. 1000 agents in each team with Blue evenly split; Red is concentrated.}
    \label{fig:extend-to-many-battle}
\end{figure}

\begin{figure}[ht]
    \centering
    \includegraphics[width=0.75\linewidth]{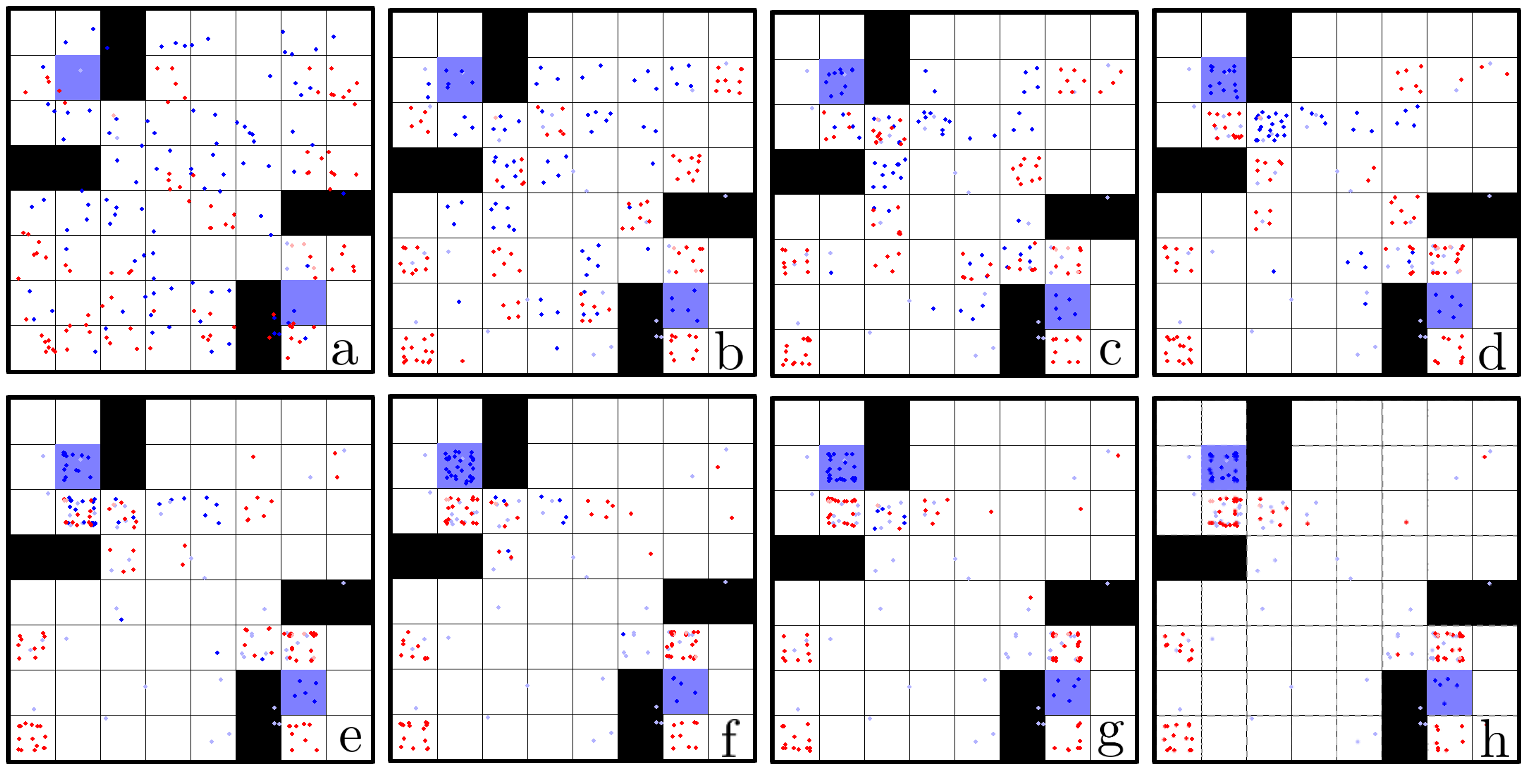}
    \caption{Blue team is randomly spread around the map.}
    \label{fig:randomblue}
\end{figure}
Turning to the randomly distributed Red team agents in 
Figure~\ref{fig:randomred}, it is observed that they concentrate near the two target entrances and successfully neutralize most Blue team subgroups.

\begin{figure}[ht]
    \centering
    \includegraphics[width=0.75\linewidth]{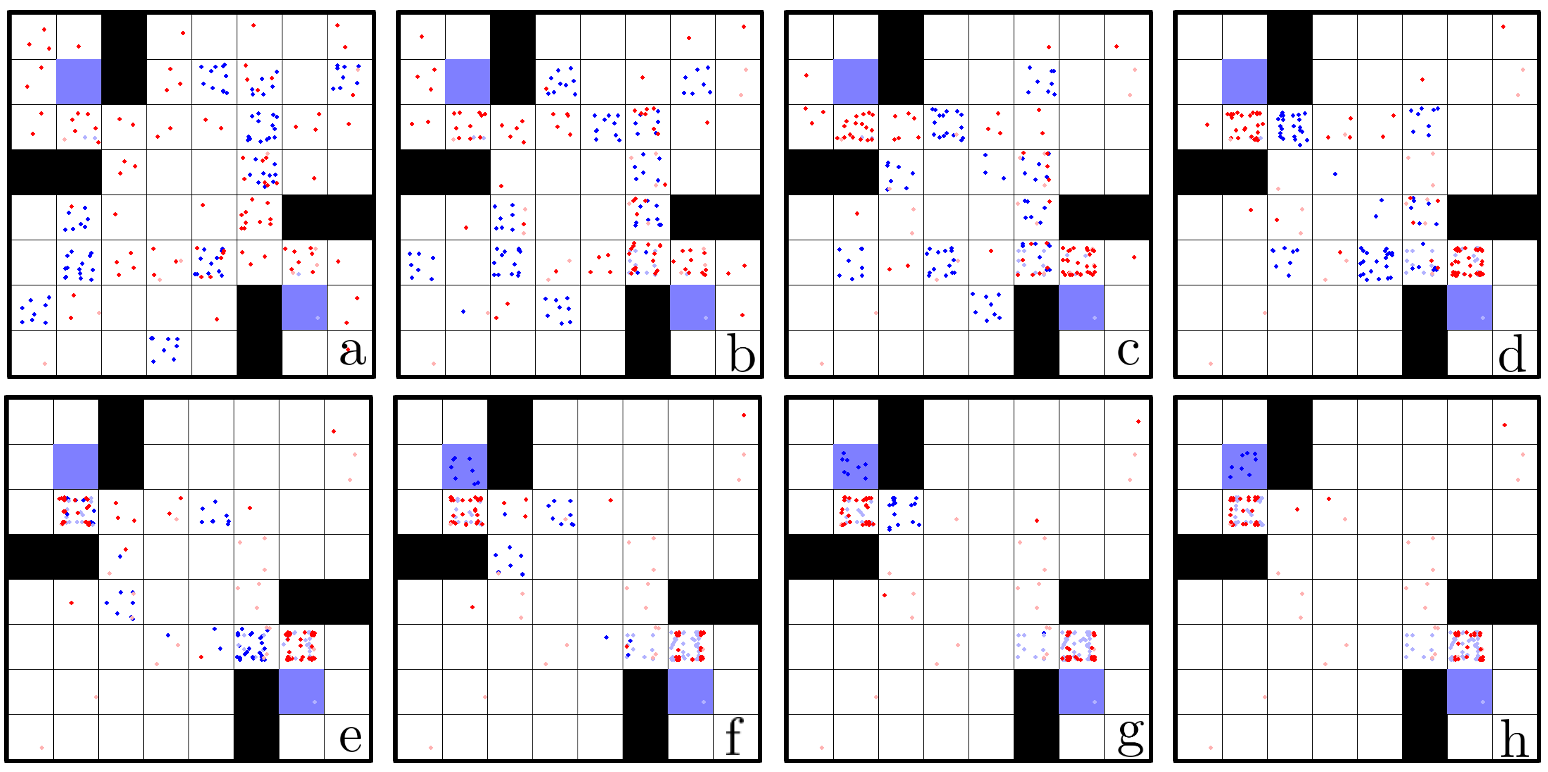}
    \caption{Red team is randomly spread around the map.}
    \label{fig:randomred}
\end{figure}

\subsubsection{Comparison of MF-MAPPO with Baseline}

While Table~\ref{table-battlefield-eval} reports the performance of the learned policies when evaluated against each other, Table~\ref{table:battlefield-new} provides a complementary analysis of the sample efficiency of the baselines in the Battlefield environment.
We define sample efficiency as the reward obtained per interaction with the environment:
\begin{align*}
    \textrm{Sample Efficiency} = \frac{\textrm{Total rewards}}{\textrm{Total environment steps}}.
\end{align*}
In addition, we report the average episode length, which serves as an indicator of how quickly episodes terminate.
Shorter episodes suggest that the Blue agents increasingly learn to reach the target, while the Red agents simultaneously learn to defend the target and deactivate Blue agents—either outcome leading to episode termination.

\begin{table}[t!]
\centering
\caption{Performance metrics for Battlefield}
\begin{tabular}{||c | c c c||} 
 \hline
 Algorithm  & Critic  Input & Sample Efficiency & Avg. Episode Length \\ 
 \hline\hline

 MF-IPPO  & $\{(x^{N_1}_{i,t}, \mu_t)\}_{i=1}^{N_1}$ & 6.3392 & 9.258  \\ 
 \hline
 MAPPO-PS  & $\{(x^{N_1}_{i,t}, \mu_t, \nu_t)\}_{i=1}^{N_1}$ & 5.4581 & 12.084 \\ 
 \hline
 MAPPO-CC  & $(\x^{N_1}_t, \y^{N_2}_t, \mu_t, \nu_t)$ & {5.7878} & {9.8808} \\ 
 \hline
 DDPG-MFTG  & $(\phi_t, \mu_t, \nu_t)$ & 0.0548 & 20.0 \\ 
 \hline
  MF-MAPPO  & $(\mu_t, \nu_t)$ & \textbf{14.6238} & \textbf{4.8448}  \\ 
 \hline
 
\end{tabular}
\label{table:battlefield-new}
\end{table}

%


\textbf{Initial Condition 1.} Figure~\ref{fig:battlefield-IC3-IC2}I. compares the two algorithms against the baseline defending team.
MF-MAPPO Blue agents exhibit coordinated maneuvering, forming coalitions to reach the target (a), whereas DDPG-MFTG Blue agents (b) show limited coordination, with only nearby agents reaching the target and distant agents failing to engage.
In (c) we pit the Blue team against the MF-MAPPO defenders instead of the DDPG-MFTG defenders.
The results align with those in Figure \ref{fig:battle-comparison}, where MF-MAPPO Blue agents effectively leverage their numerical advantage, enabling a larger number of agents to reach the target.
(d) represents MF-MAPPO vs. MF-MAPPO to illustrate the goal strategies and expected behavior.

\begin{figure}[ht]
    \centering
    \includegraphics[width=\linewidth]{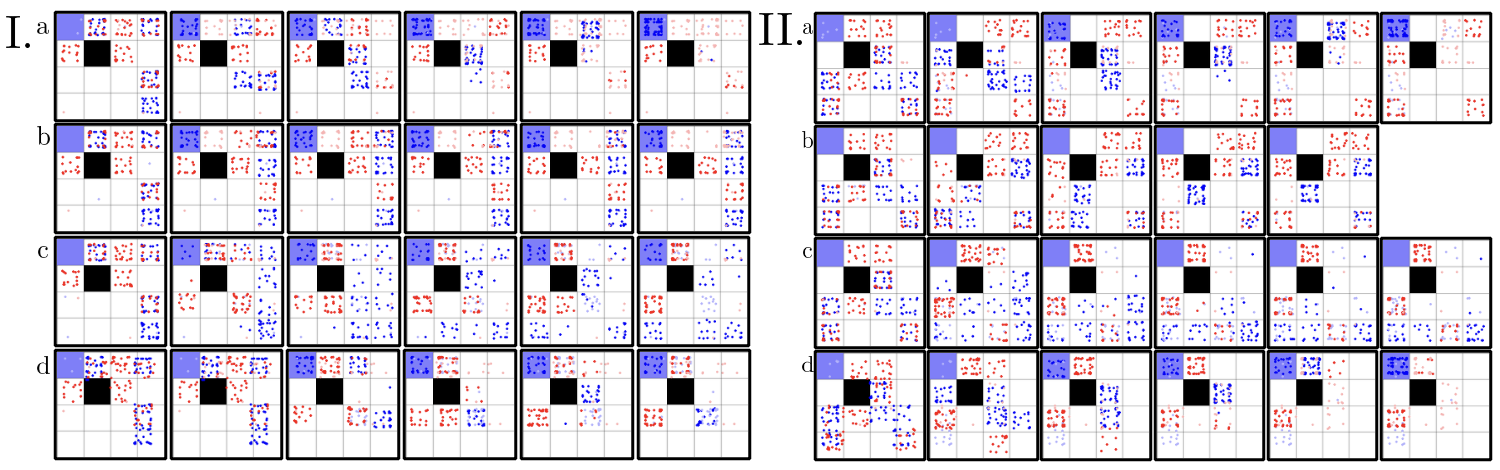}
    \caption{a. MF-MAPPO Blue vs. DDPG-MFTG Red; b. DDPG-MFTG Blue vs. DDPG-MFTG Red; c. MF-MAPPO Red vs. DDPG-MFTG Blue; d. MF-MAPPO Red vs. MF-MAPPO Blue.}
   \label{fig:battlefield-IC3-IC2}
   \vspace{-0.1in}
\end{figure}

\textbf{Initial Condition 2.}
We present a different initial condition for the $4\times 4$ Battlefield game (Figure \ref{fig:battlefield-IC3-IC2}II.), where, using similar arguments as in the previous case and the main text, we can establish the superiority of MF-MAPPO agents over the baseline DDPG-MFTG, whether MF-MAPPO serves as the attacker or the defender.

\subsection{Mean-Field Estimation for Grid World Navigation Using D-PC}\label{sec:d-pc-single-team}

We consider a $9\times 9$ grid with a target region at the center that is surrounded by penetrable obstacles (see Figure~\ref{fig:subgrid-comms-struct}).
At a given cell, the population can penetrate an adjacent obstacle-state $x_{\textrm{obstacle}}$ with a probability proportional to its distribution at that cell, according to
\begin{align}\label{eqn:penetrate}
    f(x_{\textrm{obstacle}} |x, u, \mu_t) \propto \exp^{b\left(\mu_t(x)-c\right)},
\end{align}
where $b > 0$ and $c\in(0, 1).$
We run the algorithms for $T=1,000$ time steps for varying values of $R_{\mathrm{com}}$ under a subgrid-based communication topology.
For each value of $R_{\mathrm{com}}$, we evaluate our algorithm as well as the benchmark over multiple seeds of different initial mean-field distributions $\mu^N_{t=0} = \hat\mu^N_{t=0}$ and report the average estimation error over these seeds.

We use an edgeless visibility graph and a subgrid-based communication graph.
In the subgrid communication topology, an agent located at state \(x\) can exchange information with its neighboring states arranged in a \(3\times 3\) grid centered around \(x\), as depicted in Figure~\ref{fig:subgrid-comms-struct}.
Naturally, agents at the boundary have reduced connectivity: edge cells communicate within a \(2\times 3\) neighborhood, and corner cells within a \(2\times 2\) configuration, also shown in Figure~\ref{fig:subgrid-comms-struct}.
%
%
We evaluate the estimation performance for communication rounds \( R_{\mathrm{com}} \in \{1, 2, 3, 4, 5, 6, 7, 8\} \) and consider population sizes of $N=100$ and $N=1,000$ (Recall, Theorem~\ref{thm:extend-to-many}).
From Figures~\ref{fig:subgrid-comms-l1vtime-100} and \ref{fig:subgrid-comms-l1vtime-1000} we see that up to approximately \( R_{\mathrm{com}} = 4 \), the proposed algorithm consistently outperforms the benchmark, particularly in the early phases where communication is limited.

The benchmark algorithm~\citep{benjamin2025networkedcommunicationmeanfieldgames} assumes that each state directly acquires the exact distributions of its communication neighbors \( \mathcal{N}^{\mathrm{com}}(x) \).
Consequently, it achieves better accuracy when $R_{\mathrm{com}}$ approaches $\mathrm{diam}(\G^{\mathrm{com}}_t)$ communication rounds.
While our method converges slightly more gradually, it offers notable practical advantages: it avoids relying on direct access to neighbor distributions—thereby preserving privacy—and performs competitively even under tighter communication constraints.
Furthermore, for \( R_{\mathrm{com}} < \mathrm{diam}(\G^{\mathrm{com}}_t) \), the benchmark estimates unobserved states using a uniform distribution assumption—potentially leading to inaccurate representations of the actual mean-field.
This inaccuracy is especially visible for small communication budgets  (\( R_{\mathrm{com}} = 1, 2, 3 \)).
Although increasing rounds mitigates this error, in real-world scenarios where fast decision-making is critical such as in real-time opponent modeling, large-scale communication may be infeasible.
In contrast, our algorithm remains tractable, communication-efficient, and robust across a range of practical deployment settings, offering a compelling trade-off between estimation quality and operational cost.
The reward plots in Figures~\ref{fig:subgrid-comms-rewards-100} and ~\ref{fig:subgrid-comms-rewards-1000}, corresponding to both population sizes, demonstrate that our method exhibits notably lower regret compared to the benchmark when evaluated against the fully observable policy. 
This improvement is particularly evident when  \( R_{\mathrm{com}} < \frac{1}{2}\mathrm{diam}(\G^{\mathrm{com}}_t) \).
For higher number of communication rounds our performance remains competitive with the benchmark.
The plots also show that the estimation error incurred reduces with an increase in population size, predominantly noticeable at the peak around $t=100$.
Furthermore, the increase in population size from $N=100$ to $N=1,000$ results in reduced noise in the total variation errors and smoother error plots, thereby resulting in smaller variance.
Both these results corroborate with the finite population mean-field approximation guarantees presented from Theorem~\ref{thm:performance-guarantees}.
Empirically, we assert that our algorithm is particularly well-suited for scenarios with limited communication budgets, especially when \( R_{\mathrm{com}} < \frac{1}{2}\mathrm{diam}(\G^{\mathrm{com}}_t) \), where it consistently delivers competitive accuracy with substantially reduced overhead.

A direct consequence of Theorem~\ref{thm:estimator-eval-dyn-dpc} is that increasing the number of communication rounds \( R_{\text{com}} \) enables the estimation error \( \epsilon \) to be made arbitrarily small.
In finite populations, however, stochasticity in the MF approximation causes the full-information and estimated trajectories, ${\M^{N_1}, \N^{N_2}}$ and ${\hat\M^{N_1}, \hat\N^{N_2}}$, to vary across runs of GR-MF-MAPPO even under identical initial conditions.
We confirm that D-PC achieves exponential convergence, which becomes most apparent in the infinite-population limit where state evolution is deterministic.
Figure~\ref{fig:infinite-pop-sim} illustrates this behavior by plotting the estimation error at selected time steps against $R_{\textrm{com}}$, consistent with Theorem~\ref{thm:estimator-eval-dyn-dpc}.

\begin{figure}[ht]
    \centering
    \includegraphics[width=0.2\linewidth]{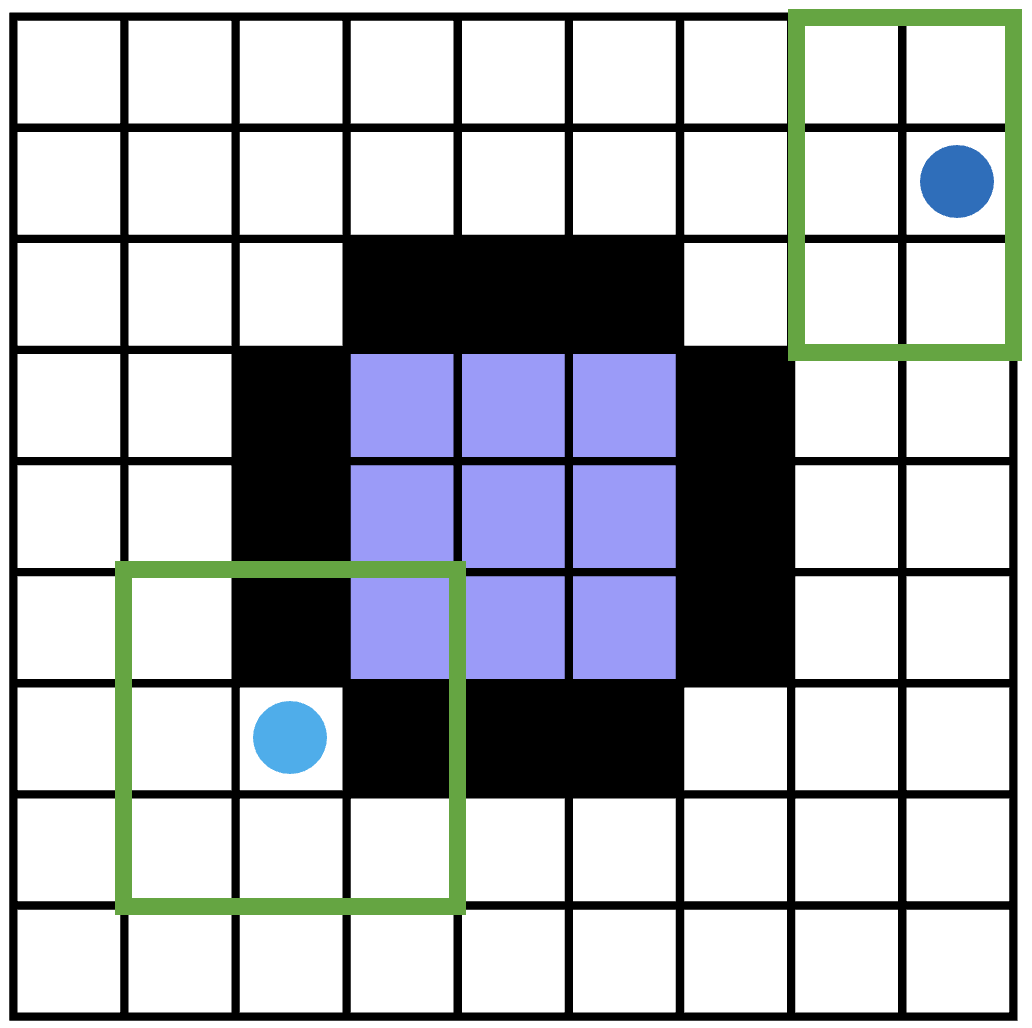}
    \caption{Subgrid Communication Graph. }
    \label{fig:subgrid-comms-struct}
    \vspace{-0.05in}
\end{figure}

\begin{figure}[ht]
    \centering
    \includegraphics[width=0.8\linewidth]{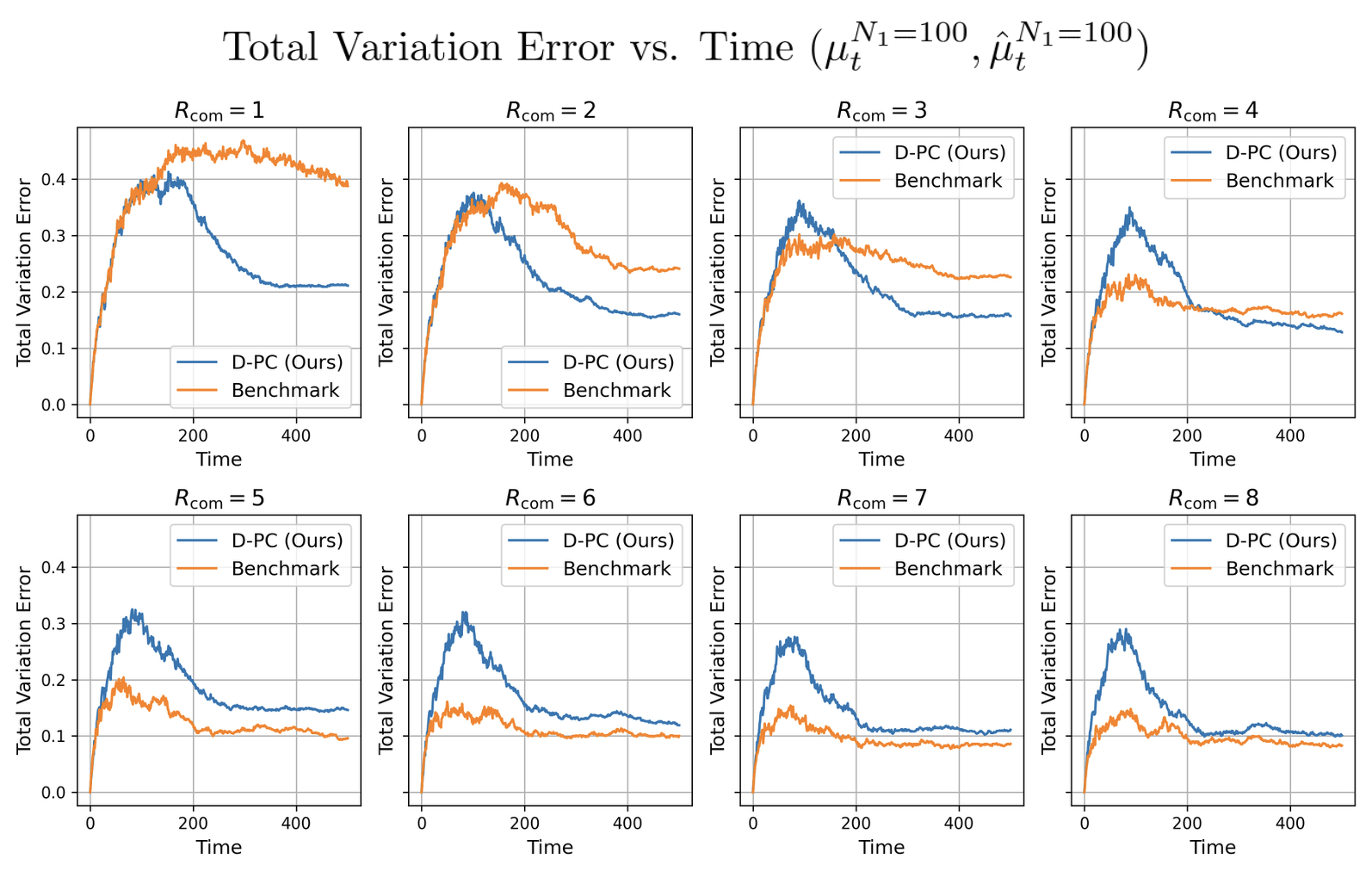}
    \caption{Total variation error at each time step for different values of $R_{\mathrm{com}}$ with $N=100$. }
    \label{fig:subgrid-comms-l1vtime-100}
\end{figure}

\begin{figure}[ht]
    \centering
    \includegraphics[width=0.8\linewidth]{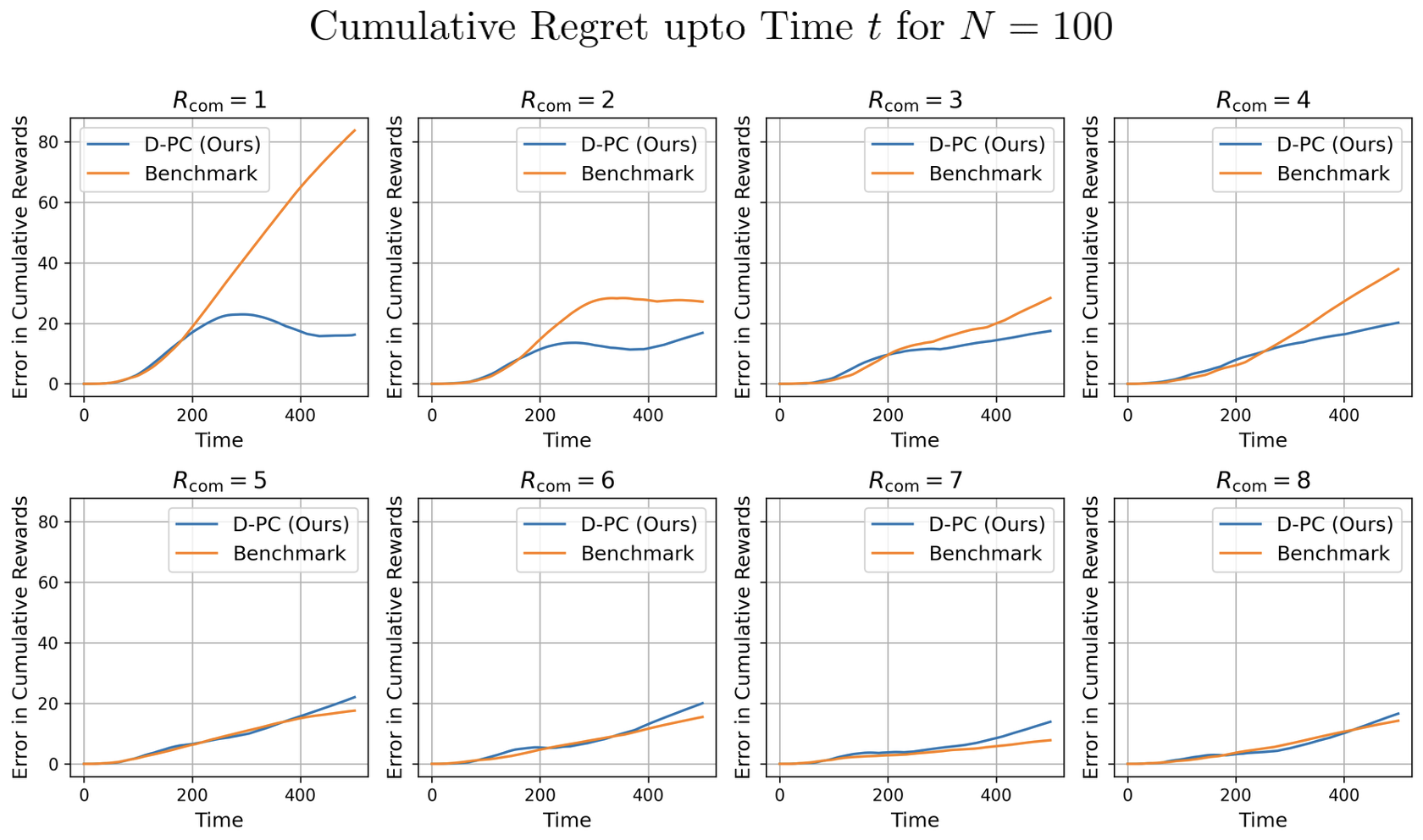}
    \caption{Regret at each time step for different values of $R_{\mathrm{com}}$ with $N=100$. }
    \label{fig:subgrid-comms-rewards-100}
\end{figure}

\begin{figure}[ht]
    \centering
    \includegraphics[width=0.8\linewidth]{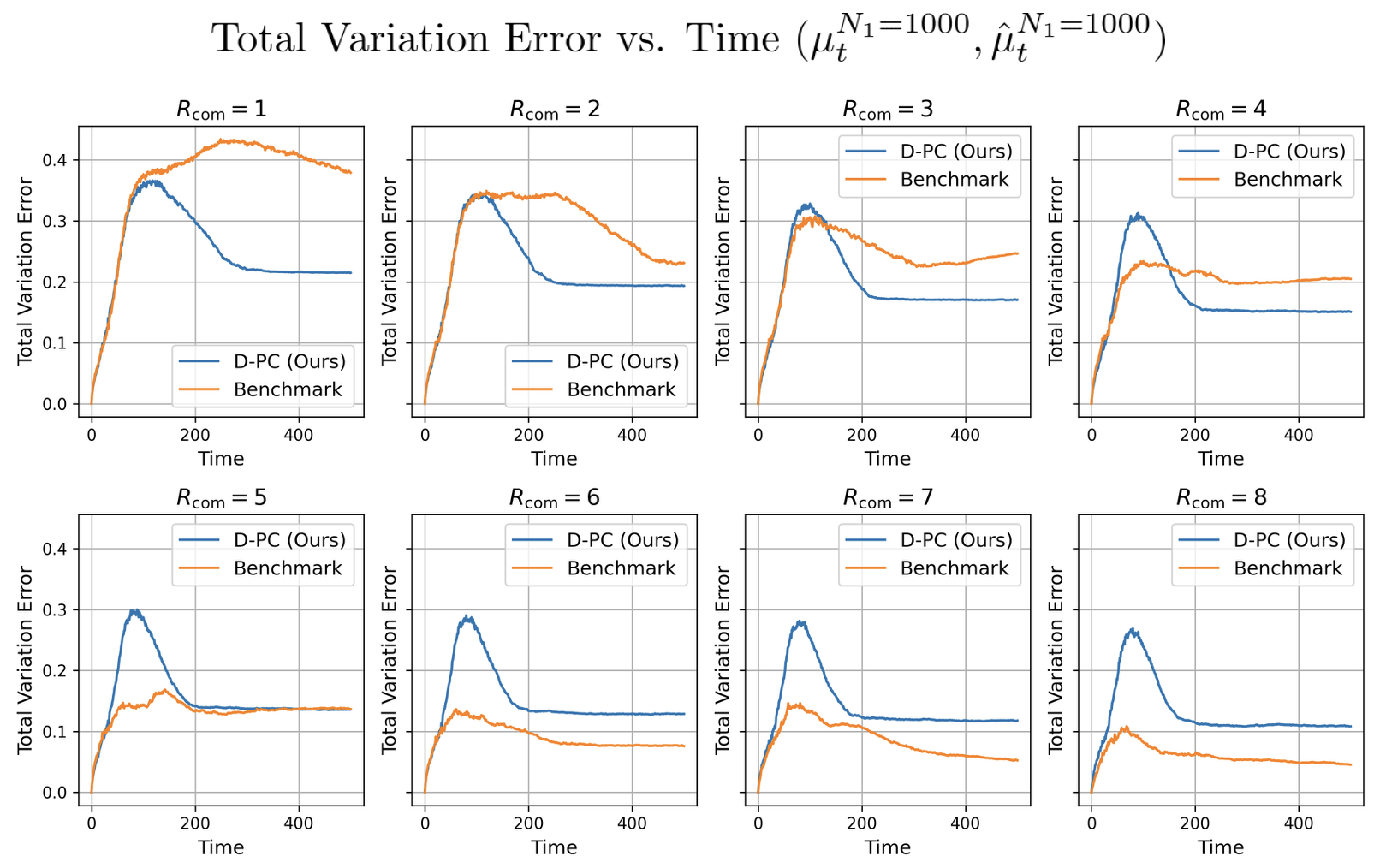}
    \caption{Total variation error at each time step for different values of $R_{\mathrm{com}}$ with $N=1000$. }
    \label{fig:subgrid-comms-l1vtime-1000}
\end{figure}

\begin{figure}[ht]
    \centering
    \includegraphics[width=0.8\linewidth]{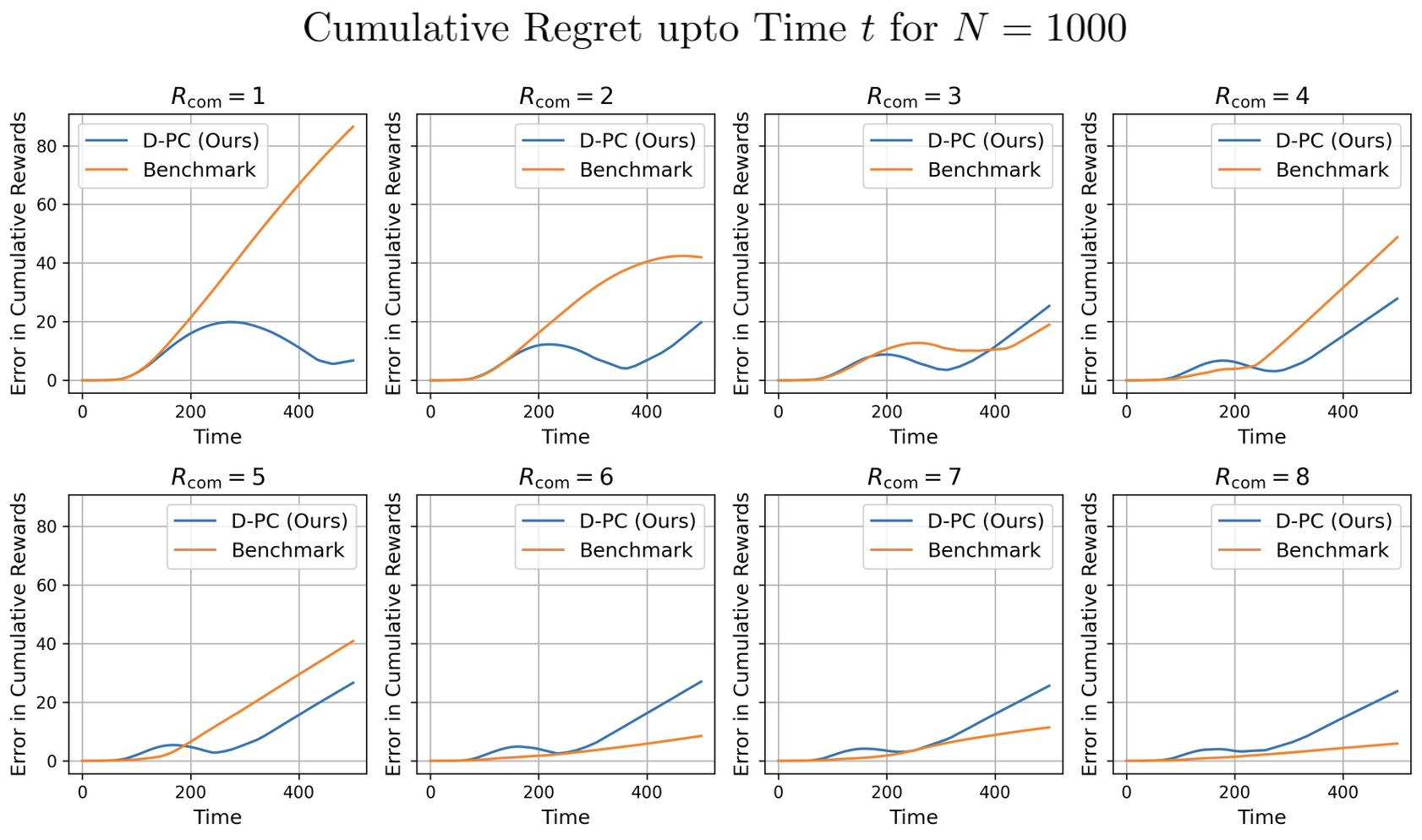}
    \caption{Regret at each time step for different values of $R_{\mathrm{com}}$ with $N=1000$. }
    \label{fig:subgrid-comms-rewards-1000}
\end{figure}

\begin{figure}[ht]
    \centering
    \includegraphics[width=0.8\linewidth]{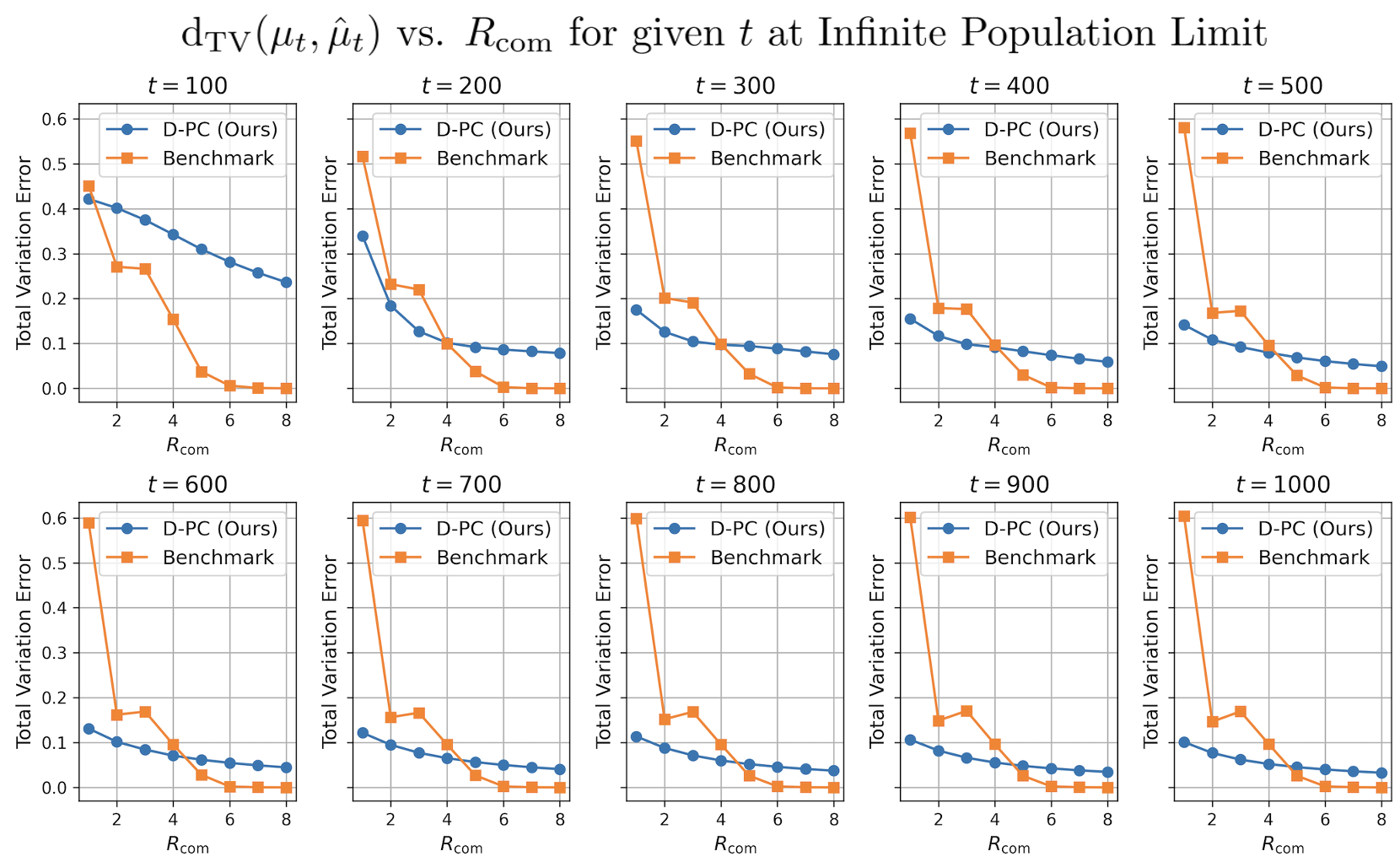}
    \caption{Exponential convergence of D-PC at the infinite population limit, in the absence of stochasticity.}
    \label{fig:infinite-pop-sim}
\end{figure}

\subsection{Non-Lipschitz Policies and Estimation}
While introducing partially observable ZS-MFTGs, we emphasized the importance of enforcing Lipschitz continuity of policies with respect to the mean-field distribution.  
In this section, we examine how varying the Lipschitz gradient penalty coefficient $\lambda$ influences the accuracy of the mean-field estimation procedure.  
Specifically, we introduce the following term in the objective~\eqref{eq:actor network}:
\begin{align}  \label{eq:new-actor-network}
    \lambda \, \mathbb{E} \left[\nabla_\eta \|\log\phi(u \mid x, \mu, \nu)\|^2\right],
\end{align}   
where $\lambda$ denotes the Lipschitz gradient penalty coefficient.  
Larger values of $\lambda$ enforce stronger smoothness constraints on the policy, promoting more regular behavior across variations in the mean-field input. In contrast, setting $\lambda = 0$ removes this constraint entirely.  
We conduct two sets of experiments in the finite-population grid world setting.

\textbf{Battlefield.} We utilize the same $4\times 4$ battlefield introduced in Section~\ref{sec:numerical-exp} and consider the case wherein the Red team is estimating the Blue team's distribution and the Blue team has full information.
We consider three distinct policies, each trained under a different value of $\lambda$, and allow the teams to estimate mean-field distributions that evolve accordingly.  
From Table~\ref{table:lcp-grad} we observe that for most cases, $\Delta{J}(\phi^*_t, \psi^*_t)$ (represented as \% for normalized values) is higher for the policy trained without a Lipschitz constraint and the error reduces as $\lambda$ is increased, especially for larger values of $R_{\textrm{com}}$. 
This is because even small inaccuracies in the estimated distributions can lead to large discrepancies in the resulting mean-field, causing substantial deviations from the ideal fully observable behavior.  
Thus, enforcing Lipschitz continuity in policies offers tangible practical benefits by promoting smoother and more robust behaviors.  
Moreover, such policies tend to yield behaviors that are better aligned with the demands of real-world deployment, making them a compelling design choice when evaluating estimation performance.

\textbf{Navigation.} We consider the navigation problem defined above and now focus on the cumulative total variation error, i.e., $\sum_t \dtv{\mu_t, \hat\mu_t}$.
Figure~\ref{fig:lcp-finite-nav} show the performance of our proposed D-PC algorithm and the benchmark method from~\cite{benjamin2025networkedcommunicationmeanfieldgames}.  
In both estimation settings, once again, gradient-regularization yields trajectories that closely resemble the fully-observable scenario (lower errors).

\begin{table}[ht]
\centering
\caption{Effect of gradient-regularization on $\Delta{J}(\phi^*_t, \psi^*_t)$}
\begin{tabular}{||c | c c c c||} 
 \hline
 $\lambda$ & $R_{\textrm{com}}=5$ & $R_{\textrm{com}}=10$ & $R_{\textrm{com}}=15$ & $R_{\textrm{com}}=20$\\ 
 \hline\hline
 0.0  & 0.39 & \textbf{0.20} & 1.15 & 2.10 \\ 
 \hline
 0.005 & \textbf{0.09} & 0.65 & 0.47 & 0.27 \\ 
 \hline
 0.01  & 0.49 & 0.26 & \textbf{0.38} & \textbf{0.03}\\ 
 \hline
\end{tabular}
\label{table:lcp-grad}
\end{table}

\begin{figure}[ht]
    \centering
    \includegraphics[width=0.8\linewidth]{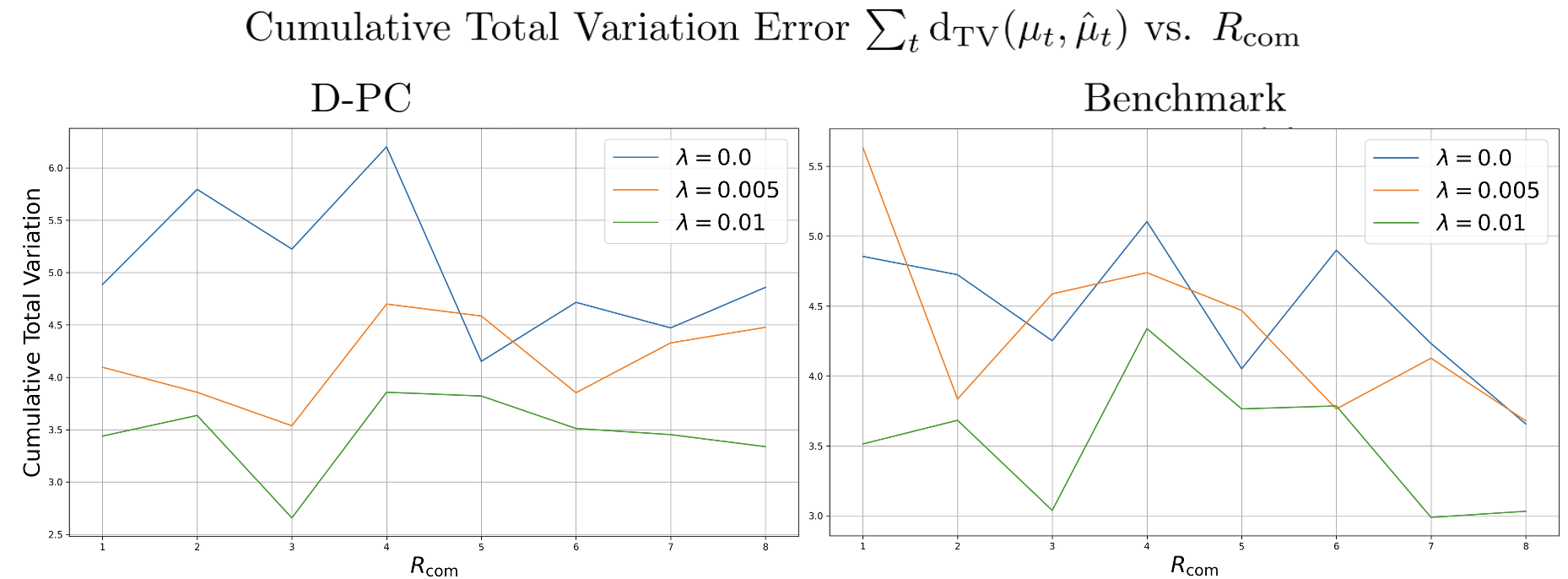}
    \caption{Cumulative total variation error by various estimation algorithms in a $9\times 9$ grid world for different values of $\lambda$ under a subgrid communication graph with $N=1,000$. }
    \label{fig:lcp-finite-nav}
\end{figure}
\end{document}